\documentclass[a4paper,11pt]{article}
\usepackage{fullpage}
\usepackage{microtype} 
\usepackage{pdfpages}
\usepackage[T1]{fontenc}
\usepackage{mathtools,amsmath,xspace}
\usepackage{amsthm}
\usepackage{amssymb}
\usepackage{tabularx}
\usepackage{multirow}
\usepackage{hyperref}
\newtheorem{theorem}{Theorem} 
\newtheorem{lemma}[theorem]{Lemma}

\newcommand{\be}{\begin{eqnarray*}}
\newcommand{\ee}{\end{eqnarray*}}
\newcommand{\tri}[1]{\triangle{#1}}
\newcommand{\triopt}{T_\mathrm{opt}}
\newcommand{\area}[1]{\textsf{area}(#1)}

\newcommand{\sd}{\ensuremath{\textsf{S}}}
\newcommand{\sdt}[1]{\ensuremath{\textsf{S}(#1)\xspace}}
\newcommand{\seg}[1]{C(#1)\xspace}

\newcommand{\ct}[1][\theta]{\textsf{C}_{#1}\xspace}
\newcommand{\ray}[1][\pi]{\eta_{#1}}
\newcommand{\foot}[1][\pi]{\delta_{#1}}

\newcommand{\dirw}{\overrightarrow{u}}

\newcommand{\vis}{\textsf{Vis}}
\newcommand{\visl}[1][\alpha]{\textsf{Vis}_{#1}}

\newcommand{\que}{\ensuremath{\mathcal{Q}\xspace}}
\newcommand{\tree}{\ensuremath{\mathcal{T}\xspace}}
\newcommand{\dw}[2]{\ensuremath{w}(#1,#2)}
\newcommand{\myeps}{\ensuremath{\varepsilon}}

\newcommand{\aff}{\Phi_v}
\newcommand{\affi}[1][i]{\varphi_{#1,v}}
\newcommand{\tc}{T(v)}
\newcommand{\dc}{D(v)}
\newcommand{\rc}{R(v)}

\newif\iffull
\fulltrue

\title{Largest triangles in a polygon\thanks{This research was supported by the Institute of Information \& communications 
Technology Planning \& Evaluation(IITP) grant funded by the Korea government(MSIT)
(No. 2017-0-00905, Software Star Lab (Optimal Data Structure and Algorithmic Applications in Dynamic Geometric Environment))}}
\author{Seungjun Lee\thanks{Department of Computer Science and Engineering, Pohang University of Science and Technology, Pohang, Korea. \texttt{juny2400@postech.ac.kr}}
\and Taekang Eom\thanks{Department of Mathematics, Pohang University of Science and Technology, Pohang, Korea. \texttt{tkeom0114@postech.ac.kr}}
\and Hee-Kap Ahn\thanks{Department of Computer Science and Engineering, Graduate School of Artificial Intelligence, Pohang University of Science and Technology, Pohang, Korea. \texttt{heekap@postech.ac.kr}}
}
\begin{document}
\date{}
\maketitle

\begin{abstract}
  We study the problem of finding maximum-area triangles that can be
  inscribed in a polygon in the plane.  We consider eight versions of
  the problem: we use either convex polygons or simple polygons as the
  container; we require the triangles to have either one corner with a
  fixed angle or all three corners with fixed angles; we either allow
  reorienting the triangle or require its orientation to be fixed.  We
  present exact algorithms for all versions of the problem.
  In the case with reorientations for convex polygons with $n$
  vertices, we also present $(1-\myeps)$-approximation algorithms.
\end{abstract}

\section{Introduction}
We study the problem of finding maximum-area triangles that are
inscribed in a polygon in the plane.  When the shape of the triangle
is fully prescribed, this problem is related to \textit{the polygon
  containment problem}, which for given two polygons $P$ and $Q$, asks
for the largest copy of $Q$ that can be contained in $P$ using
rotations, translations, and scaling.
The problem is related to the problem of \textit{inscribing polygons}
if the shape is partially prescribed. In inscribing polygons, we are
given a polygon $P$ and seek to find a best polygon with some
specified number of vertices that can be inscribed in $P$ with respect
to some measures.

Problems of this flavor have a rich history and are partly motivated
by the attempt to reduce the complexity of various geometric problems,
including the shape recognition and matching problems, arising in
various applications in pattern recognition, computer vision and
computational geometry~\cite{Blaschke1917,Fleischer1992,WU1993471}.
Chapter 30.5 in the Handbook of Discrete and Computational
Geometry~\cite{DCGhandbook} provides a survey on the related works.

There has been a fair amount of work on inscribing a maximum-area
convex $k$-gon in a polygon.  A maximum-area convex $k$-gon inscribed
in a convex $n$-gon can be computed in $O(kn+n\log n)$
time~\cite{AKM87,KLU17}.  The best algorithm for computing a
maximum-area convex polygon inside a simple $n$-gon takes $O(n^7)$
time and $O(n^5)$ space~\cite{ChangYap86}.  Hall-Holt et
al.~\cite{Hall-Holt2006} gave an $O(n\log n)$-time
$O(1)$-approximation algorithm for finding a maximum-area convex
polygon inscribed in a simple $n$-gon.  Melissaratos et
al.~\cite{melissaratos1992shortest} gave an algorithm for finding a
maximum-area triangle inscribed in a simple $n$-gon in $O(n^4)$ time.
When the maximum-area triangle is restricted to have all its corners
on the polygon boundary, and it takes $O(n^3)$ time.

For finding a maximum-area copy of a given polygon $Q$ that can be
inscribed in a polygon $P$, there are results known for cases of
convex, orthogonal, and simple polygons, possibly with holes.  A
maximum-area copy of a convex $k$-gon that can be inscribed in a
convex $n$-gon can be computed in $O(n+k\log k)$ time~\cite{ST94,GP13}
under translation and scaling, and in $O(nk^2\log k)$
time~\cite{AAS98} under translation, scaling, and rotation.  For a
maximum-area homothet\footnote{Two shape are homothetic if one can be
  obtained from the other by scaling and translation.} of a given
triangle inscribed in a convex polygon $P$ with $n$ vertices,
Kirkpatrick and Snoeyink gave an $O(\log n)$-time algorithm to find
one~\cite{kirkpatrick1995tentative}, given the vertices are stored in
order along the boundary in an array or balanced binary search tree.
The maximum-area equilateral triangles of arbitrary orientation
inscribed in a simple $n$-gon can be computed in $O(n^3)$
time~\cite{depano1987finding}.

There also have been works on finding a maximum-area partially
prescribed shape that can be inscribed in a
polygon. 
Amenta showed that a maximum-area axis-aligned rectangle inscribed in
a convex $n$-gon can be found in linear time by phrasing it as a
convex programming problem~\cite{Amenta1994}.  When the vertices are
already stored in order along the boundary in an array or balanced
binary search tree, the running time was improved to
$O(\log^2n)$~\cite{Fischer1994}, and then to
$O(\log n)$~\cite{Alt1995}.  Cabello et al.~\cite{cabello2016finding}
considered the maximum-area and maximum-perimeter rectangle of
arbitrary orientation inscribed in a convex $n$-gon, and presented an
$O(n^3)$-time algorithm.  Very recently, Choi et al.~\cite{Choi2019}
gave $O(n^3\log n)$-time algorithm for finding maximum-area rectangles
of arbitrary orientation inscribed in a simple $n$-gon, possibly with
holes. However, little is known for the case of partially prescribed
triangles inscribed in convex and simple polygons, except a PTAS
result by Hall-Holt et al.~\cite{Hall-Holt2006} for finding a
maximum-area \emph{fat}\footnote{A triangle is $\delta$-fat if all
  three of its angles are at least some specific constant $\delta$.}
triangle that can be
inscribed in a simple $n$-gon.

\newcolumntype{f}{>{\hsize=.5\hsize}X}
\newcolumntype{j}{>{\centering\arraybackslash\hsize=.65\hsize}X}
\newcolumntype{g}{>{\centering\arraybackslash\hsize=\hsize}X}
\newcolumntype{s}{>{\centering\arraybackslash\hsize=.36\hsize}X}
\subsection{Our results}
\renewcommand{\arraystretch}{1.1}
\begin{table}
  \begin{tabularx}{\textwidth}{f||j|g||s|s}
    \hline & \multicolumn{2}{c||}{Convex polygons} &
    \multicolumn{2}{c}{Simple polygons} \\ \hline
    Fixed angles & All &
    One & All & One \\ \hline \hline
    \multirow{2}{*}{Axis-aligned} & {$O(\log n)$~\cite{kirkpatrick1995tentative}}
    & \multirow{2}{*}{$O(\log n)$} & {$O(n \log n)$} &
    \multirow{2}{*}{$O(n^2 \log n)$} \\ 
    & (homothet) &  & (homothet) & \\ \hline
    \multirow{2}{*}{Reorientations} & {$O(n^2)$}
     & $O(n^3)$ & \multirow{2}{*}{$O(n^2 \log n)$} & \multirow{2}{*}{$O(n^4)$} \\
 & 
    {$O(\myeps^{-\frac{1}{2}}\log n + \myeps^{-1})$} & 
    {$O(\myeps^{-\frac{1}{2}}\log{n}+\myeps^{-1} \log{\myeps^{-\frac{1}{2}}})$} &
     &  \\
    \hline
  \end{tabularx}
  \caption{Time complexities of the algorithms. All algorithms use $O(n)$ space.}
  \label{tab:results}
\end{table}
We study the problem of finding maximum-area triangles that can be
inscribed in a polygon in the plane.  We consider eight versions of
the problem: we use either convex polygons or simple polygons as the
container; we require the triangles to have either one corner with a
fixed interior angle or all three corners with fixed interior angles;
we either allow reorienting the triangle or require its orientation to
be fixed.  We study all versions of the problem in this paper
and present efficient algorithms for them.
Table~\ref{tab:results} summarizes our results.

We assume that the vertices of the input polygon are stored in order
along its boundary in an array or a balanced binary search tree.  We
say a triangle is \emph{axis-aligned} if one of its sides is parallel
to the $x$-axis, and we call the side the \emph{base} of the triangle.
We say a triangle has one fixed angle if one of the two interior
angles at corners incident to the base of the triangle is fixed.

For a convex polygon $P$ with $n$ vertices, a maximum-area homothet of
a given triangle that can be inscribed in $P$ can be computed in
$O(\log n)$ time~\cite{kirkpatrick1995tentative}. For axis-aligned
triangles with one fixed angle, we present an algorithm that computes
a maximum-area such triangle that can be inscribed in $P$ in
$O(\log n)$ time using $O(n)$ space.

When reorientations are allowed, we present an algorithm that computes
a maximum-area triangle with fixed interior angles that can be
inscribed in $P$ in $O(n^2)$ time using $O(n)$ space.  We also present
an $(1-\myeps)$-approximation algorithm that takes
$O(\myeps^{-\frac{1}{2}}\log n + \myeps^{-1})$ time.  For triangles
with one fixed angle, we present an algorithm to compute a
maximum-area triangle that can be inscribed in $P$ in $O(n^3)$ time
using $O(n)$ space. We also present an $(1-\myeps)$-approximation
algorithm that takes
$O(\myeps^{-\frac{1}{2}}\log{n}+\myeps^{-1} \log{\myeps^{-\frac{1}{2}}})$ time.

For a simple polygon $P$ with $n$ vertices, we present an algorithm
that computes a maximum-area homothet of a given triangle that can be
inscribed in $P$ in $O(n \log n)$ time using $O(n)$ space.  We also present an algorithm
to compute a maximum-area axis-aligned triangle with one fixed angle
that can be inscribed in $P$ in $O(n^2 \log n)$ time using $O(n)$ space.

When reorientations are allowed, we present an algorithm to compute a
maximum-area triangle with fixed interior angles that can be inscribed
in $P$ in $O(n^2 \log n)$ time using $O(n)$ space.  For triangles with one fixed angle,
we present an algorithm to compute a maximum-area triangle that can be
inscribed in $P$ in $O(n^4)$ time using $O(n)$ space.

\bigskip
Whenever we say a \emph{largest triangle}, it refers to a maximum-area
triangle inscribed in $P$.  We denote the triangle with three corners
$p, q, r$ (counterclockwise order) by $\tri{pqr}$, where $pq$ is base.
For two fixed angles $\alpha,\beta >0$, we call the triangle with
$\angle{rpq}=\alpha$ \emph{an $\alpha$-triangle} and the triangle with
$\angle{rpq}=\alpha$ and $\angle{pqr}=\beta$ \emph{an
  $(\alpha,\beta)$-triangle}.  Let $\area{T}$ denote the area of a
triangle $T$.
\iffull
\else
Due to lack of space, some of the proofs are omitted. A full version of this paper 
with complete proofs is available in the Appendix. 
\fi

\section {Largest Triangles in a Convex Polygon}
Consider a convex polygon $P$ with $n$ vertices in the plane. We show
for fixed angles $\alpha$ and $\beta$ 
how to find largest $(\alpha,\beta)$-triangles and largest
$\alpha$-triangles, aligned to the $x$-axis or of arbitrary
orientation, that can be inscribed in $P$.

\subsection{Largest
  \texorpdfstring{$(\alpha,\beta)$}{(a,b)}-triangles}
Since all interior angles of an $(\alpha,\beta)$-triangle are fixed,
this problem is to find a largest copy of a given triangle that can be
inscribed in $P$ using rotations, translations, and scaling.  When the
orientation of triangles is fixed, the problem reduces to finding a
largest \emph{homothet} of a given triangle that can be inscribed in
$P$.  A \emph{homothet} of a figure is a scaled and translated copy of
the figure.

A largest homothet of a given triangle that can be inscribed in $P$ 
can be computed in $O(\log n)$ time~\cite{kirkpatrick1995tentative}.  Thus, we focus on
the case in which arbitrary orientations are allowed.
This problem is similar to finding a largest equilateral triangle in a
convex polygon~\cite{depano1987finding}.  A largest
$(\alpha,\beta)$-triangle in a convex polygon $P$ must have at least
one corner lying on a vertex of $P$ by the same argument for largest
equilateral triangles in Theorem~1 of~\cite{depano1987finding}.

Consider an $(\alpha,\beta)$-triangle $\tri{t_0t_1t_2}$.
Let $\aff(s,\vartheta)$ denote the affine transformation that scales
$s$ and rotates $\vartheta$ in counterclockwise direction around a
vertex $v$ of $P$.
Let\[
  \affi[0]=\aff(\frac{\sin\beta}{\sin(\alpha+\beta)},\alpha),\,\affi[1]=\aff(\frac{\sin(\alpha+\beta)}{\sin\alpha},\beta),
  \,\affi[2]=\aff(\frac{\sin\alpha}{\sin\beta},\pi-\alpha-\beta).\]
\noindent
For $t_i$ lying on a vertex $v$ of $P$, we observe that
$t_{i+1}, t_{i+2}\in P$ if and only if $t_{i+2}\in P\cap\affi(P)$ with
indices under modulo 3. See Figure~\ref{fig:fixed_angle}(a) for an illustration.
Thus, for a fixed vertex $v$ of $P$, we can compute a largest triangle
$\tri{t_0t_1t_2}$ with $t_0$ at $v$ in $O(n)$ time by finding the
longest segment $vt_{i+2}$ contained in $P\cap\affi(P)$.
By repeating this for every vertex $v$ of $P$ such that corner $t_i$
lies on $v$ for $i=0,1,2$, a largest $(\alpha,\beta)$-triangle can be
computed in $O(n^2)$ time.

\begin{theorem}
  Given a convex polygon $P$ with $n$ vertices in the plane and two
  angles $\alpha, \beta$, we can compute a maximum-area
  $(\alpha,\beta)$-triangle of arbitrary orientations that can be
  inscribed in $P$ in $O(n^2)$ time.
\end{theorem}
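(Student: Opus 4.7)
My plan is to follow the approach sketched in the discussion preceding the theorem. First, I would establish a structural lemma: a maximum-area $(\alpha,\beta)$-triangle $\tri{t_0t_1t_2}$ inscribed in $P$ has at least one of its corners $t_i$ lying on a vertex of $P$. I would mimic the perturbation argument for largest equilateral triangles in Theorem~1 of~\cite{depano1987finding}: if no corner sits on a vertex of $P$, then each $t_i$ lies in the relative interior of an edge of $P$, yielding a one-parameter family of inscribed similar triangles; analyzing the boundary conditions one shows that the family admits a scaling-up move, contradicting maximality.

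Second, I would reduce the problem to $3n$ subproblems, one for each pair $(v,i)$ with $v$ a vertex of $P$ and $i\in\{0,1,2\}$ indicating which corner of the triangle is anchored at $v$. For fixed $(v,i)$, since the shape is prescribed, the affine map $\affi$ (a rotation by the interior angle at $t_i$ combined with a scaling by the appropriate ratio of adjacent side lengths, both centered at the fixed point $v$) sends $t_{i+1}$ to $t_{i+2}$. Hence $\{t_{i+1},t_{i+2}\}\subseteq P$ is equivalent to $t_{i+2}\in P\cap\affi(P)$, as already observed in the excerpt.

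Third, because the shape is rigid, $\area{\tri{t_0t_1t_2}}$ is proportional to $|vt_{i+2}|^2$, so the subproblem at $(v,i)$ becomes: maximize $|vt_{i+2}|$ over $t_{i+2}\in P\cap\affi(P)$. Both $P$ and $\affi(P)$ are convex $n$-gons, and $v$ is a fixed point of $\affi$, so $v\in P\cap\affi(P)$. The intersection of two convex $n$-gons has $O(n)$ vertices and can be computed in $O(n)$ time using a standard convex-polygon intersection routine; by convexity, the longest segment from $v$ contained in the intersection ends at the farthest vertex of the intersection from $v$, which is identifiable in $O(n)$ time by a single pass. Iterating over the $3n$ choices of $(v,i)$ yields total time $O(n^2)$ with $O(n)$ working space.

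The main obstacle I anticipate is the first step: carefully verifying that the vertex-anchoring lemma from~\cite{depano1987finding}, stated there for equilateral triangles, extends verbatim to general $(\alpha,\beta)$-triangles whose three angles are distinct. Once that lemma is in hand, the remaining pieces are standard: linear-time intersection of two convex polygons is classical, and a farthest-vertex query from a point in a convex polygon is trivial.
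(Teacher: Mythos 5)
Your proposal follows essentially the same route as the paper: invoke the vertex-anchoring argument from Theorem~1 of~\cite{depano1987finding}, reduce to the $3n$ subproblems indexed by a polygon vertex $v$ and a corner index $i$, characterize feasibility via $t_{i+2}\in P\cap\affi(P)$, and maximize $|vt_{i+2}|$ over that intersection in $O(n)$ time per subproblem. Your added details (linear-time convex intersection, farthest-vertex query from $v\in P\cap\affi(P)$) are correct and merely make explicit what the paper leaves implicit.
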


We give an example of a convex polygon with $\Omega(n^2)$
combinatorially distinct candidates of an optimal triangle
with $\alpha=\beta=60^\circ$.
Let $P$ be a convex polygon with $3n$ vertices such that
$2n$ vertices of $P$ are placed uniformly on a circular arc
of interior angle $120^{\circ}$, and the remaining $n$ vertices are
placed densely in the neighborhood of the center of the arc as shown
Figure~\ref{fig:fixed_angle}(b).  If $v$ is one of the $n$ vertices
near the center, $P\cap\affi[0](P)$ has $2n-1$ vertices
along by the arc,
and thus there are $\Theta(n)$ candidates of
$t_2$ for each such vertex $v$ to consider for the longest $vt_{2}$ in
$P\cap\affi[0](P)$. This gives $\Theta(n^2)$ candidates for the
longest $vt_{2}$ of similar lengths, and thus $\Theta(n^2)$
combinatorially distinct $(\alpha,\beta)$-triangles with side $vt_2$.
Any algorithm iterating over all such triangles takes $\Omega(n^2)$
time.

\begin{figure}[ht]
  \begin{center}
    \includegraphics[width=.9\textwidth]{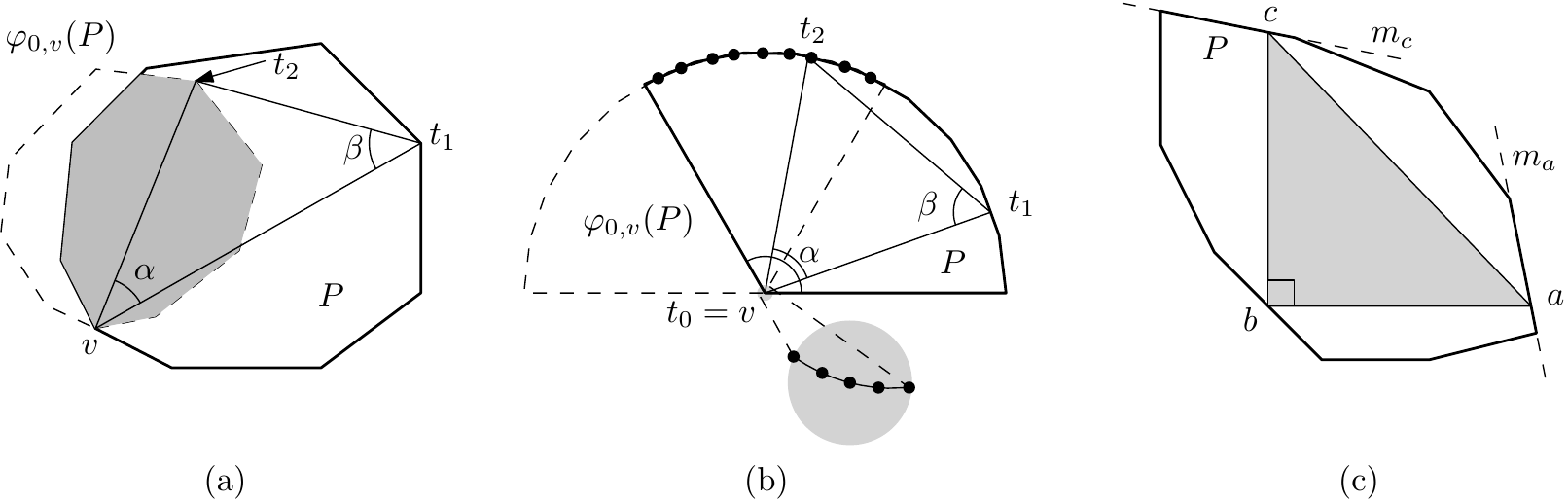}
  \end{center}
  \caption{(a) A largest $(\alpha, \beta)$-triangle $\tri{t_0t_1t_2}$
    with $t_0$ lying on a vertex $v$ of $P$
    is determined by the longest segment $vt_{2}$ contained
    in $P\cap\affi[0](P)$.
    (b) An example of $\Theta(n^2)$ combinatorially distinct
    $(\alpha,\beta)$-triangles with side $vt_2$
    to consider for an optimal triangle.  
  (c) The largest axis-aligned right triangle with $m_a<0$ and $m_c<0$.}
  \label{fig:fixed_angle}
\end{figure}

\subsection{Largest \texorpdfstring{$\alpha$}{a}-triangles}
This problem is to find a largest triangle with one corner angle fixed
to a constant $\alpha$ that can be inscribed in a convex polygon
$P$. We consider $\alpha$-triangles that are either axis-aligned or of
arbitrary orientations.

\subsubsection{Largest axis-aligned \texorpdfstring{$\alpha$}{a}-triangles}
We start with an algorithm to compute a largest axis-aligned
$\alpha$-triangle for $\alpha=90^\circ$.  Alt et
al.~\cite{alt1995computing} presented an algorithm of computing a
largest axis-aligned rectangle that can be inscribed in a convex
polygon. We follow their approach with some modification.  If two
corners of the triangle are on the polygon boundary, the algorithm by
Alt et al., works to compute a largest axis-aligned right triangle
that can be inscribed in $P$.

So in the following, we focus on the case that a largest axis-aligned
right triangle has all its corners on the boundary of $P$.  Consider a
largest axis-aligned right triangle $\tri{bac}$ with all three
  corners on the boundary of $P$.  See
Figure~\ref{fig:fixed_angle}(c) for an illustration.  Let $m_a$ and
$m_c$ denote the slopes of the polygon edges where $a$ and $c$ lie,
respectively, and let $m_{ac}$ denote the slope of $ac$.
Then, either (1) $m_a<0$ or $m_c<0$, or (2) $m_a>0$ and $m_c>0$.
Observe that no rectangle containing a largest
axis-aligned right triangle belonging to case (1) is contained in
$P$, and thus the algorithm by Alt et al. fails to find a
  largest axis-aligned right triangle belonging to the case.

\begin{figure}[ht]
  \begin{center}
    \includegraphics[width=.9\textwidth]{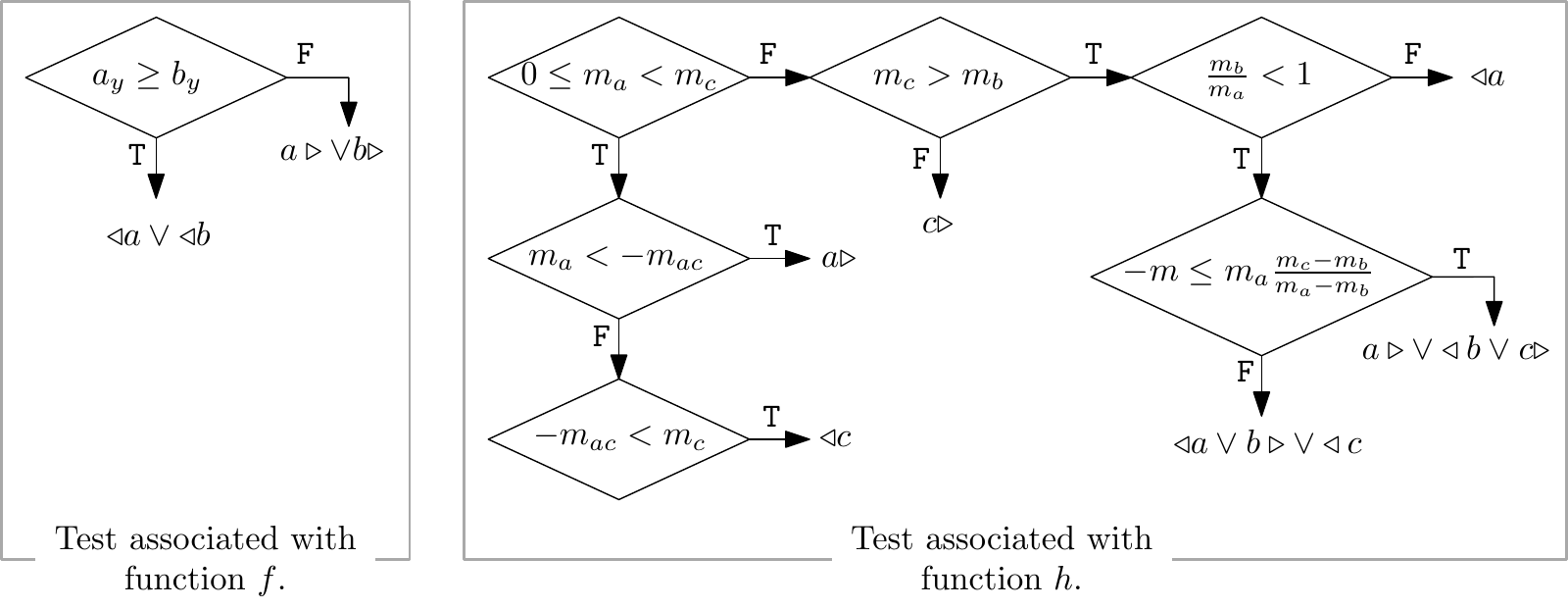}
  \end{center}
  \caption{Modified tests associated with functions $f$ and $h$.}
  \label{fig:test_h}
\end{figure}

We compute a largest axis-aligned right triangle in $O(\log n)$ time
using the tentative prune-search algorithm~\cite{alt1995computing} by
replacing the tests associated with functions $f$ and $h$ by the ones
in Figure~\ref{fig:test_h}.  Using the tests, we can determine a half
of the candidate triples of polygon edges in which a largest
axis-aligned right triangle cannot have their corners, and continue to
find a largest axis-aligned right triangle on the remaining half of
the candidate triples of polygon edges.

The algorithm to compute a largest axis-aligned right triangle can
compute the axis-aligned $\alpha$-triangle using linear
transformation $L_\alpha=\big(\begin{smallmatrix}
  1 & \cot\alpha\\
  0 & 1
\end{smallmatrix}\big)^{-1}$ for $\alpha$. Then, $L_\alpha(T)$ is an
axis-aligned right triangle for an $\alpha$-triangle $T$. Observe that
$T$ is a largest axis-aligned $\alpha$-triangle inscribed in $P$ if and only
if $L_\alpha(T)$ is a largest axis-aligned right triangle inscribed in
$L_\alpha(P)$, a convex polygon.
However, it takes $O(n)$ time for
computing entire description of $L_\alpha(P)$.
To reduce the time complexity, we
compute $L_\alpha(p)$ only when we need the slope of
the polygon edge containing $p$ or
the right side of an $\alpha$-triangle
with corner at $p$ on the polygon boundary.
Since there are $O(\log n)$ decision steps in the algorithm and
each decision step uses only a
constant number of points on the polygon boundary,
we have following theorem.

\begin{theorem}
  Given a convex polygon $P$ of $n$ vertices in the plane and
  an angle $\alpha$, we can find a maximum-area
  axis-aligned $\alpha$-triangle that can be inscribed in $P$ in
  $O(\log n)$ time.
\end{theorem}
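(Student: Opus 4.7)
The plan is to first settle the case $\alpha=90^\circ$ (largest axis-aligned right triangle) using a modified tentative prune-and-search, and then reduce arbitrary $\alpha$ to this case via the shear $L_\alpha$, performed lazily so that we never pay the $O(n)$ cost of materializing $L_\alpha(P)$.

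For the right-triangle case, I would split into the two subcases identified in the text. In case~(2), where $m_a>0$ and $m_c>0$, the minimal axis-aligned rectangle circumscribing $\triangle bac$ is itself contained in $P$, so an optimal right triangle is one half of a largest axis-aligned rectangle inscribed in $P$, and we can invoke the Alt et al.\ algorithm directly in $O(\log n)$ time. In case~(1), where at least one of $m_a, m_c$ is negative, no containing rectangle lies in $P$; here I would re-run the tentative prune-and-search framework, but with the tests associated with the functions $f$ and $h$ replaced by the variants depicted in Figure~\ref{fig:test_h}. The key correctness claim I would need to verify is that these modified tests still allow us to certify, at each step, that a half of the current set of candidate edge triples cannot host the three corners of an optimal triangle. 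Granted this invariant, the recursion depth is $O(\log n)$ and each decision step inspects only a constant number of polygon edges, giving $O(\log n)$ total time.

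For general $\alpha$, I would exploit the linear map $L_\alpha$: since $L_\alpha$ fixes every horizontal line, it sends axis-aligned $\alpha$-triangles to axis-aligned right triangles, scales all areas by the fixed factor $|\det L_\alpha|$, and sends $P$ to a convex polygon $L_\alpha(P)$ whose vertices are $L_\alpha$-images of the vertices of $P$ in the same cyclic order. Consequently a largest axis-aligned $\alpha$-triangle in $P$ is the $L_\alpha^{-1}$-image of a largest axis-aligned right triangle in $L_\alpha(P)$, which reduces the general problem to the previous case.

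The main obstacle is honoring the $O(\log n)$ time bound: materializing $L_\alpha(P)$ explicitly already costs $\Theta(n)$. I would address this by treating $L_\alpha(P)$ as a virtual polygon accessed through an oracle: each decision step of the prune-and-search only requires a constant number of pieces of local information, namely the slope of the polygon edge through a point $p$, or the $y$-coordinate at which the right leg of a trial $\alpha$-triangle with corner on $p$ meets an edge. All of these can be computed in $O(1)$ time by applying $L_\alpha$ to the constantly many relevant vertices of $P$ returned by the underlying balanced-search-tree representation. Summing a constant number of such $L_\alpha$-evaluations over $O(\log n)$ decision steps yields the claimed $O(\log n)$ running time. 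The delicate point that I would double-check is that the modified tests in case~(1) remain monotone enough to support the standard ``discard-half-of-the-triples'' argument; this is where a careful case analysis on the signs of $m_a$ and $m_c$ in the transformed coordinates is required.
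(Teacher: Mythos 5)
Your proposal follows essentially the same route as the paper: handle the right-triangle case by running the tentative prune-and-search of Alt et al.\ with the modified tests of Figure~\ref{fig:test_h} (after the same case split on the signs of $m_a$ and $m_c$), and reduce general $\alpha$ to that case via the shear $L_\alpha$ evaluated lazily on the $O(1)$ vertices needed per decision step, so that the $O(\log n)$ bound survives. The one point you flag as needing verification---that the modified tests still discard half of the candidate edge triples---is exactly the step the paper also asserts without a detailed argument, so your write-up is at the same level of rigor as the original.
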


\subsubsection{Largest \texorpdfstring{$\alpha$}{a}-triangles of arbitrary orientations}
We can compute a largest $\alpha$-triangle of arbitrary orientations
by simply iterating over all triples of edges of $P$. 
For each triple of edges of $P$, we can find a largest $\alpha$-triangle $T$
with corners on the edges of the triple in $O(1)$ time. 
\begin{theorem}
  Given a convex polygon $P$ of $n$ vertices in the plane and an angle
  $\alpha$, we can find a maximum-area $\alpha$-triangle of arbitrary orientation 
  that can be inscribed in $P$ in $O(n^3)$ time using $O(n)$ space.
\end{theorem}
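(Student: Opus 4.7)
The plan is to reduce the global optimization to $O(n^3)$ local problems, one per ordered triple of edges of $P$, each solvable in $O(1)$ time. I would first prove that some largest inscribed $\alpha$-triangle has each of its three corners on an edge of $\partial P$; then, for every ordered triple $(e_1,e_2,e_3)$ and every choice of which corner carries the fixed angle $\alpha$, compute in $O(1)$ time the largest $\alpha$-triangle whose three corners lie on $e_1, e_2, e_3$; and finally return the overall maximum. The total running time is $O(n^3)$, and $O(n)$ space suffices to store $P$ and the current best candidate.

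For the boundary claim, consider an optimal $\alpha$-triangle $\tri{pqr}$ with $\angle qpr = \alpha$ and suppose some corner lies strictly inside $P$. If a non-$\alpha$ corner, say $r$, is interior, moving $r$ outward along the ray $\overrightarrow{pr}$ preserves $\angle qpr = \alpha$ and strictly increases $|pr|$, hence the area, while remaining in $P$ by openness of the interior; the argument for $q$ is symmetric. If only $p$ is interior while $q,r \in \partial P$, then the joint deformation that slides $q$ along its supporting edge $e_2$ and $r$ along $e_3$ while updating $p$ on the new circular arc of angle $\alpha$ over the chord $qr$ is feasible to first order, and placing $p$ at the apex of the arc gives area $\tfrac14|qr|^2\cot(\alpha/2)$, which can be strictly increased by enlarging $|qr|$; the only obstruction is an interior critical point of $|qr|$ on $e_2\times e_3$, which would force $(q-r)$ perpendicular to both edges, and this in turn pushes $|qr|$ to its extremum at an endpoint of $e_2$ or $e_3$, i.e., a vertex of $P$, a configuration that the algorithm enumerates via the triple formed with an edge incident to that vertex.

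For the per-triple subproblem, fix edges $e_1,e_2,e_3$, place $p \in e_1$, $q \in e_2$, $r \in e_3$, and impose $\angle qpr = \alpha$. Parameterize each corner by a scalar along its supporting line and write the angle constraint as a polynomial equation of bounded degree, e.g.\ $\vec{pq}\cdot\vec{pr} = |pq|\,|pr|\cos\alpha$, which cuts out a two-dimensional feasibility surface inside the three-dimensional parameter space. The area $\tfrac12|pq||pr|\sin\alpha$ is a rational function on this surface; setting its two partial derivatives to zero yields two polynomial equations of bounded total degree in two unknowns, giving $O(1)$ real critical points by B\'ezout, each computable in closed form. Together with the $O(1)$ boundary cases (a parameter hitting an endpoint of its edge, i.e., a corner landing at a vertex of $P$), this produces $O(1)$ candidate triangles per triple, each feasibility-checked and area-evaluated in constant time.

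The main obstacle is the local-maximality analysis in the boundary claim: ruling out that the optimum has only two corners on $\partial P$ requires the joint two-parameter deformation of the two boundary corners described above, rather than a simple scaling (which is infeasible once two corners already touch $\partial P$), and the resulting degenerate critical configurations must be reduced explicitly to triples that the enumeration actually considers, lest the algorithm miss the optimum.
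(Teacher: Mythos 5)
Your overall plan coincides with the paper's (the paper's entire argument is the two sentences preceding the theorem: iterate over all triples of edges and solve each triple in $O(1)$ time), and your per-triple analysis via a bounded-degree polynomial system is a reasonable way to substantiate the $O(1)$ claim that the paper merely asserts. The problem is the boundary lemma you (rightly) identify as the crux: as you state it --- some largest inscribed $\alpha$-triangle has \emph{each} of its three corners on an edge of $\partial P$ --- it is false, and your handling of the case where only the $\alpha$-corner $p$ is interior does not close the gap. Your own analysis shows that when $p$ is interior it must sit at the apex of the $\alpha$-arc over $qr$, with area $\tfrac14|qr|^2\cot(\alpha/2)$, and that one is then driven to push $q$ and $r$ to positions locally maximizing $|qr|$; but $|qr|$ restricted to $e_2\times e_3$ is \emph{convex}, so the configuration with $q-r$ perpendicular to two parallel edges is a saddle, not an obstruction, and the process terminates only when $q$ and $r$ both reach \emph{vertices} of $P$ realizing a local maximum of the chord length --- with $p$ still strictly interior. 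Such a configuration lies on no triple of edges, because a triple assigns an edge to each of the three corners including $p$; saying it is ``enumerated via the triple formed with an edge incident to that vertex'' does not make $p$ land on an edge. A concrete counterexample: take $P$ the unit square and $\alpha$ close to $\pi$. Every $\alpha$-triangle satisfies $\area{\tri{pqr}}\le\tfrac14|qr|^2\cot(\alpha/2)\le\tfrac14 d^2\cot(\alpha/2)$ for $d$ the diameter, with equality only when $qr$ is the diagonal and $p$ is at the apex, which lies strictly inside the square; every triangle with $p$ on $\partial P$ is strictly smaller. So the enumeration, taken literally, misses the optimum.

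The repair is cheap and worth stating: after the perturbation argument, the only surviving configurations with an interior corner have $q$ and $r$ at vertices of $P$ and $p$ at the apex of the $\alpha$-arc over $qr$; there are $O(n^2)$ such candidates, each checkable (apex inside $P$, area evaluation) in $O(\log n)$ time, so adding them preserves the $O(n^3)$ bound and $O(n)$ space. Note that the paper itself never states or proves the boundary claim --- it is implicit in ``corners on the edges of the triple'' --- so your proposal is more explicit than the paper here, but the explicit version needs this extra family of candidates (or a corrected lemma asserting only that $q$ and $r$ lie on $\partial P$ and that $p$ lies on $\partial P$ \emph{or} at the apex with $q,r$ at vertices) before the enumeration is actually exhaustive.
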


One may wonder if the running time can be improved.
Cabello~et~al. showed a construction of a convex polygon with $n$ vertices
that has $\Theta(n^3)$ combinatorially distinct rectangles
  that can be inscribed in the polygon.
By using a similar construction,
we can show that there are $\Theta(n^3)$ combinatorially distinct 
$\alpha$-triangles that can be inscribed in a convex polygon
with $n$ vertices. 
Thus, any algorithm iterating
over all those combinatorially distinct triangles takes $\Omega(n^3)$ time.

\subsection{FPTAS in arbitrary orientations}
Let $\triopt$ be a largest $(\alpha,\beta)$-triangle that can be
inscribed in $P$. We can compute an $(\alpha,\beta)$-triangle
inscribed in $P$ 
whose area is at least $(1-\myeps)$ times $\area{\triopt}$
in $O(\myeps^{-\frac{1}{2}}\log n + \myeps^{-1})$ time using
$\myeps$-kernel~\cite{cabello2016finding}.  For any $\myeps\in(0,1)$, an
\emph{$\myeps$-kernel} for a convex polygon $P$ is a convex polygon
$P_\myeps$ such that for all unit vectors $u$ in the plane,
$(1-\myeps)\dw{P}{u}\leq \dw{P_\myeps}{u}$, where $\dw{P}{u}$ is the
length of the orthogonal projection of $P$ onto any line parallel to
$u$.
\begin{theorem}
  Given a convex polygon $P$ with $n$ vertices in the plane,
  two angles $\alpha, \beta$, 
  and $\myeps>0$, we can find an
  $(\alpha,\beta)$-triangle that can be inscribed in $P$
  and
  whose area is at least 
  $(1-\myeps)$ times the area of a maximum-area
  $(\alpha,\beta)$-triangle inscribed in $P$ in 
  $O(\myeps^{-\frac{1}{2}}\log n + \myeps^{-1})$ time.
\end{theorem}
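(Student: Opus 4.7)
The plan is to invoke the $\varepsilon$-kernel approach used by Cabello et al.~\cite{cabello2016finding} and reduce the approximation problem to running the exact $O(n^2)$-time algorithm of Theorem~1 on a much smaller polygon. Concretely, I would (i) compute an $\varepsilon$-kernel $P_\myeps$ of $P$ with $O(\myeps^{-1/2})$ vertices, (ii) apply Theorem~1 to $P_\myeps$ to obtain the largest $(\alpha,\beta)$-triangle $T^*$ inscribed in $P_\myeps$, and (iii) return $T^*$ as the approximation. Since $P_\myeps \subseteq P$, any triangle inscribed in $P_\myeps$ is automatically inscribed in $P$, so only the approximation ratio needs to be checked.

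For the running time, I would use the fact that $P$ is given with its vertices in cyclic order in an array or balanced BST, so the extreme point of $P$ in any direction can be located in $O(\log n)$ time by binary search on the boundary. Standard $\varepsilon$-kernel constructions probe $O(\myeps^{-1/2})$ directions chosen from a suitably dense angular sample on the unit circle and take the convex hull of the returned extreme points; this produces $P_\myeps$ in $O(\myeps^{-1/2}\log n)$ time. Applying the $O(n^2)$-time algorithm to $P_\myeps$, whose vertex count is $O(\myeps^{-1/2})$, contributes $O(\myeps^{-1})$ time, yielding the claimed bound.

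For correctness, the key property of $P_\myeps$ is that $(1-\myeps)\dw{P}{u}\le \dw{P_\myeps}{u}$ for every direction $u$. Using the standard consequence that a suitably translated and scaled copy of any convex figure inscribed in $P$ can be re-embedded into $P_\myeps$ with only a $(1-O(\myeps))$ loss in every linear dimension, the largest $(\alpha,\beta)$-triangle $\triopt \subseteq P$ has a congruent-up-to-similarity copy inside $P_\myeps$ of area at least $(1-O(\myeps))^2 \area{\triopt}$. Replacing $\myeps$ by a constant fraction of itself absorbs the hidden constant and gives $\area{T^*}\ge (1-\myeps)\area{\triopt}$. I expect the main obstacle to be precisely this last step: arguing cleanly that the width-preservation property of $P_\myeps$ transfers to an area-preservation property for arbitrarily rotated $(\alpha,\beta)$-triangles. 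Since the set of such triangles is closed under similarity and forms a one-parameter family after fixing an orientation, the argument of~\cite{cabello2016finding} for rectangles adapts directly, and this is what I would invoke rather than re-derive.
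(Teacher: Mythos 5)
Your proposal is correct and follows essentially the same route as the paper: compute an $\myeps$-kernel of $P$ with $O(\myeps^{-1/2})$ vertices in $O(\myeps^{-1/2}\log n)$ time, run the exact $O(n^2)$ algorithm on it to get the $O(\myeps^{-1})$ term, and justify the approximation ratio via the kernel's directional-width guarantee (the paper cites Lemma~8 of Cabello et al.\ applied to $P_{\myeps/32}$, which is exactly the constant-rescaling step you describe). No gaps.
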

\begin{proof}
  By Lemma 1 in \cite{cabello2016finding}, an $\myeps$-kernel
  $P_\myeps$ of $P$ has $O(\myeps^{-\frac{1}{2}})$ vertices and it can
  be computed in $O(\myeps^{-\frac{1}{2}}\log n)$ time.  A largest
  $(\alpha,\beta)$-triangle inscribed in $P_{\frac{\myeps}{32}}$ has
  area at least $(1-\myeps)$ times the area of the largest
  $(\alpha,\beta)$-triangle inscribed in $P$ by Lemma 8 in
  \cite{cabello2016finding}, and it can be computed in
  $O(\myeps^{-1})$ time.
\end{proof}

\subsubsection{Largest \texorpdfstring{$\alpha$}{a}-triangles of arbitrary orientations}
\label{sec:convex apx alpha arbitrary}
Let $\triopt$ denote a largest $\alpha$-triangle that can be
inscribed in $P$.
We can compute an $\alpha$-triangle inscribed in $P$ whose area
is at least $(1-\myeps)$ times $\area{\triopt}$
in $O(\myeps^{-\frac{1}{2}}\log n + \myeps^{-\frac{3}{2}})$
time using the algorithm by Cabello et al.~\cite{cabello2016finding}.
We can improve the time complexity further to
$O(\myeps^{-\frac{1}{2}}\log n + \myeps^{-1}\log\myeps^{-\frac{1}{2}})$ using
the approximation method by Ahn et al.~\cite{ahn2007maximizing}.  

\iffull
We use $d$ to denote the \emph{diameter} of a convex polygon $P$ which is
the maximum distance between any two points in $P$, and 
$w$ to denote the \emph{width} of $P$ which is the minimum distance between
two parallel lines enclosing $P$.
Let $c_1 =\min \{\frac{1}{16}, \frac{\cot{(\alpha/2)}}{4}, 
\frac{|\tan\alpha|}{4}\}$ and $c_2=2\min\{\frac{1-\cos\alpha}{\alpha},
\frac{1+\cos\alpha}{\pi-\alpha}\}$ be the constants defined by $\alpha$. 
\fi
\iffull
\begin{lemma}
  \label{lem:approx_1}
  $\area{\triopt}\geq c_1 dw$.
\end{lemma}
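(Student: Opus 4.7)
The plan is to exhibit an explicit $\alpha$-triangle inscribed in $P$ whose area is at least $c_1 dw$; since $\triopt$ is the largest such triangle, this immediately gives $\area{\triopt}\geq c_1 dw$.

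First I would set up a canonical triangle $T_0\subseteq P$ using the diameter and the width. Let $p,q\in P$ realize the diameter, $|pq|=d$, and place coordinates so that $p=(0,0)$ and $q=(d,0)$. Since the width is the infimum of projection lengths over all directions, the projection of $P$ onto the $y$-axis has length at least $w$, so there is some $r=(x_0,y_0)\in P$ with $|y_0|\geq w/2$; reflecting if necessary, assume $y_0\geq w/2$. The diameter constraints $|pr|,|qr|\leq d$ force $x_0\in[0,d]$, and by convexity $T_0:=\triangle pqr\subseteq P$ with $\area{T_0}=\tfrac{1}{2}dy_0\geq dw/4$.

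Next I would inscribe an $\alpha$-triangle in $T_0$ of area at least $c_1 dw$ using one of three constructions, each matching one term in the minimum defining $c_1$. \emph{(i)} For moderate $\alpha$ (so that the $1/16$ term is active), inscribe an axis-aligned rectangle of dimensions $(d/2)\times(y_0/2)$ in $T_0$ with one side on $pq$, and inside it place an $\alpha$-triangle with a vertex at a corner and the third vertex on the ray at angle $\alpha$ from that corner; the area is the smaller of $d^2\tan\alpha/8$ and $dy_0/8$, which is at least $dw/16$ whenever $\alpha$ is bounded away from $0$. \emph{(ii)} For $\alpha$ close to $0$ (so $|\tan\alpha|/4$ dominates), take $t_0$ to be whichever of $p,q$ has the larger interior angle in $T_0$, set $t_2=r$, and let $t_1$ be the intersection of the ray from $t_0$ at angle $-\alpha$ from direction $t_0t_2$ with the opposite side of $T_0$; the area formula $\tfrac{1}{2}|t_0t_1||t_0t_2|\sin\alpha$, combined with $y_0\geq w/2$ and with $|t_0r|$ close to the full edge length of $T_0$, yields area at least $(|\tan\alpha|/4)dw$. \emph{(iii)} For $\alpha$ close to $\pi$ (so $\cot(\alpha/2)/4$ dominates), use the isoceles $\alpha$-triangle with base $pq$; its apex sits on the perpendicular bisector of $pq$ at height $(d/2)\cot(\alpha/2)$, which is small enough to lie inside $T_0$, giving area $d^2\cot(\alpha/2)/4\geq(\cot(\alpha/2)/4)dw$ because $d\geq w$.

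The hard part will be the case analysis: for each $\alpha$ one must identify which construction is active and verify that the constructed $\alpha$-triangle fits inside $T_0$, given that the apex position $x_0$ can be anywhere in $[0,d]$ so that one of $p$ or $q$ may offer a much larger cone angle toward $r$ than the other. The extremes $\alpha\to 0$ and $\alpha\to\pi$ are the most delicate, since the $\alpha$-triangle is nearly degenerate and the argument must exploit the full asymmetry between $d$ (used as the length of the anchor edge) and $w$ (used as the available height) to recover the $|\tan\alpha|/4$ and $\cot(\alpha/2)/4$ constants without slack. Carefully switching the anchor between $p$, $q$, and $r$ according to the geometry is the principal technical bookkeeping.
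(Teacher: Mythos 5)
Your overall skeleton matches the paper's: build the triangle $\tri{pqr}$ on the diameter with height at least $w/2$ (so area at least $dw/4$), then run a three-way case analysis whose branches produce the three terms in $c_1$. The gap is in how you split the cases. You branch on $\alpha$ alone (which term of $c_1$ is smallest, ``$\alpha$ close to $0$'', ``$\alpha$ close to $\pi$''), but whether each of your constructions actually fits inside $P$ depends on the shape of $P$, not just on $\alpha$. Concretely, in your case (iii) you assert that the isoceles apex at height $(d/2)\cot(\alpha/2)$ ``is small enough to lie inside $T_0$''; this fails whenever $w/d\ll\cot(\alpha/2)$, which is perfectly compatible with $\cot(\alpha/2)/4$ being the minimal term of $c_1$ (take $\cot(\alpha/2)=1/5$ and a very thin polygon). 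Similarly, your case (ii) needs $\alpha$ to be at most the interior angle of $T_0$ at the chosen corner, which can be as small as roughly $w/d$, so ``$\alpha$ close to $0$'' is not sufficient. In both situations you would have to fall back on case (i), but as described (an $\alpha$-triangle anchored at a corner of the half-size inscribed rectangle with the third vertex on the ray at angle $\alpha$) it does not accommodate obtuse $\alpha$ at all, and for acute $\alpha$ it yields only $d^2\tan\alpha/8$, which is below the required $(\tan\alpha/4)dw$ unless $d\ge 2w$; the halving of both rectangle dimensions loses the constant you need.

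The paper resolves exactly this by making the thresholds polygon-dependent: it sets $\gamma=\angle pqs$, where $s$ is the intersection of the perpendicular bisector of $pq$ with the side $qr$, uses the small-$\alpha$ construction (on the full base $pq$, apex on the bisecting line) only for $\alpha<\gamma$ and the isoceles construction only for $\alpha>\pi-2\gamma$ --- these conditions are precisely what guarantees the constructed apex $t$ lies below $s$ and hence inside $P$ --- and shows that the middle construction $\tri{tqs}$ achieves area at least $dw/16\ge c_1 dw$ on the entire remaining range $\gamma\le\alpha\le\pi-2\gamma$. That polygon-dependent trichotomy, together with the verification that the middle construction covers the whole middle range with the absolute constant $1/16$, is the actual content of the proof; it is the part your proposal defers as ``bookkeeping,'' and it does not follow from the case split you propose.
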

\begin{proof}
  Let $pq$ be a diameter of $P$, and let $R$ be a rectangle
  circumscribed to $P$ with two sides parallel to $pq$ such that $P$
  touches all four sides of $R$.  Let $w'$ be the side of $R$
  orthogonal to $pq$. See Figure \ref{fig:approx_1}(a).
  
  Consider two interior-disjoint triangles with $pq$ as the base
  and total height $w'$ that are inscribed in $P$.
  Let $\tri{pqr}$ be one of the triangles whose
  height is at least $\frac{w'}{2}$.  Without loss of generality,
  assume that the bisecting line of $pq$ intersects the boundary of
  $qr$ at $s$ and let $\angle pqs=\gamma$.
  
  If $\alpha<\gamma$ or $\alpha>\pi-2\gamma$, then either $\tri{qtp}$
  (with $\area{\tri{qtp}}=\frac{\tan\alpha}{4}d^2$) or $\tri{tpq}$
  (with $\area{\tri{tpq}}=\frac{\cot(\alpha/2)}{4}d^2$)
  is an $\alpha$-triangle, where $t$ is the point on the
  bisecting line of $pq$ achieving $\angle pqt = \alpha$ or
  $\angle qtp=\alpha$. See Figure~\ref{fig:approx_1}(b).
  If $\gamma\leq\alpha\leq\pi-2\gamma$, $\tri{tqs}$
  (with $\area{\tri{tqs}}\geq\frac{1}{16}dw$) is an
  $\alpha$-triangle inscribed in $P$, where $t$ is the point on $pq$
  achieving $\angle{stq}=\alpha$ or
  $\angle{qst}=\alpha$, while satisfying
  $\angle{stq}\le \frac{\pi-\gamma}{2}$.
  See Figure \ref{fig:approx_1}(c).  Since $w\leq d$, the lemma
  holds.
  \begin{figure}[ht]
    \begin{center}
      \includegraphics[width=.8\textwidth]{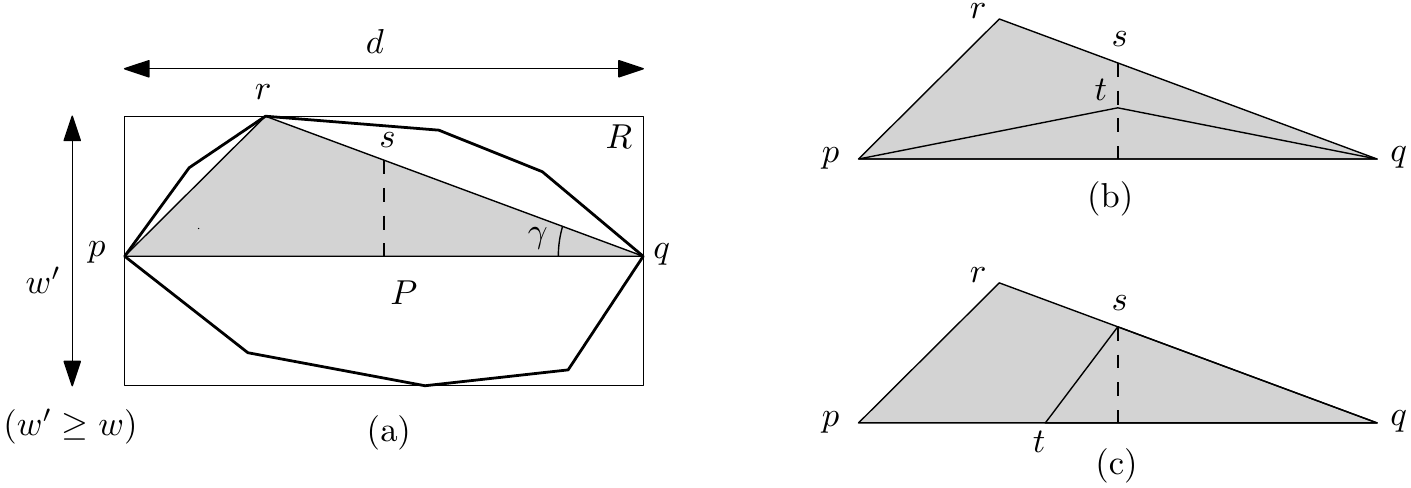}
    \end{center}
    \caption{(a) Proof of Lemma \ref{lem:approx_1}.  (b)
      $\alpha<\gamma$ or $\alpha>\pi-2\gamma$.  (c)
      $\gamma\leq\alpha\leq\pi-2\gamma$.}
    \label{fig:approx_1}
  \end{figure}
\end{proof}
\fi
\iffull
Let $\dirw$ be one of the directions of the lines defining the width
of $P$.  Let $\vartheta$ be the angle from $\dirw$ to the ray from $p$
bisecting $\angle rpq$ for a largest $\alpha$-triangle $\tri{pqr}$
inscribed in $P$ in counterclockwise direction, where
$-\pi\leq\vartheta\leq\pi$.  Then we have the following technical lemma.
\begin{lemma}
  \label{lem:approx_2}
  $\min\{\left||\vartheta|-\frac{\alpha}{2}\right|,\left|\pi-\frac{\alpha}{2}-|\vartheta|\right|\}\leq\frac{w}{c_1
    c_2 d}$.
\end{lemma}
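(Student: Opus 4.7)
The plan is to sandwich $\area{\triopt}$ between the lower bound $c_{1}dw$ from Lemma~\ref{lem:approx_1} and an explicit upper bound depending on $\vartheta$, and then translate the resulting trigonometric inequality into the stated angular bound.

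First I would bound the two sides of $\triopt$ using the width constraint. Since $P$ lies in a strip of width $w$ whose boundary lines are parallel to $\dirw$, any chord $xy\subset P$ making angle $\phi$ with $\dirw$ satisfies $|xy|\,|\sin\phi|\le w$. Writing $\phi_{pq}=\vartheta-\alpha/2$ and $\phi_{pr}=\vartheta+\alpha/2$ for the directions of the two sides of $\triopt$ from $\dirw$, this gives
\[
\area{\triopt}=\tfrac12|pq|\,|pr|\sin\alpha \;\le\; \frac{w^{2}\sin\alpha}{2\,|\sin\phi_{pq}\sin\phi_{pr}|}.
\]
(If some $\sin\phi_{\cdot}$ vanishes then that side is parallel to $\dirw$, forcing $\vartheta\in\{\pm\alpha/2,\pm(\pi-\alpha/2)\}$ and making the lemma trivial.) Combining this with $\area{\triopt}\ge c_{1}dw$ and the product-to-sum identity $2\sin\phi_{pq}\sin\phi_{pr}=\cos\alpha-\cos 2\vartheta$ yields
\[
|\cos 2\vartheta-\cos\alpha|\;\le\;\frac{w\sin\alpha}{c_{1}d},
\]
whose zero set on $[-\pi,\pi]$ is exactly the four critical orientations making $\tau:=\min\{||\vartheta|-\alpha/2|,|\pi-\alpha/2-|\vartheta||\}$ vanish.

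To convert this magnitude bound into an angular bound on $\tau$, it suffices to establish $|\cos 2\vartheta-\cos\alpha|\ge c_{2}\sin\alpha\cdot\tau$, since multiplying by $1/(c_{2}\sin\alpha)$ then yields the claim. Using $\cos(-x)=\cos(2\pi-x)=\cos x$, the four symmetric cases reduce to the two sub-cases $|\vartheta|=\alpha/2\pm\epsilon$ with $\epsilon=\tau$, for which direct expansion gives $|\cos 2\vartheta-\cos\alpha|=2\sin(\alpha\pm\epsilon)\sin\epsilon$. The remaining task is the trigonometric inequality $2\sin(\alpha\pm\epsilon)\sin\epsilon\ge c_{2}\sin\alpha\cdot\epsilon$ on the respective ranges $\epsilon\in[0,\alpha/2]$ and $\epsilon\in[0,(\pi-\alpha)/2]$. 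The ratios $2\sin(\alpha\pm\epsilon)\sin\epsilon/\epsilon$ evaluate to $2\sin\alpha$ at $\epsilon=0$ and to $2(1-\cos\alpha)/\alpha$ and $2(1+\cos\alpha)/(\pi-\alpha)$ at the far endpoints, so the choice $c_{2}=2\min\{(1-\cos\alpha)/\alpha,(1+\cos\alpha)/(\pi-\alpha)\}$ together with $\sin\alpha\le 1$ makes $c_{2}\sin\alpha$ a simultaneous lower bound at all three values.

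The principal obstacle is guaranteeing that the ratio $f(\epsilon):=2\sin(\alpha\pm\epsilon)\sin\epsilon/\epsilon$ is minimized at an endpoint of its range rather than at an interior critical point. A logarithmic-derivative calculation gives $(\ln f)'=\pm\cot(\alpha\pm\epsilon)+\cot\epsilon-1/\epsilon$; since $\cot\epsilon-1/\epsilon<0$ on $(0,\pi)$, this derivative is eventually negative, and a short sign analysis based on whether $\alpha+\epsilon$ exceeds $\pi/2$ shows that $f$ is either monotone on the whole range or first increases and then decreases, so its minimum is indeed attained at an endpoint. Substituting the inequality back into the sandwich then gives $c_{2}\sin\alpha\cdot\tau\le|\cos 2\vartheta-\cos\alpha|\le w\sin\alpha/(c_{1}d)$, and dividing through yields $\tau\le w/(c_{1}c_{2}d)$ as claimed.
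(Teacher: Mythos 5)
Your proof follows essentially the same route as the paper's: bound the two sides of $\triopt$ incident to $p$ by the width of the strip, convert the product of sines to $|\cos\alpha-\cos 2\vartheta|$, sandwich against the lower bound $c_1dw$ from Lemma~\ref{lem:approx_1}, and finish with the piecewise-linear lower bound on $|\cos\alpha-\cos 2\theta|$ with slope governed by $c_2$. Your version is slightly more careful than the paper's (you retain the $\sin\alpha$ factor, which cancels, and you actually justify the endpoint-minimum claim for the ratio $2\sin(\alpha\pm\epsilon)\sin\epsilon/\epsilon$ that the paper only asserts), but the argument is the same.
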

\begin{proof}
  Clearly, $\triopt$ is contained in the strip of $w$.
  Then $\area{\tri{pqr}}\leq
  \frac{1}{2}\frac{w}{|\sin(|\vartheta|-(\alpha/2))|}\cdot
  \frac{w}{|\sin(|\vartheta|+(\alpha/2))|}=
  \frac{w^2}{|\cos\alpha-\cos{|2\vartheta|}|}$. See Figure
  \ref{fig:approx_2}.

  For $\theta\in[0,\frac{\pi}{2}]$,
  $|\cos\alpha-\cos2\theta|\geq2\min\{\frac{1-\cos\alpha}{\alpha},
  \frac{1+\cos\alpha}{\pi-\alpha}\}|\theta-\frac{\alpha}{2}|$. Observe
  also that the graph of $|\cos\alpha-\cos2\theta|$ is symmetric
  with respect to $\theta=\frac{\pi}{2}$.  Therefore,
  $c_1dw\leq\area{\triopt}\leq
  \frac{w^2}{|\cos\alpha-\cos{|2\vartheta|}|}$, and
  $d_{\alpha}\min
  \{\left||\vartheta|-\frac{\alpha}{2}\right|,\left|\pi-\frac{\alpha}{2}
    -|\vartheta|\right|\}\leq|\cos\alpha-\cos{|2\vartheta|}|\leq
  \frac{w}{c_1d}$.
  \begin{figure}[ht]
    \begin{center}
      \includegraphics[width=0.7\textwidth]{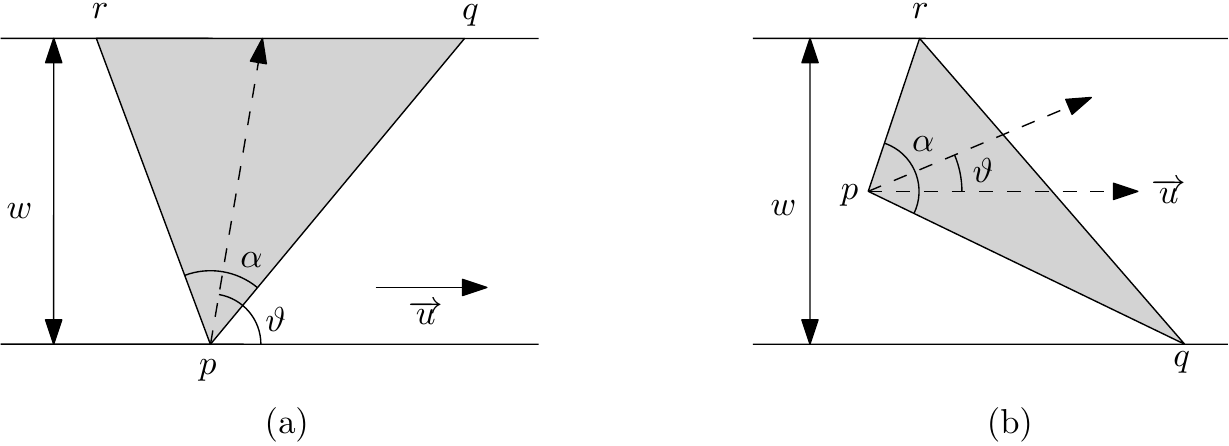}
    \end{center}
    \caption{$\area{\triopt}\leq\area{\tri{pqr}}$ if (a)
      $\frac{\alpha}{2}<|\vartheta|<\pi-\frac{\alpha}{2}$, or (b)
      $|\vartheta|<\frac{\alpha}{2}$ or
      $\pi-\frac{\alpha}{2}<|\vartheta|$.}
    \label{fig:approx_2}
  \end{figure}
\end{proof}
\fi
\iffull
Let $\triopt(\theta)(-\pi\leq\theta\leq\pi)$ denote a largest
$\alpha$-triangle $\tri{pqr}$ such that the angle from $\dirw$ to the
ray from $p$ bisecting $\angle rpq$ in counterclockwise direction is
$\theta$.
\begin{lemma}
  \label{lem:approx_3}
  Given $\myeps>0$, $\area{\triopt(\vartheta+\delta)} > (1-\myeps){\area{\triopt(\vartheta)}}$ for
  $|\delta|\leq\min\{\frac{\alpha}{2},
  \frac{\pi-\alpha}{2},\frac{c_1w}{2d}\myeps\}$.
\end{lemma}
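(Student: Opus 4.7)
The plan is to construct an explicit $\alpha$-triangle $T'$ of orientation $\vartheta + \delta$ inscribed in $P$ whose area is at least $(1-\myeps)\area{\triopt(\vartheta)}$, obtained by ``shrinking and rotating'' $T := \triopt(\vartheta)$. Concretely, let $c$ denote the incenter of $T$ and $r_{\mathrm{in}}$ its inradius, set $s := 1 - \myeps/2$ so that $s^2 \geq 1 - \myeps$, and define $T'$ as the image of $T$ under the similarity that first rotates by $\delta$ around $c$ and then applies the homothety of ratio $s$ centered at $c$. Then $T'$ is similar to $T$ and in particular is an $\alpha$-triangle; the rotation rotates the bisector of its $\alpha$-angle by exactly $\delta$, so $T'$ has orientation $\vartheta + \delta$; and $\area{T'} = s^2 \area{T} \geq (1-\myeps)\area{\triopt(\vartheta)}$. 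The only non-trivial task is to show $T' \subset P$, which, since $T \subset P$, reduces to showing $T' \subset T$.

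To bound the displacement between $T$ and $T'$ I decompose the construction as the shrinkage $T \to T_s := c + s(T-c)$ followed by a rotation by $\delta$ about $c$. Because the homothety is centered at the incenter, the shrunken incircle (of radius $s\,r_{\mathrm{in}}$) is still tangent to every side of $T_s$, so each side of $T_s$ is parallel to the corresponding side of $T$ at distance exactly $(1-s)\,r_{\mathrm{in}}$ inward. In particular, every point of $T_s$ lies at distance at least $(1-s)\,r_{\mathrm{in}}$ from $\partial T$. Under rotation by $\delta$ about $c$, any point $x \in T_s$ is displaced by $2|x-c|\sin(|\delta|/2) \leq |\delta|\,|x-c| \leq |\delta|\,d$, using $|x-c| \leq \mathrm{diam}(T) \leq d$. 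Hence $T' \subset T$ whenever $|\delta|\,d \leq (1-s)\,r_{\mathrm{in}} = (\myeps/2)\,r_{\mathrm{in}}$.

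It then remains to lower bound $r_{\mathrm{in}}$ in terms of $w$ and $d$. Using the identity $r_{\mathrm{in}} = \area{T}/s_T$ with semi-perimeter $s_T \leq 3d/2$ (every side of $T$ is at most the diameter $d$), together with $\area{T} \geq c_1 dw$ from Lemma~\ref{lem:approx_1}, I get $r_{\mathrm{in}} \geq 2c_1 w/3$. Plugging back, the containment holds whenever $|\delta| \leq c_1 w\myeps/(3d)$; the marginally sharper constant $c_1 w/(2d)$ claimed in the statement can be recovered by picking a smarter rotation center, namely the center of the minimum enclosing disk of $T$, which lies at distance at most $d_T/\sqrt{3}$ from every corner when $T$ is acute and at distance $d_T/2$ otherwise, thereby replacing the crude $|\delta|\,d$ displacement bound by a tighter multiple of $|\delta|\,d_T$.

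The main obstacle is thus not the geometric construction itself, which is a clean similarity perturbation, but the constant bookkeeping required to match the precise threshold $c_1 w/(2d)$ in the statement. The two extra constraints $|\delta| \leq \alpha/2$ and $|\delta| \leq (\pi-\alpha)/2$ play no role in the containment argument; they are there to keep $\vartheta + \delta$ in the same regime as $\vartheta$ with respect to the singular orientations $\pm\alpha/2,\pm(\pi-\alpha)/2$ highlighted by Lemma~\ref{lem:approx_2}, so that the sample $\vartheta + \delta$ legitimately remains within the window around $\vartheta$ on which $\triopt(\theta)$ has been analyzed.
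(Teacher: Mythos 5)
Your proposal takes a genuinely different route and is essentially sound, but it does not quite reach the stated threshold. The paper does not shrink-and-rotate the whole triangle: writing $\triopt(\vartheta)=\tri{pqr}$ with $|pq|\le|pr|$, it cuts two thin slivers $\tri{rpp'}$ and $\tri{pqq'}$ (with $\angle p'rp=\angle q'pq=|\delta|$) off the optimal triangle to exhibit an inscribed $\alpha$-triangle $\tri{tq'r}$ of orientation $\vartheta+\delta$ inside $\triopt(\vartheta)$ itself, bounds the lost area by $2\max(|pq|^2,|pr|^2)\sin|\delta|\le 2d^2|\delta|$, and divides by $\area{\triopt(\vartheta)}\ge c_1dw$; there the constraints $|\delta|\le\frac{\alpha}{2},\frac{\pi-\alpha}{2}$ are actually used to validate the sliver estimate, not only for the sampling as you suggest. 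Your similarity perturbation is cleaner and avoids that case analysis, and each step (the inward margin $(1-\tfrac{\myeps}{2})$ times the inradius, the displacement bound, and the inradius bound $2c_1w/3$) is correct; the caveat is that the argument you actually complete only covers $|\delta|\le c_1w\myeps/(3d)$, short of the required $c_1w\myeps/(2d)$. The patch you sketch does close the gap --- rotating about the center of the minimum enclosing disk (radius at most $d/\sqrt{3}$ by Jung's theorem) gives threshold $c_1w\myeps/(\sqrt{3}\,d)>c_1w\myeps/(2d)$, and the composition of a homothety about one center with a rotation about another is still a direct similarity rotating the bisector by exactly $\delta$ --- but you assert rather than verify this, so you should carry out that constant computation explicitly.
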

\begin{proof}
  Let $\tri{pqr}$ be
  $\triopt(\vartheta)$. 
  Without loss of generality, assume $|pq|\leq |pr|$.  Let $p'$ and
  $q'$ be points on $pq$ and $qr$, respectively, and let
  $\angle p'rp=\angle q'pq=|\delta|$.  Also, let $t$ be the
  intersection point of $pq'$ and $rp'$. See Figure~\ref{fig:approx_3}
  for an illustration.

  If $|\delta|\leq\min\{\frac{\alpha}{2},\frac{\pi-\alpha}{2}\}$, then
  \begin{eqnarray*}
    \area{\triopt(\vartheta)}-\area{\triopt(\vartheta+\delta)}
    &\leq& \area{\tri{pqr}}-\area{\tri{tq'r}}\\ &\leq&\area{\tri{rpp'}}+
                                                       \area{\tri{pqq'}}\\
    &\leq& \frac{1}{2}\big(\frac{\sin\alpha}{\sin(\alpha+|\delta|)}|pr|^2+|pq|^2\big)
           \sin|\delta|\\
    &\leq&(|pq|^2+|pr|^2)\sin|\delta|\leq 2\max(|pq|^2,|pr|^2)\sin|\delta|.
  \end{eqnarray*}

  By Lemma \ref{lem:approx_1},
  $1-\frac{\area{\triopt(\vartheta+\delta)}}
  {\area{\triopt(\vartheta)}}\leq\frac{2\max(|pq|^2,|pr|^2)}
  {\area{\triopt(\vartheta)}}\sin|\delta|\leq
  \frac{2d^2}{c_1wd}\sin|\delta|\leq
  \frac{2d}{c_1w}|\delta|\leq\myeps$, and the lemma
  follows.
\end{proof}
\fi
\iffull
\begin{figure}[ht]
  \begin{center}
    \includegraphics[width=0.4\textwidth]{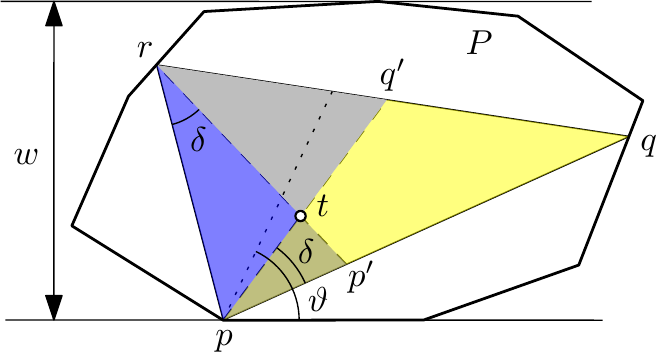}
  \end{center}
  \caption{$\area{\tri{pqr}}-\area{\tri{tq'r}}\leq\area{\tri{rpp'}}+
    \area{\tri{pqq'}}$ and
    $\area{T(\vartheta+\delta)}\geq\area{\tri{tq'r}}$ in proof of
    Lemma~\ref{lem:approx_3}.}
  \label{fig:approx_3}
\end{figure}
\fi
\begin{lemma}
  \label{lem:fptas-alpha}
  Given a convex polygon $P$ with $n$ vertices in the plane, an angle
  $\alpha$, and $\myeps>0$, we can find an $\alpha$-triangle that can
  be inscribed in $P$ and whose area is at least
  $(1-\myeps)\area{\triopt}$ in $O(\myeps^{-1}\log n)$ time.
\end{lemma}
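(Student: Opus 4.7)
The plan is to combine the smoothness result of Lemma~\ref{lem:approx_3} with the $O(\log n)$-time axis-aligned algorithm of Theorem~2 via a sampling scheme over orientations. Lemma~\ref{lem:approx_3} tells us exactly how finely to discretize the orientation $\theta$ (the angle of the $\alpha$-corner's bisector, measured from $\dirw$) so that at least one sample lies within the ``good'' window around the unknown optimal orientation $\vartheta^\star$.

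Concretely, I would first compute $d$, $w$, and $\dirw$ up to constant factors in $O(\log n)$ time, exploiting the fact that $P$ is stored so as to admit logarithmic extremal-vertex queries. Set the step size $\delta := \min\{\alpha/2, (\pi-\alpha)/2, c_1 w\myeps/(2d)\}$ as in Lemma~\ref{lem:approx_3}, and enumerate orientations $\theta_j = j\delta$ for $j = 0, 1, \ldots, K-1$, where $K = \lceil 2\pi/\delta\rceil = O(\myeps^{-1})$. For each $\theta_j$, I would find the largest $\alpha$-triangle inscribed in $P$ whose $\alpha$-bisector makes angle $\theta_j$ with $\dirw$: applying an implicit rotation that aligns the base of such a triangle with the $x$-axis reduces this to finding the largest axis-aligned $\alpha$-triangle in a rotated copy of $P$, which Theorem~2 solves in $O(\log n)$ time. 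The rotation is performed lazily---Theorem~2's tentative prune-and-search algorithm accesses only $O(\log n)$ polygon vertices, each of which can be rotated in $O(1)$ on demand, so we never materialize the full rotated polygon.

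Returning the best triangle found over all $K$ samples gives a $(1-\myeps)$-approximation: the sample $\theta_{j^\star}$ closest to $\vartheta^\star$ satisfies $|\theta_{j^\star}-\vartheta^\star|\leq\delta$, and Lemma~\ref{lem:approx_3} guarantees that its largest $\alpha$-triangle has area at least $(1-\myeps)\area{\triopt}$. The total running time is $O(K\log n) = O(\myeps^{-1}\log n)$. The main obstacle is absorbing the aspect-ratio factor $d/w$: a raw count gives $K = O(d/(w\myeps))$, and the stated bound treats $d/w$ as an input-dependent constant hidden in the $O(\cdot)$. A secondary technical point is to verify that the lazy implementation of the rotation does not disrupt Theorem~2's prune-and-search---which it does not, since that algorithm only ever inspects individual vertex coordinates and edge slopes, both computable in $O(1)$ time under the implicit rotation.
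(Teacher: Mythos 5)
There is a genuine gap, and you have in fact put your finger on it yourself without resolving it: the aspect-ratio factor $d/w$. Your step size is $\delta = \Theta(c_1 w \myeps / d)$, so sampling the entire range of orientations gives $K = \Theta(2\pi/\delta) = \Theta\bigl(\tfrac{d}{w}\myeps^{-1}\bigr)$ samples, not $O(\myeps^{-1})$. You cannot absorb $d/w$ into the $O(\cdot)$: it is not a constant, it is unbounded over the input class (think of a very thin convex polygon), and it is not tamed by the $\myeps$-kernel step in the follow-up theorem either, since a kernel preserves all directional widths up to a $(1-\myeps)$ factor and hence preserves the aspect ratio. With your scheme the running time is $O\bigl(\tfrac{d}{w}\myeps^{-1}\log n\bigr)$, which does not prove the stated bound.

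The missing ingredient is Lemma~\ref{lem:approx_2}, which you never invoke. That lemma localizes the optimal orientation $\vartheta$: it must satisfy $\min\{\,||\vartheta|-\tfrac{\alpha}{2}|,\ |\pi-\tfrac{\alpha}{2}-|\vartheta||\,\}\leq \tfrac{w}{c_1 c_2 d}$, i.e., $\vartheta$ lies in a union of $O(1)$ windows each of length $O(w/(c_1 c_2 d))$ centered at $\pm\tfrac{\alpha}{2}$ and $\pm(\pi-\tfrac{\alpha}{2})$. The paper samples only inside these windows, so the number of samples is $O\bigl(\tfrac{w/(c_1 c_2 d)}{c_1 w\myeps/d}\bigr)=O(\myeps^{-1})$ --- the $d/w$ factors cancel exactly. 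Everything else in your argument (the per-sample $O(\log n)$ reduction to the axis-aligned right-triangle search via a lazily applied rotation-plus-shear, and the use of Lemma~\ref{lem:approx_3} to certify that the nearest sample to $\vartheta$ loses at most a $(1-\myeps)$ factor) matches the paper and is fine; you only need to restrict the sample set to the windows certified by Lemma~\ref{lem:approx_2} before counting.
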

\iffull
\begin{proof}
  We sample all orientations $\theta$ in $-\pi\leq\theta\leq\pi$ at
  interval
  $\min\{\frac{\alpha}{2},\frac{\pi-\alpha}{2}
  \frac{c_1w}{2d}\myeps\}$ that satisfy
  $\min\{\left||\theta|-
    \frac{\alpha}{2}\right|,\left|\pi-\frac{\alpha}{2}-|\theta|\right|\}\leq
  \frac{w}{c_1 c_2 d}$.

  For each sampled orientation $\theta$,
  $L_{\alpha,\theta}(\triopt(\theta))$ is a largest axis-aligned right
  triangle in $L_{\alpha,\theta}(P)$ for the linear transformation
  $L_{\alpha,\theta}= \big(\begin{smallmatrix}
    1 & \cot\alpha\\
    0 & 1
  \end{smallmatrix}\big)^{-1}R\left(\frac{\alpha}{2}-\theta\right)$, where
  $R(\phi)= \big(\begin{smallmatrix}
    \cos{\phi} & -\sin{\phi}\\
    \sin{\phi} & \cos{\phi}
  \end{smallmatrix}\big)$.
  So, we can compute $\triopt(\theta)$ in $O(\log n)$ time using the same
  technique used for the axis-aligned $\alpha$-triangles.

  We can obtain an orientation $\theta$ such that
  $\area{\triopt(\theta)}\geq(1-\myeps)\area{\triopt}$ for at least one of
  the sampled orientations by Lemmas~\ref{lem:approx_2} and
  \ref{lem:approx_3}.  The running time of the algorithm is
  $O(\myeps^{-1}\log n)$.
\end{proof}
\fi
After applying the inner approximation using an $\myeps$-$kernel$, we
can obtain the following theorem.

\begin{theorem}
  Given a convex polygon $P$ with $n$ vertices in the plane, an angle
  $\alpha$, and $\myeps>0$, we can find an $\alpha$-triangle that can be
  inscribed in $P$ and whose area is at least $(1-\myeps)$ times the area of a
  maximum-area $\alpha$-triangle inscribed in $P$ in
  $O(\myeps^{-\frac{1}{2}}\log n +
  \myeps^{-1}\log\myeps^{-\frac{1}{2}})$ time.
\end{theorem}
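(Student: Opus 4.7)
The plan is to combine the two approximation ingredients already at hand: the $O(\myeps^{-1}\log n)$-time sampling-and-transform scheme of Lemma~\ref{lem:fptas-alpha}, and the $\myeps$-kernel construction of Cabello et al.~\cite{cabello2016finding}. The idea is straightforward: first shrink the input polygon to an $\myeps$-kernel of size $O(\myeps^{-\frac{1}{2}})$, then run the algorithm of Lemma~\ref{lem:fptas-alpha} on the kernel rather than on $P$. Since the running time of Lemma~\ref{lem:fptas-alpha} is $O(\myeps^{-1}\log(\text{number of vertices}))$, running it on a polygon of $O(\myeps^{-\frac{1}{2}})$ vertices yields $O(\myeps^{-1}\log\myeps^{-\frac{1}{2}})$, and building the kernel takes $O(\myeps^{-\frac{1}{2}}\log n)$ time, giving the claimed bound.

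The concrete steps are as follows. First, set $\myeps' = \myeps/c$ for a suitable constant $c$ (for the companion result above the authors used $c=32$), and compute an $\myeps'$-kernel $P_{\myeps'}$ of $P$ using Lemma~1 of~\cite{cabello2016finding}; this costs $O(\myeps^{-\frac{1}{2}}\log n)$ time and produces a convex polygon of $O(\myeps^{-\frac{1}{2}})$ vertices. Second, apply Lemma~\ref{lem:fptas-alpha} with parameter $\myeps/2$ (or a similar constant fraction of $\myeps$) to $P_{\myeps'}$ in place of $P$, which returns an $\alpha$-triangle $T\subseteq P_{\myeps'}\subseteq P$ whose area is at least $(1-\myeps/2)$ times the area of a largest $\alpha$-triangle inscribed in $P_{\myeps'}$, in $O(\myeps^{-1}\log\myeps^{-\frac{1}{2}})$ time. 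Third, argue that a largest $\alpha$-triangle in $P_{\myeps'}$ has area at least $(1-\myeps/2)\area{\triopt}$, invoking the analogue of Lemma~8 of~\cite{cabello2016finding}: the $\myeps$-kernel property $(1-\myeps')\dw{P}{u}\leq\dw{P_{\myeps'}}{u}$ for every direction $u$ implies that an appropriately scaled-and-translated copy of $\triopt$ fits inside $P_{\myeps'}$ while losing only a $(1-O(\myeps'))$ factor in area. Combining these two approximation factors gives an overall ratio of at least $(1-\myeps)$.

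The main obstacle is the third step: one has to verify that the argument of Lemma~8 of~\cite{cabello2016finding}, which was phrased for rectangles, carries over to $\alpha$-triangles. The key observation is that the proof there only uses that the object in question is a convex planar shape whose ``width'' in each direction controls how much of it can be lost when passing to an inner approximation, together with the bound $(1-\myeps')\dw{P}{u}\leq\dw{P_{\myeps'}}{u}$; since an $\alpha$-triangle is convex and its area is a bilinear function of its side-direction widths, the same sort of loss bound applies (with $c=32$ or some other explicit constant). Once this lemma is transported to the $\alpha$-triangle setting, the running-time calculation is immediate: $O(\myeps^{-\frac{1}{2}}\log n)$ for the kernel plus $O(\myeps^{-1}\log\myeps^{-\frac{1}{2}})$ for the call to Lemma~\ref{lem:fptas-alpha}, yielding the stated $O(\myeps^{-\frac{1}{2}}\log n + \myeps^{-1}\log\myeps^{-\frac{1}{2}})$ bound.
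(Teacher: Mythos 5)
Your proposal is correct and follows essentially the same route as the paper: compute an $\myeps/c$-kernel via Lemma~1 of Cabello et al.\ in $O(\myeps^{-\frac{1}{2}}\log n)$ time, invoke their Lemma~8 to show the optimum inside the kernel loses only a $(1-\myeps/2)$ factor, and then run the sampling algorithm of Lemma~\ref{lem:fptas-alpha} on the $O(\myeps^{-\frac{1}{2}})$-vertex kernel to get the $O(\myeps^{-1}\log\myeps^{-\frac{1}{2}})$ term. The paper uses the specific constant $c=64$ and simply cites Lemma~8 without remarking on its transfer from rectangles to triangles, a point you flag and justify slightly more carefully.
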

\begin{proof}
  By Lemma 1 in \cite{cabello2016finding}, an $\myeps$-kernel
  $P_\myeps$ of $P$ has $O(\myeps^{-\frac{1}{2}})$ vertices and it can
  be computed in $O(\myeps^{-\frac{1}{2}}\log n)$ time.  A largest
  $\alpha$-triangle in $P_{\frac{\myeps}{64}}$ has area at least
  $(1-\frac{\myeps}{2})\area{\triopt}$ by Lemma~8 in
  \cite{cabello2016finding}.  Then, an
  $(1-\frac{\myeps}{2})$-approximation to the largest
  $\alpha$-triangle in $P_{\frac{\myeps}{64}}$ is an $(1-\myeps)$
  -approximation of the largest $\alpha$-triangle in $P$. We can
  compute an $(1-\frac{\myeps}{2})$-approximation to the largest
  $\alpha$-triangle inscribed in $P_{\frac{\myeps}{64}}$ in
  $O(\myeps^{-1}\log\myeps^{-\frac{1}{2}})$ time by
  Lemma~\ref{lem:fptas-alpha}.
\end{proof}

\section{Largest \texorpdfstring{$(\alpha,\beta)$}{(a,b)}-triangles in a Simple Polygon}
In this section, we show how to find a largest $(\alpha,\beta)$-triangle that can be inscribed 
in a simple polygon $P$ with $n$ vertices in the plane. Without loss of generality, 
we assume no three vertices of $P$ are collinear.

A triangle $T$ inscribed in $P$ may touch some boundary elements (vertices and edges) of $P$.
We call an edge of $P$ that a corner of $T$ touches \emph{a corner contact} of $T$, 
and a vertex of $P$ that a side of $T$ touches in its interior \emph{a side contact} of $T$.
We call the set of all corner and side contacts of $T$ \emph{the contact set} of $T$. 
We say a triangle $T$ \emph{satisfies} a contact set $C$ 
if $C$ is the contact set of $T$.

 We use $\ray[\theta](p)$ to denote the ray emanating from $p$ 
 that makes angle $\theta$ from the positive $x$-axis in counterclockwise direction.
The inclination of line (or segment) is the angle that the line makes from the positive 
$x$-axis in counterclockwise direction.

\subsection{Largest axis-aligned \texorpdfstring{$(\alpha,\beta)$}{(a,b)}-triangles}
\label{sec:axis_aligned_fixed_angle}
Finding a largest axis-aligned $(\alpha,\beta)$-triangle is
equivalent to finding a largest homothet inscribed in $P$.
For an axis-aligned $(\alpha,\beta)$-triangle $T$ inscribed in $P$,
we use $r(T)$ to denote the left endpoint of the base of $T$, and call
it \emph{the anchor} of $T$.
For an axis-aligned $(\alpha,\beta)$-triangle $T$ inscribed in $P$ and satisfying a contact set $C$,
we say $T$ is \emph{maximal} if there is no axis-aligned $(\alpha,\beta)$-triangle of larger 
area inscribed in $P$ and satisfying a contact set $C'$ with $C\subseteq C'$.
For a point $r$ in the interior of $P$, consider the largest axis-aligned $(\alpha,\beta)$-triangle, 
denoted by $T(r)$, with $r$ at its anchor. For ease of description, 
we say $C$ is the contact set of $r$
and $r$ satisfies $C$, for the contact set $C$ of $T(r)$.
We use $\seg{r}$ to denote the contact set of $r$.
We also say $r$ is \emph{maximal} if $T(r)$ is maximal.

To compute a largest axis-aligned $(\alpha,\beta)$-triangle that can be inscribed in $P$, 
we consider all maximal axis-aligned $(\alpha,\beta)$-triangles
and find a largest triangle among them.
To find all maximal $(\alpha,\beta)$-triangles, 
we construct a subdivision of $P$ by $(\alpha,\beta)$ such that 
a maximal $(\alpha,\beta)$-triangle $T$ has $r(T)$ at a vertex of the subdivision.

\subsubsection{Subdivision of \texorpdfstring{$P$}{P} by angles \texorpdfstring{$(\alpha,\beta)$}{(a,b)}}
\begin{figure}[ht]
  \centering
  \includegraphics[width=.9\textwidth]{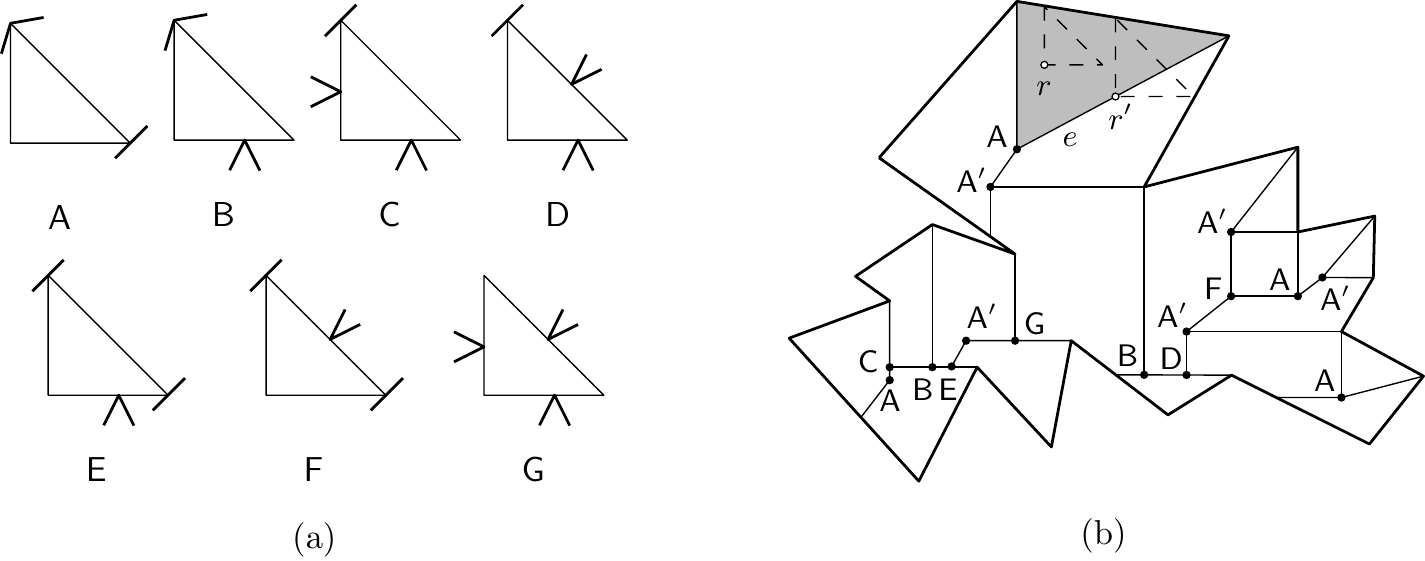}
  \caption{(a) Contact sets of type 3. Symmetric cases are omitted.
  (b) Subdivision of a polygon for $\alpha=\frac{\pi}{2}$ and $\beta=\frac{\pi}{4}$.
  Each vertex of the subdivision is labeled by its corresponding contact set in (a). \textsf{A'} is the symmetric case of \textsf{A}. Point $r$ has a contact set of type 1 and it is in a cell of the subdivision. 
  Point $r'$ has a contact set of type 4 and it is in an edge of the subdivision.}
  \label{fig:contact_subdivision}
\end{figure}

For a point $r$ in the interior of $P$, consider the contact set $C$ of $r$,
which may consist of polygon edges that a corner of $T(r)$ touches
and polygon vertices that a side of $T(r)$ touches in its interior. 
We classify the contact set $C$ of $r$ into four types as follows.
\begin{enumerate}
\item $C$ consists of exactly one edge of $P$.
\item $C$ consists of one or two reflex vertices of $P$ that the side of $T(r)$ 
opposite to $r$ touches in its interior.
\item $C$ belongs to 
one of the configurations shown in Figure~\ref{fig:contact_subdivision}(a)
or their symmetric configurations with respect to the anchor of $T(r)$.
A superset of $C$ also belongs to this type.
\item Other than types 1, 2, and 3.
\end{enumerate}

Observe that each interior point $r$ of $P$ has a contact set, which
belongs to one of the four types defined above.
For a contact set $C$, consider the set $R(C)$ of the points
$r$ in the interior of $P$ such that $C$ remains to be the contact set of 
$T(r)$ under translations and scaling.
Then the classification of contact sets above induces a subdivision of $P$
into cells, edges, and vertices. 
A vertex of the subdivision has degree 1 (endpoint
of a subdivision edge on the boundary of $P$) or larger. 
\iffull
\begin{lemma}
Let $\sd$ be the subdivision of $P$ by $(\alpha,\beta)$, and let $C$ be the contact set of 
a point in the interior of $P$.
Then $R(C)$ is a cell of $\sd$ if $C$ is of type 1 or 2, a vertex of $\sd$ if $C$ is of type 3, 
and an edge of $\sd$ if $C$ is of type 4.
\end{lemma}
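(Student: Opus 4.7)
The plan is to parameterize each axis-aligned $(\alpha,\beta)$-triangle inscribed in $P$ by three numbers—the anchor coordinates $(x_0,y_0)$ and the base length $s$—and count how many independent equations the contact set imposes on this 3-parameter space, then read off the dimension of $R(C)$ via its projection to the $(x_0,y_0)$-plane. Every corner contact is one linear equation in $(x_0,y_0,s)$; every side contact is also one linear equation, but it involves $s$ only when the contact lies on the side of $T(r)$ opposite to $r$, since the base and the left side both emanate from $r$ and so their supporting lines are determined by $r$ alone.

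For types 1 and 2, I expect $R(C)$ to be an open 2D cell. In type 1 the contact is a single corner on an edge $e$: one equation in $(x_0,y_0,s)$ determines $s$ smoothly as a function of $r$ over an open region, and a small perturbation of $r$ preserves the contact set. In type 2 the equation ``the opposite side contains a reflex vertex $v$'' likewise determines $s$ smoothly; when $C$ has two reflex vertices on the opposite side, those vertices are necessarily collinear along a line of slope $\pi-\beta$ (a property of $P$, not of the triangle), so the two contacts collapse to a single equation and $R(C)$ is again open.

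For type 4 the remaining configurations split into two sub-cases that both give 1D sets. Either $C$ imposes two independent equations on $(x_0,y_0,s)$ (for instance, a corner contact together with a side contact on the base or the left side, or two corner contacts on distinct edges), cutting out a 1D curve that projects to a 1D arc; or $C$ is a single side contact lying on the base or the left side, in which case the single equation involves only $r$ and already pins $r$ to a line in the plane. For type 3, each configuration in Figure~\ref{fig:contact_subdivision}(a) contributes three independent equations on $(x_0,y_0,s)$, so the system has a unique solution and $R(C)$ is an isolated point; supersets of a type 3 configuration inherit the same point. The main obstacle is the type 3 case analysis: for every enumerated configuration (and its symmetric variants) I must verify full rank of the three equations under the general position assumption that no three vertices of $P$ are collinear, and conversely that every codimension-three contact set falls under one of the enumerated patterns. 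Once the dimensions are settled, the incidence structure of $\sd$ follows because relaxing one constraint at a time moves $r$ continuously from a type 3 vertex through a type 4 edge into a type 1 or type 2 cell, so vertices bound edges which bound cells as required of a planar subdivision.
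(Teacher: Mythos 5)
Your route is genuinely different from the paper's. The paper's proof is a direct combinatorial enumeration: it simply asserts the conclusions for types 1, 2 and 3, and for type 4 it lists three syntactic sub-cases (proper subsets of the configurations in Figure~\ref{fig:contact_subdivision}(a); such a subset augmented by redundant extra side contacts on an already-contacted side; a corner pinned to a polygon vertex together with side contacts on both incident sides) and argues each yields a line segment. You instead work in the three-parameter space $(x_0,y_0,s)$ and read off $\dim R(C)$ as $2$ minus the number of independent constraints that survive projection, using the observation that contacts on the two sides emanating from the anchor constrain $r$ alone while contacts on the opposite side or at the non-anchor corners involve $s$. Your version is more uniform and makes explicit \emph{why} each type has the claimed dimension (in particular your treatment of two collinear reflex vertices in type 2, and of redundant constraints generally, is cleaner than the paper's ``does not restrict $R(C)$'' remark); the paper's version stays closer to the figure and so makes the exhaustiveness of the type-4 enumeration easier to inspect. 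The rank-verification obstacle you flag for type 3 is real, but the paper does not address it either---it disposes of types 1--3 with ``Observe that,'' so you are not behind the published argument on rigor.

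Two small repairs. First, your type-4 sub-case ``$C$ is a single side contact lying on the base or the left side'' is vacuous: $T(r)$ is by definition the \emph{largest} triangle anchored at $r$, so its contact set must contain at least one blocking element (a corner contact at a non-anchor corner or a side contact on the opposite side), and a lone contact on a ray emanating from $r$ blocks nothing. The realizable case is ``one blocking constraint determining $s(r)$ plus one constraint on $r$ alone,'' which still gives a 1-dimensional set, so the conclusion stands, but the case split should be stated that way. Second, and for the same reason, your projection from $(x_0,y_0,s)$ to the $r$-plane should explicitly invoke maximality: it is the guaranteed presence of a blocking constraint that lets you eliminate $s$ and equate the codimension in the $3$-space with the codimension of $R(C)$ in $P$.
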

\begin{proof}
  Observe that $R(C)$ is a cell of $\sd$ if $C$ is of type 1 or 2, a vertex of 
  $\sd$ if $C$ is of type 3. A contact set $C$ of type 4 is
  (1) a proper subset of a configuration in~\ref{fig:contact_subdivision}(a),
  (2) the set consists of a contact set $C'$ belongs in (1) and
  additional side contacts on the side which contains a side contact in $C'$,
  and (3) the set consists of corner contacts on one corner $c$, except the 
  anchor, and side contacts on both sides incident to $c$.
  Since type 1 and 2 contains all the contact set which contains exactly one 
  element, $C$ contains at least two elements. 
  Then $R(C)$ for $C$ belongs in (1) is line segment. 
  The additional side contacts on the side which contains a side contact does 
  not restrict $R(C)$. Thus, $R(C)$ for $C$ belongs in (2) is also line segment.
  It is obvious that $R(C)$ for $C$ belongs in (3) is line segment.
  Therefore, $R(C)$ is an edge of $\sd$ if $C$ is of type 4.
\end{proof}
\fi
See Figure~\ref{fig:contact_subdivision}(b) that illustrates the subdivision of $P$
for $\alpha=\frac{\pi}{2}$ and $\beta=\frac{\pi}{4}$ into cells, edges and vertices.
Any point $r$ in a cell has the same contact set of type 1 or 2. (The gray cell
has a contact set of type 1.) Any point on an edge of the subdivision has the same contact set of 
type 4. (The edge labeled with $e$ corresponds to a contact set of type 4.)
Each vertex of the subdivision has a contact set of type 3 and is labeled by
its corresponding contact set in Figure~\ref{fig:contact_subdivision}(a).

Observe that an axis-aligned $(\alpha,\beta)$-triangle is not maximal 
if its anchor lies in a cell or edge of $\sd$.
Thus, we have the following lemma.
\begin{lemma}
  \label{lem:max_subdivision}
  Every maximal axis-aligned $(\alpha,\beta)$-triangle has its anchor
  at a vertex of the subdivision of $P$.
\end{lemma}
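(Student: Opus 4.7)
The proof is by contradiction, formalizing the observation stated just above the lemma. Assume that $T$ is a maximal axis-aligned $(\alpha,\beta)$-triangle inscribed in $P$ whose anchor $r=r(T)$ is not a vertex of $\sd$. Write $C=\seg{r}$. Then $r$ lies in the relative interior of either a $2$-cell of $\sd$ or an edge of $\sd$, which by the earlier classification means $C$ is of type 1, 2, or 4. My aim is to exhibit, in each case, a point $r'$ arbitrarily close to $r$ with $\area{T(r')}>\area{T(r)}$ and $\seg{r'}\supseteq C$, which contradicts the maximality of $T$.

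The engine of the argument is that on $R(C)$ the scale factor $s(r)$ of $T(r)$ is an affine and non-constant function of $r$. Each element of $C$ imposes a single linear equation in $(r_x,r_y,s)$, since the corners and sides of $T(r)$ are affine in $(r,s)$: a corner contact pins one corner of $T(r)$ to the supporting line of a polygon edge, and a side contact pins a reflex vertex of $P$ to the supporting line of a side of $T(r)$. These are exactly the binding constraints on $R(C)$, so solving them gives $s=s(r)$ affine on $R(C)$, and $\area{T(r)}=s(r)^2\cdot\area{T_0}$ for a unit-scale reference triangle $T_0$ is a non-constant quadratic there. Hence the area admits no extremum in the relative interior of $R(C)$, and its supremum on $R(C)$ is attained only on $\partial R(C)$, which consists of lower-dimensional pieces of $\sd$; any such boundary point $r'$ satisfies $\seg{r'}\supseteq C$ because the supporting lines defining $C$ stay tangent to $T(r')$, and $\area{T(r')}>\area{T(r)}$, giving the contradiction.

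The step I anticipate as the main obstacle is verifying that $s(r)$ is genuinely non-constant on $R(C)$, since an algebraic coincidence could in principle leave the area locally stationary. The clean fix is geometric: in each case one exhibits an explicit direction along which $r$ can be translated so that every non-binding side and corner of $T(r)$ moves into the interior of $P$ while the binding contacts in $C$ are preserved, forcing $T(r)$ to grow strictly. For type 1 one slides $r$ away from the single constraining polygon edge, for type 2 one slides $r$ so that the side opposite the anchor recedes from the supporting line of the reflex vertex or vertices, and for type 4 one moves $r$ along the tangent direction of the $\sd$-edge on which it lies. In each situation the binding constraints leave at least one free translational degree of freedom in which $T(r)$ is strictly increasing, and this produces the required witness $r'$ and completes the argument.
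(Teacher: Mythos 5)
Your overall strategy is exactly the one the paper intends: the paper offers no written proof of this lemma, only the one-sentence observation immediately preceding it that a triangle whose anchor lies in a cell or an edge of $\sd$ is not maximal, and your proposal is a fleshed-out contrapositive of that observation. For contact sets of type 1 and type 2 your argument is sound: the single binding constraint is linear in $(r_x,r_y,s)$ with a nonzero coefficient on at least one coordinate of $r$ (generically), so there is a translation direction of the anchor along which the scale $s$, and hence the area, strictly increases while the contact is retained, contradicting maximality.

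The genuine soft spot is the type 4 case, and you correctly identify it but do not actually close it. Your ``clean fix'' --- move $r$ along the tangent direction of the $\sd$-edge and assert that $T(r)$ grows strictly --- is circular: whether the area is strictly monotone along that edge is precisely the question of whether $s$ is non-constant there, and there are concrete configurations where it is constant. For example, if two corners of the triangle touch two parallel edges of $P$, the two linear constraints force $s$ to a fixed value and leave the anchor free to translate parallel to those edges; every triangle along the resulting $\sd$-edge has the same area, no strictly larger triangle with a superset contact set exists, and the strict-growth argument fails (the same happens when a polygon edge carrying a corner contact is parallel to a triangle side carrying a side contact). To repair this you should either (i) invoke a general-position assumption excluding such parallelisms (the paper only assumes no three collinear vertices, so this would need to be added), or (ii) weaken the conclusion in the degenerate case by observing that the endpoint of such an $\sd$-edge is a vertex of $\sd$ realizing the \emph{same} area with a superset contact set, which is all the algorithm actually needs. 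As written, the proof of the type 4 case is an assertion rather than an argument.
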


Now we explain how to construct the subdivision $\sd$ for $P$.
We use a plane sweep algorithm with a sweep line $L$ which has inclination $\pi-\beta$ and
moves downwards. The status of $L$ is the set of rays and edges of $P$
intersecting it, which is maintained in a balanced binary search tree $\tree$ along $L$.
While $L$ moves downwards, the status in $\tree$ changes when $L$ meets particular points.
We call each such particular point \emph{an event point} of the algorithm.
To find and handle these event points, we construct a priority queue $\que$ as the event queue 
which stores the vertices of $P$ in the beginning of algorithm as event points. As $L$ moves
downwards from a position above $P$, 
some event points are newly found and inserted to $\que$ and some event points are removed
from $\que$. 

The invariant we maintain is that at any time during the plane sweep,
the subdivision above the sweep line $L$ has been computed correctly.
Consider the moment at which $L$ reaches a vertex $v$ of $P$.
If $v$ is convex vertex, we add a ray $\ray[\gamma](v)$ to $\sd$ and update $\tree$
and $\que$ if it is contained in $P$ locally around $v$. 
Since every point in $\ray[\gamma](v)$ near $v$ has the same contact set of type 4 consisting
of the two edges, $\gamma$ is determined uniquely by the two edges.
If $v$ is a reflex vertex, we add at most two rays, $\ray[\pi+\alpha](v)$ and $\ray[\pi](v)$ from $v$, to
$\sd$ and update $\tree$ and $\que$ accordingly if the ray is contained in $P$ locally around $v$ and
every point in the ray near $v$ has the same contact set of type 4.
Consider the moment at which $L$ reaches the intersection of a ray with the boundary of $P$.
Then the ray simply stops there.
Consider now the moment at which $L$ reaches the intersection point $x$ of two rays $\eta_1$ and $\eta_2$.
Then the two rays stop at $x$. We add one ray $\eta$ emanating from $x$ to $\sd$ and update $\tree$ and $\que$
accordingly. 
Observe that the contact set of points on $\eta$ near $x$ 
is of type 4 consisting of contact elements of the points in $\eta_1$ and 
$\eta_2$. Thus, the orientation of $\eta$ is uniquely determined in $O(1)$ time. 
If $\eta_1$ or $\eta_2$ emanates from a reflex vertex of $P$, 
$\eta$ makes counterclockwise angle $\pi+\alpha$ or $\pi$ from the positive $x$-axis. 
Imagine we move a point $p$ from $x$ along $\eta$. 
Then the contact set of $p$ may change at some point $q\in\eta$ to another contact set
consisting of contact elements of the points in $\eta_1$ and $\eta_2$.
We call such a point $q$ \emph{a bend point} of $\eta$.  
Again a bend point of a ray can be found in $O(1)$ time. We add $q$ to $\que$ as an event point.
Finally, consider the moment at which $L$ reaches a bend point $q$ of a ray $\eta$ which
emanates from the intersection of two rays. Then $\eta$ stops at $q$. 
We add a new ray $\eta'$ emanating from $q$ to $\sd$ and update $\tree$ 
and $\que$ accordingly. The orientation of $\eta'$ is uniquely determined by the contact elements of 
the points in $\eta$ in $O(1)$ time. Observe that $\eta'$ makes a counterclockwise angle 
other than $\pi+\alpha$ and $\pi$ from the positive $x$-axis.

At each of these event points, we update $\tree$ and $\que$ as follows. At an event point 
where a ray $\eta$ is added to $\sd$,
we insert $\eta$ to $\tree$, compute the event points at which $\eta$ intersects with its neighboring rays
along $L$ and with the boundary of $P$, and add the event points to $\que$. 
At an event where a ray $\eta$ stops, we remove it from $\tree$ and remove the events induced by
$\eta$ from $\que$. We also compute the event at which the two neighboring rays of $\eta$ along $L$
intersect and add them to $\que$. 

After we have treated the last event, we have computed the subdivision of $P$.

\begin{lemma}
  \label{lem:construct_subdivision}
  We can construct the subdivision $\sd$ of $P$  in
  $O(n \log n)$ time using $O(n)$ space.
\end{lemma}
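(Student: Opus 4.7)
The plan is to establish correctness of the plane-sweep algorithm by induction on events and then to bound its total work. For correctness, I would argue that immediately after processing each event, the portion of $\sd$ above the current position of the sweep line $L$ agrees with the true subdivision of that sub-region of $P$. Each event corresponds exactly to a place where the contact set of a moving anchor changes---a convex or reflex vertex of $P$, a ray meeting the boundary, a ray-ray intersection, or a bend point---and the classification of contact sets into the four types listed above determines uniquely which rays to emit and in which direction, all in $O(1)$ time using only local data. The key base case is that no subdivision edge can cross $L$ without being detected, which holds because every point at which the contact set changes is either a polygon vertex (pre-inserted in $\que$) or a geometric event computable from the current status of $\tree$ and already scheduled in $\que$.

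The core of the analysis is to bound the total number of rays in $\sd$, and hence the total number of events, by $O(n)$. Initially at most $2n$ rays are created at polygon vertices (at most one per convex vertex, at most two per reflex vertex). Each ray-ray intersection destroys two rays and creates at most one, each boundary intersection destroys one ray and creates none, and each bend point destroys one ray and creates one. If we can show that the cascade of bend-born rays terminates after a bounded number of steps along any descendant chain, these incidence relations force the total ray count to be $O(n)$. To that end, I would use the remark in the construction that a bend-born ray carries an inclination other than $\pi$ or $\pi+\alpha$, together with the fact that each bend strictly advances the contact set of the ray through a finite sequence of constant-size configurations; the special inclinations $\pi$ and $\pi+\alpha$ appear only on rays from reflex vertices and survive only through intersection mergers, not through bends, so any chain of bends along a descendant sequence has bounded depth.

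Given the $O(n)$ event bound, each event is handled in $O(\log n)$ time via standard balanced-BST operations on the status tree $\tree$ along $L$ and on the priority queue $\que$: we insert or delete $O(1)$ rays in $\tree$ and $O(1)$ predicted future events in $\que$, and we perform $O(1)$ geometric primitives (orientation computation, ray-ray intersection, ray-boundary intersection, and bend-point detection), each determined by a constant-size contact set. This gives the claimed $O(n \log n)$ running time. At any moment $\tree$ and $\que$ each hold $O(n)$ entries, and the final subdivision itself has $O(n)$ size, so the algorithm uses $O(n)$ space. The main obstacle is ruling out the bend-point cascade in the previous paragraph, which is the only place where a ray is effectively \emph{continued} after being destroyed, and where a careful accounting of how contact sets evolve along a ray is required to keep the subdivision complexity linear in $n$.
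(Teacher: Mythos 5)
Your proposal follows essentially the same route as the paper's proof: bound the number of rays and events by $O(n)$ via a charging argument (vertices spawn $O(n)$ rays, intersections strictly decrease the ray count, and bend cascades are cut off because only rays of inclination $\pi$ or $\pi+\alpha$ can bend while bend-born rays never carry those inclinations), then charge $O(\log n)$ per event for the updates to $\tree$ and $\que$. The paper states the bend-termination fact directly as an observation rather than flagging it as an obstacle, and omits your explicit correctness invariant, but the substance is the same.
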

\iffull
\begin{proof}
 First we show that the number of event points in the plane sweep algorithm is $O(n)$.
 A polygon vertex induces at most two rays and generates at most three events at 
 the vertex and two points where the rays intersect the boundary of $P$ for the first time.
 Thus, there are $O(n)$ rays induced by polygon vertices and they generate 
 $O(n)$ event points.
 At an event, either (a) two neighboring rays merge into one at their intersection point or 
 (b) a ray making counterclockwise angle $\pi+\alpha$ or $\pi$ with the positive $x$-axis
 has at most one bend point, and every ray emanating from a bend point makes a counterclockwise angle
 other than $\pi+\alpha, \pi$ from the positive $x$-axis. An event of case (a) generates
 $O(1)$ new event points and the number of rays decreases by 1. Thus, the total number of event points
 of case (a) is $O(n)$. An event point of case (b) generates
 $O(1)$ new events, but only once for a ray making counterclockwise angle $\pi+\alpha$ or $\pi$ 
 with the positive $x$-axis. Again, the total number of event points of case (b) is $O(n)$.
  
  For each event in the plane sweep algorithm, we stop at most two rays and add at most two rays
  to $\sd$ in $O(1)$ time. Then we update $\tree$ and $\que$ accordingly in $O(\log n)$ time since 
  there are $O(n)$ elements in $\tree$ and $O(n)$ events in $\que$.  
  Thus, we can handle an event in $O(\log n)$ time, and we can construct $\sd$ in $O(n \log n)$ time.
  The data structures $\sd$, $\tree$ and $\que$ all use $O(n)$ space.
\end{proof}
\fi
\subsubsection{Computing a largest axis-aligned \texorpdfstring{$(\alpha,\beta)$}{(a,b)}-triangle}
By Lemma \ref{lem:max_subdivision}, it suffices to check all vertices
of $\sd$ to find all maximal axis-aligned $(\alpha,\beta)$-triangles. 
For each vertex $w$ of \sd, the triangle $T(w)$ is a maximal axis-aligned $(\alpha,\beta)$-triangle
satisfying $\seg{w}$ by definition. 
We can compute the area of $T(w)$ in $O(1)$ time by storing $\seg{w}$ 
to $w$ when it is added into $\sd$. Then, we can find a largest axis-aligned $(\alpha,\beta)$ 
triangle by choosing 
a largest one among all maximal axis-aligned $(\alpha,\beta)$-triangles. 
By Lemmas~\ref{lem:max_subdivision} and \ref{lem:construct_subdivision}, we have following theorem.

\begin{theorem}
\label{thm:axis_alpha_beta}
Given a simple polygon $P$ with $n$ vertices in the plane and two angles $\alpha, \beta$, 
we can find a maximum-area $(\alpha,\beta)$-triangle that can be inscribed in $P$
in $O(n \log n)$ time using $O(n)$ space.
\end{theorem}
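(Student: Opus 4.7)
The plan is to combine the two structural lemmas already established, Lemma~\ref{lem:max_subdivision} and Lemma~\ref{lem:construct_subdivision}, with a simple enumeration step. First I would invoke Lemma~\ref{lem:construct_subdivision} to build the subdivision \sd\ of $P$ induced by the angles $(\alpha,\beta)$ in $O(n\log n)$ time and $O(n)$ space. Along the way, every vertex $w$ of \sd\ is created together with its associated contact set $\seg{w}$ (of type~3), so I would store $\seg{w}$ at $w$ as soon as $w$ is produced by the plane sweep, at only $O(1)$ additional cost per vertex.

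Next I would argue that it suffices to examine only the vertices of \sd. By Lemma~\ref{lem:max_subdivision}, any maximal axis-aligned $(\alpha,\beta)$-triangle inscribed in $P$ has its anchor at some vertex of \sd. In particular, a \emph{largest} axis-aligned $(\alpha,\beta)$-triangle is maximal, hence its anchor $r^*$ lies at some vertex $w^*$ of \sd, and the corresponding triangle is exactly $T(w^*)$. Therefore, a largest axis-aligned $(\alpha,\beta)$-triangle is found as an element of maximum area in the finite set $\{T(w) : w \text{ is a vertex of } \sd\}$.

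Finally I would iterate over every vertex $w$ of the subdivision, reconstruct $T(w)$ from the stored contact set $\seg{w}$ in $O(1)$ time, compute $\area{T(w)}$ in $O(1)$ time, and keep track of the running maximum. Since the construction in Lemma~\ref{lem:construct_subdivision} produces only $O(n)$ vertices and $O(n)$ edges of \sd, this enumeration runs in $O(n)$ time and $O(n)$ space. Adding the cost of constructing \sd\ gives a total of $O(n\log n)$ time and $O(n)$ space, as required.

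The only step that needs any care is the $O(1)$-time reconstruction of $T(w)$ from $\seg{w}$: each type-3 contact set contains either two corner contacts that together determine the anchor and pin down the scale via a single side contact or a third corner contact, or a combination covered by the configurations depicted in Figure~\ref{fig:contact_subdivision}(a). In every such configuration the anchor $r$, the orientation (which is fixed as axis-aligned), and the two fixed angles $\alpha,\beta$ leave at most one free parameter, namely the scale, which is determined by any remaining side/corner contact in $\seg{w}$ through solving a single linear equation. This is a routine case analysis and is the only mildly technical part of the argument, but it does not affect the asymptotic running time.
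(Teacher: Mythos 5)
Your proposal is correct and follows essentially the same route as the paper: construct the subdivision $\sd$ via Lemma~\ref{lem:construct_subdivision}, store each contact set $\seg{w}$ at its vertex $w$ during the sweep, use Lemma~\ref{lem:max_subdivision} to restrict attention to the vertices of $\sd$, and take the maximum of $\area{T(w)}$ over all vertices in $O(1)$ time each. The extra care you give to the $O(1)$ reconstruction of $T(w)$ from a type-3 contact set is a reasonable elaboration of a step the paper treats as immediate.
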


\subsection{Largest \texorpdfstring{$(\alpha,\beta)$}{(a,b)}-triangles of arbitrary orientations}
\label{sec:alpha_beta_arbitrary}
We describe how to find a largest $(\alpha,\beta)$-triangle of arbitrary orientations that can be
inscribed in a simple polygon $P$ with $n$ vertices. 
We use $\ct$ to denote the coordinate axes obtained by rotating the 
standard $xy$-Cartesian coordinate system by $\theta$ degree counterclockwise around the origin.
We say a triangle $T$ with base $b$ is $\theta$-aligned if $b$ is parallel to the $x$-axis in $\ct[\theta]$.

We use \sdt{\theta} to denote the subdivision of $P$ in $\ct$.
We construct the subdivision $\sdt{\theta}$ of $P$ at $\ct[0]$ using the algorithm in Section~\ref{sec:axis_aligned_fixed_angle}, 
and maintain it while rotating the standard $xy$-Cartesian coordinate axes from
angle 0 to $2\pi$. 
During the rotation, we maintain the combinatorial structure of $\sdt{\theta}$ 
(not the embedded structure $\sdt{\theta}$)
and update the combinatorial structure for each change so that the changes of $\sdt{\theta}$ 
are handled efficiently.
We abuse the notation $\sdt{\theta}$ to refer the combinatorial structure of $\sdt{\theta}$ 
if understood in the context.
For each vertex of $\sdt{\theta}$, we store the function which returns the actual coordinate of the vertex in
the embedded structure $\sdt{\theta}$. Thus, an edge of $\sdt{\theta}$ is determined by the functions
stored at its two endpoints. For each edge of $\sdt{\theta}$, we store the contact set of the points
in the edge.

We say a contact set $C$ is \emph{feasible} at an angle $\theta_0$ if there exists 
a $\theta_0$-aligned $(\alpha,\beta)$-triangle inscribed in $P$ and satisfying $C' \supseteq C$.
For a contact set $C$, consider all angles at which $C$ is feasible. Then these angles
form connected components in $[0,2\pi)$ 
which are disjoint intervals. We call each such interval a feasible interval of $C$.

For a fixed angle $\theta_0$, consider a $\theta_0$-aligned $(\alpha,\beta)$-triangle 
satisfying a contact set $C$. Let $I=[\theta_1,\theta_2]$ be a feasible interval of $C$ 
containing $\theta_0$.
For $\theta_0\in I$, we say a $\theta_0$-aligned $(\alpha,\beta)$-triangle $T$ 
satisfying a contact set $C_1\supseteq C$ is \emph{maximal} for $I$
 if there is no $\theta' \in I$ such that a $\theta'$-aligned
$(\alpha,\beta)$-triangle satisfying a contact set $C_2 \supseteq C$ has larger area.

\subsubsection{Maintaining subdivision under rotations}
\label{sec:maintainDS}

\begin{figure}[ht]
  \centering
  \includegraphics[width=.8\textwidth]{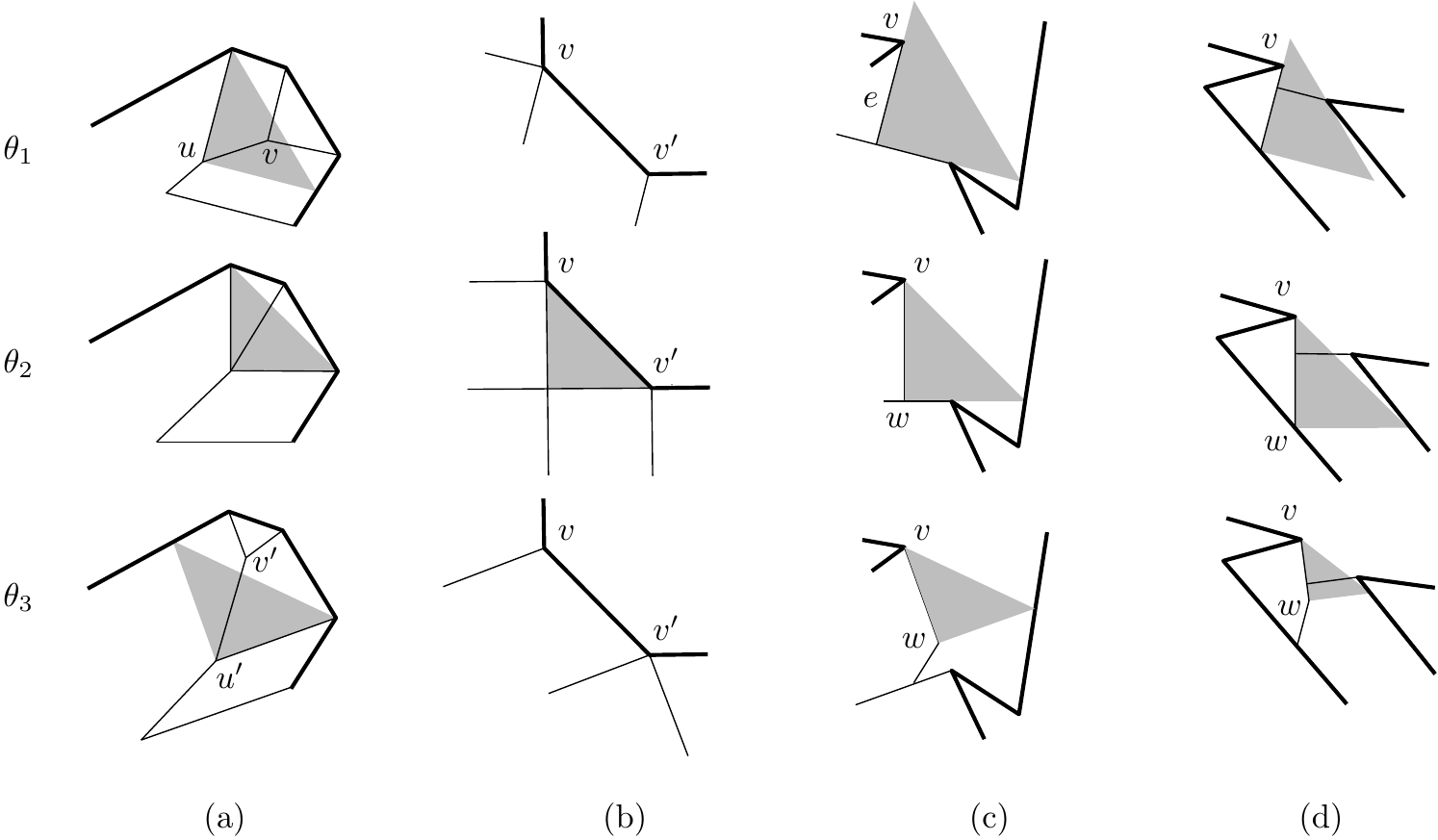}
  \caption{Changes of the subdivision $\sdt{\theta}$ during the rotation with $\theta$ at $\theta_1, \theta_2, \theta_3$ 
  with $\theta_1<\theta_2<\theta_3$.
  (a) An edge event at which edge $uv$ of $\sdt{\theta_1}$ 
  becomes a vertex of $\sdt{\theta_2}$ and then it splits into two with edge 
  $u'v'$ in between them in $\sdt{\theta_3}$.
    (b) A vertex event at which an edge of $\sdt{\theta}$ incident to a reflex vertex $v'$ 
	suddenly appears in $\sdt{\theta_2}$ and an edge of incident to 
	a reflex vertex $v$ suddenly disappears from $\sdt{\theta_3}$. 
	(c) An align event at which a subdivision edge $e$ 
    splits into two edges with vertex $w$ in between.
    (d) A boundary event at which $w$ meets the boundary of $P$ in $\sdt{\theta_2}$.}
  \label{fig:subdivision_events}
\end{figure}
The combinatorial structure of \sdt{\theta} changes during the
rotation. Each change is of one of the following types: 
\begin{itemize}
\item Edge event: an edge of $\sdt{\theta}$ degenerates to
a vertex of $\sdt{\theta}$. Right after the event, the vertex splits into two with an edge
connecting them in $\sdt{\theta}$. See Figure~\ref{fig:subdivision_events}(a). 
\item Vertex event: an edge of $\sdt{\theta}$ incident to a reflex vertex $v$ 
suddenly appears or disappears on $\sdt{\theta}$. 
This event may occur only when an edge $e$ of $P$ incident to $v$
has inclination $0$, $\alpha$, or $\pi-\beta$. See Figure~\ref{fig:subdivision_events}(b). 
\item Align event: an edge $e$ of $\sdt{\theta}$
with inclination $0$ or $\alpha$ splits into two edges with a vertex $w$ of $\sdt{\theta}$ 
in between or two such edges merge into one. 
This event may occur only for vertex $w$ of $\sdt{\theta}$ such that 
the maximal $\theta$-aligned $(\alpha,\beta)$-triangle $T(w)=\tri{wpq}$ (or its symmetric
one) has a reflex vertex 
of $P$ on $q$, a reflex vertex of $P$ on its base $wp$, and an edge on $p$, 
or $T(w)$ has a vertex of $P$ on $p$ and $q$, with one of them being reflex. 
See Figure~\ref{fig:subdivision_events}(c). 
\item Boundary event: a vertex of $\sdt{\theta}$ with degree 2 or larger meets the boundary of 
$P$ or a vertex of $\sdt{\theta}$ with degree 1 meets a vertex of $P$ on the boundary 
of $P$. See Figure~\ref{fig:subdivision_events}(d). 
\end{itemize}

\iffull
\begin{figure}[ht]
  \centering
  \includegraphics[width=.8\textwidth]{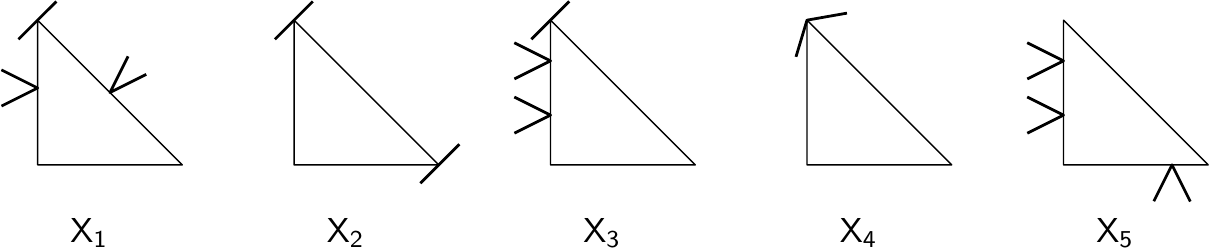}
  \caption{Classification of contact sets of $(\alpha,\beta)$-triangles 
  with anchor at the vertex at which an edge, align, boundary event occurs.}
  \label{fig:contacts type}
\end{figure}
\fi
Recall that the set $\seg{p}$ for a point $p$ on an edge of $\sdt{\theta}$ 
has at least two elements and
$\seg{v}$ for a vertex $v$ of $\sdt{\theta}$ has at least three elements.
We classify all the contact sets of the vertices of $\sdt{\theta}$ 
at which an event (except a vertex event) occurs. 
For a vertex $w$ of $\sdt{\theta}$, $C(w)$ belongs to one of the following types. 
\iffull See Figure~\ref{fig:contacts type} for an illustration. \fi
\begin{itemize}
\item Type $\mathsf{X_1}$: $C(w)$ contains a corner contact at a corner $c$  
and side contacts on both sides incident to $c$.
\item Type $\mathsf{X_2}$: $C(w)$ contains no side contact on a side $s$ and a corner contact on each corner incident to $s$.
\item Type $\mathsf{X_3}$: $C(w)$ contains two side contacts on a side $s$ and a corner contact on a corner incident to $s$.
\item Type $\mathsf{X_4}$: $C(w)$ contains two corner contacts on a corner, i.e. a corner is on a vertex of $P$, except the case that it contains a corner contact on each corner.
\item Type $\mathsf{X_5}$: $C(w)$ contains two side contacts on a side, one side contact on another side,  and no corner contact on the corner shared by the sides.
\end{itemize}
Since there are no two vertices of $\sdt{\theta}$ such that the contact sets of them 
are same, a vertex where an event (except a vertex event) occurs has contact set
containing more than three elements.
Observe that an $(\alpha,\beta)$-triangle $T$ is not maximal
if there is a side $s$ of $T$ such that
no contacts are on both $s$ and corners incident to $s$.
Thus, for a vertex $w$ of $\sdt{\theta}$ at which an event 
(except a vertex event) occurs,
$T(w)$ has no side $s$ such that
no contacts on both $s$ and corners incident to $s$.
Observe that $C(w)$ belongs to a type defined above. 
Thus, the number of events other than vertex events 
is at most the number of maximal $(\alpha,\beta)$-triangles 
which have a contact set of one of the types above.
\iffull
We need the following two technical lemmas to bound the number of $(\alpha,\beta)$-triangles 
satisfying types $\mathsf{X_1}$ and $\mathsf{X_2}$.
\begin{lemma}
  \label{lem:lower_envelope}
  Let $F=\{f_i \mid 1\le i \le n\}$ be a finite family of real value
  functions such that every $f_i$ is of single variable and continuous, 
  any two functions $f_i$ and $f_{i'}$ intersect in their graphs at most
  once, every function $f_i$ has domain $D_i$ of size $d$.  If there is
  a constant $c$ such that $|{\bigcup}{D_i}|=cd$, then the complexity of
  the lower envelope of $F$ is $O(n)$.
\end{lemma}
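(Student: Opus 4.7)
The plan is to bound the complexity of the lower envelope $\mathcal{E}$ of $F$ by counting separately two sources of combinatorial change along $\mathcal{E}$: \emph{crossing breakpoints}, at which two functions that both lie on $\mathcal{E}$ exchange the role of minimizer through an intersection, and \emph{boundary breakpoints}, at which the minimizer on $\mathcal{E}$ changes because the domain of some $f_i$ starts or ends. Both contributions should come out to $O(n)$.

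For the crossing breakpoints I would appeal to Davenport--Schinzel machinery. Since any two $f_i,f_{i'}$ meet at most once, the sequence of minimizer labels written along $\mathcal{E}$ and restricted to the crossing events is a Davenport--Schinzel sequence of order $2$ over $n$ symbols; its length is at most $2n-1$, giving an $O(n)$ contribution from crossings.

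The substantive work is for the boundary breakpoints. The naive bound is $\Theta(nd)$, since each of the $n$ functions brings $d$ domain endpoints; however, the hypothesis $|{\bigcup}{D_i}|=cd$ with $c$ constant says that all these endpoints lie in a common pool of only $O(d)$ distinct locations. I would set up a charging scheme in which every boundary breakpoint on $\mathcal{E}$ is charged either to the function that becomes the minimizer immediately after the event, or to an adjacent crossing breakpoint. The key structural fact used in the analysis is the at-most-one-crossing property: once a function $f_i$ has been supplanted on $\mathcal{E}$ by some $f_j$ through a crossing, the two can never alternate on the envelope again, so $f_i$ can reappear on $\mathcal{E}$ through boundary events only a bounded number of times before being permanently eliminated. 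Combined with the fact that the distinct domain locations total $O(d)$ and each hosts only a constant number of envelope-changing events chargeable to any one function, an amortized argument should bring the total down to $O(n)$.

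The main obstacle I anticipate is handling cascades of boundary events near a single location of $\bigcup D_i$, in which several functions with domains ending or starting at nearby points successively take over $\mathcal{E}$ without any intervening crossing. Without further hypotheses such cascades would push the envelope complexity to $\lambda_3(n)=\Theta(n\alpha(n))$, the usual bound for partial curves with at most one crossing per pair. The role of the hypothesis $|{\bigcup}{D_i}|=cd$ with \emph{constant} $c$ is precisely to rule out long cascades: it forces heavy overlap of the domains and caps the number of distinct boundary locations at $O(d)$, which the charging scheme exploits to eliminate the $\alpha(n)$ factor and recover the linear bound. Making this cascade-elimination argument precise will be the technical heart of the proof.
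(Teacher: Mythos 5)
There is a genuine gap, and it sits exactly where the difficulty of the lemma lies. Your first step claims that the crossing breakpoints alone form a Davenport--Schinzel sequence of order $2$ and hence contribute $O(n)$. That is not justified for \emph{partial} functions: an alternation $a\,b\,a\,b$ in the subsequence of labels adjacent to crossing events does not force $f_a$ and $f_b$ to cross twice, because consecutive crossing events can involve different pairs, with boundary events in between restoring $a$ or $b$ to the envelope. Indeed, for $n$ line segments (each pair crossing at most once) the lower envelope has complexity $\Theta(n\alpha(n))$ while the segment endpoints contribute only $2n$ breakpoints, so the superlinear part is carried entirely by crossings. Your second step then defers the ``technical heart'' (the cascade elimination) to future work, and it rests on a misreading of the hypothesis: $|\bigcup D_i|=cd$ bounds the \emph{measure} of the union of the domains, not the number of distinct domain endpoints, so there is no ``pool of $O(d)$ distinct boundary locations'' --- there can be $2n$ distinct endpoints. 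As written, neither half of your decomposition closes.

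The idea you are missing, which is the paper's actual argument, is to exploit that every domain has the \emph{same} length $d$ while their union has length only $cd$: partition $\bigcup D_i$ into $O(c)$ consecutive intervals of length $d$. Because each $D_i$ also has length $d$, its intersection with any such block is an interval anchored at one of the block's endpoints, so $D_i$ meets at most two blocks and, within each block, the restricted functions split into a left-anchored family and a right-anchored family. A family of partial curves whose domains share a common endpoint (hence are nested) and that pairwise cross at most once has a lower envelope of linear complexity by the order-$2$ Davenport--Schinzel bound --- a function leaving such an envelope either crossed its successor or reached its free endpoint, and in the latter case it never returns. Summing over the constantly many blocks and merging the two envelopes per block (two curves crossing $O(1)$ times between consecutive breakpoints) gives $O(n)$. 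In short, the hypothesis is used to reduce the partial-function problem to a constant number of common-endpoint problems, not to cap the number of boundary events; your plan would need this reduction (or an equivalent one) to succeed.
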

\begin{proof}
  Let $D_F = {\bigcup}{D_i}$.
  Since there is a constant $c$,
  we can construct a finite partition $A$ of $D_F$ such that each element of
  $A$ has size $d$, except one element of size smaller than or equal to $d$.
  Then every $D_i$
  intersects at most two elements of $A$.
  Let $l_j$ be the left endpoint of $t_j \in A$ and assume $l_j<l_{j'}$ for all $j<j'$.
  Let $L(l_j)$ and $R(l_j)$ be the sets of functions $f'_i$ which are functions $f_i$ restricted to 
  $D_i \cap [l_{j-1},l_j]$ and $D_i \cap [l_j,l_{j+1}]$, respectively.  Since any
  two functions $f'_i, f'_{k} \in L(l_j)$ (or two in $R(l_j)$) 
  intersect in their graphs at most once and their domains have the same start point or end point,
  the sequence of the lower envelope of $L(l_j)$ (or $R(l_j)$) is a Davenport-Schinzel sequence
  of order 2. Then the lower envelope of set $L(l_j)$
  (and of set $R(l_j)$) has complexity $O(k)$, where
  $k=|L(l_j)|$ (or $k=|R(l_j)|$). Since $|\bigcup L(l_j)|=|\bigcup R(l_j)|=O(n)$, 
  the lower envelope of
  $\bigcup L(l_j)$ and $\bigcup R(l_j)$ has complexity $O(n)$.
  
  Now, consider a new partition of $D_F$ obtained by 
  slicing it at every point at which the lower envelopes of $\bigcup L(l_j)$ and
  $\bigcup R(l_j)$ change combinatorially.  
  Since there are $O(n)$ such points and the lower envelope of $F$ restricted to 
  a component of the new partition has constant complexity,
  the complexity of the lower envelope of $F$ is $O(n)$.
\end{proof}
\fi
\iffull
\begin{figure}[ht]
  \centering
  \includegraphics[width=.5\textwidth]{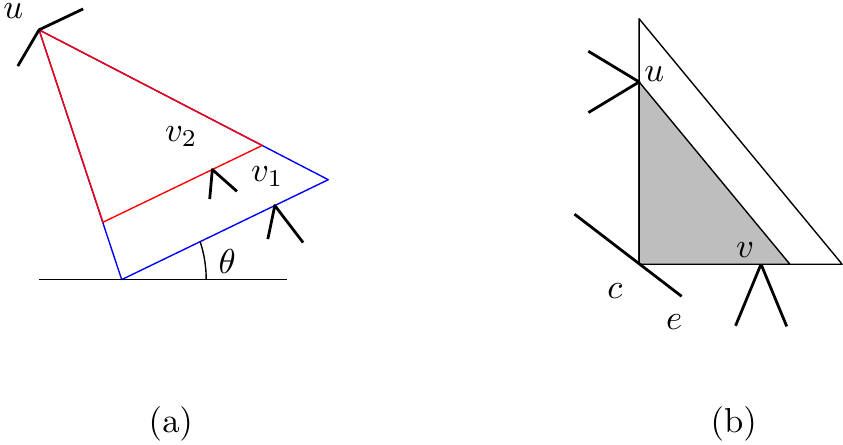}
  \caption{(a) For a fixed angle $\theta$, $T_1\subseteq T_2$ or $T_2\subseteq T_1$
	for any two $\theta$-aligned triangles $T_1$ and  $T_2$ with $u$ at a corner $c$ 
	and a vertex $v_1$ and $v_2$ on the side opposite to $c$, respectively.
  (b) An $(\alpha,\beta)$-triangle $T$ satisfying a contact set $C=\{u,v,e\}$ of type $\mathsf{X_1}$ contains an $(\alpha,\beta)$-triangle $T'$ which 
  shares a corner $c$ with $T$ lying on $e$, has a corner at $u$,
  and has $v$ on the side opposite to $c$.
  }
  \label{fig:triangle contacts}
\end{figure}
\fi
\iffull
\begin{lemma}
  \label{lem:triangle_num}
  For a fixed vertex $u$ of $P$, there are $O(n)$ maximal $(\alpha,\beta)$-triangles $T$
  with a corner at $u$ such that $T$ has a corner on the interior of an edge of $P$
  and has a vertex of $P$ on the side opposite to $u$.
\end{lemma}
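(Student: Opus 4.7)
The plan is to fix the vertex $u$ of $P$ and, up to a constant-factor loss for the choice of which corner of $T$ lies at $u$, to assume that the $\alpha$-corner of $T$ is at $u$. For each vertex $v\neq u$ of $P$, I will consider the one-parameter family of $(\alpha,\beta)$-triangles $T_v(\theta)$ having the $\alpha$-corner at $u$, orientation $\theta$, and the point $v$ on the open interior of the side opposite to $u$. Let $D_v\subseteq[0,2\pi)$ be the set of $\theta$ for which $T_v(\theta)$ exists, and let $f_v(\theta)$ denote the length of the side of $T_v(\theta)$ emanating from $u$. I will then apply Lemma~\ref{lem:lower_envelope} to $\{f_v\}$ and charge each triangle of the lemma to a single feature of the resulting lower envelope.

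To apply Lemma~\ref{lem:lower_envelope} I will verify three properties. First, each $D_v$ is an interval of width exactly $\alpha$, because $v$ lies on the opposite side of $T_v(\theta)$ precisely while the direction from $u$ to $v$ falls inside the cone of opening angle $\alpha$ between the two sides from $u$. Second, all $D_v$ lie in $[0,2\pi)$, so $|\bigcup_v D_v|\leq 2\pi=(2\pi/\alpha)\cdot\alpha$, supplying the constant required. Third, any two graphs $f_{v_1}$ and $f_{v_2}$ cross at most once: at a crossing, the triangles $T_{v_1}(\theta)$ and $T_{v_2}(\theta)$ coincide and their shared opposite side contains both $v_1$ and $v_2$, so its direction matches that of $v_1v_2$, which pins down $\theta$; away from the crossing, the nesting property depicted in Figure~\ref{fig:triangle contacts}(a) (the two triangles are homothetic from $u$) gives a strict ordering. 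Lemma~\ref{lem:lower_envelope} then certifies that the lower envelope of $\{f_v\}$ has complexity $O(n)$.

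Next I will argue that every triangle $T$ of the lemma corresponds to a point on this envelope. If $T$ is inscribed in $P$ with orientation $\theta_T$ and with $v$ on its opposite side, then no other polygon vertex $v'$ can satisfy $f_{v'}(\theta_T)<f_v(\theta_T)$: such a $v'$ would lie strictly in the interior of $T$, contradicting the fact that a vertex of $P$ cannot lie in the interior of $P\supseteq T$. Hence $f_v$ achieves the envelope value at $\theta_T$. Within a single envelope edge the binding $v$ is fixed, and the second corner of $T_v(\theta)$ traces a monotone circular arc obtained by rotating and scaling about $u$; the triangles of the lemma on this envelope edge are precisely those $\theta$ at which this arc meets the interior of a polygon edge.

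The hard part will be the final charging step: bounding the total number of arc--edge meetings across all $O(n)$ envelope edges by $O(n)$. I plan to handle this by observing that $f_v$ is monotone along each envelope edge and that consecutive envelope arcs join at a common endpoint on $\partial P$, so each polygon edge can be charged only $O(1)$ times as the envelope sweeps across it; this should yield the desired $O(n)$ bound and complete the proof.
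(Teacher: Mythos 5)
Your construction of the lower envelope matches the first half of the paper's proof: the paper likewise defines, for each vertex $v$, the area function of the $\theta$-aligned $(\alpha,\beta)$-triangle with a corner at $u$ and $v$ on the opposite side, uses the nesting property (Figure~\ref{fig:triangle contacts}(a)) to argue that only the pointwise-smallest such triangle can be inscribed in $P$, and invokes Lemma~\ref{lem:lower_envelope} to bound the envelope complexity by $O(n)$. The gap is exactly where you flag it, and your proposed fix does not work. The claim that ``consecutive envelope arcs join at a common endpoint on $\partial P$'' is false: an envelope breakpoint between $f_{v_1}$ and $f_{v_2}$ corresponds to a triangle carrying both $v_1$ and $v_2$ on its opposite side, and the tracked second corner of that triangle need not lie on $\partial P$ at all. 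Nor is $f_v$ monotone along an envelope edge in general: by the law of sines $f_v(\theta)=|uv|\sin(\angle u v t_1)/\sin\beta$, which is merely unimodal over its domain. Without these two claims your charging collapses to the trivial bound --- $O(n)$ envelope arcs, each a circular arc meeting each of the $n$ polygon edges at most twice, i.e.\ $O(n^2)$ incidences --- and since you dropped the inscribedness requirement when you reduced the problem to ``the arc meets the interior of a polygon edge,'' there is no structural reason the raw incidence count is $O(n)$.

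The paper closes the gap by introducing a second function $F_e(\theta)$: the area of the $\theta$-aligned $(\alpha,\beta)$-triangle with a corner at $u$ and another corner on the interior of an edge of $P$ that is \emph{inscribed in} $P$. Inscribedness makes $F_e$ single-valued (the edge corner must be the first boundary point hit by the relevant ray from $u$), and its graph has $O(n)$ pieces because distinct edges occupy disjoint angular ranges. The triangles of the lemma are then intersection points of the graphs of $F_v$ and $F_e$; refining $[0,2\pi)$ at the $O(n)$ breakpoints of both functions leaves $O(n)$ intervals, on each of which a single vertex-constrained piece meets a single edge-constrained piece, and the circular-arc-versus-segment argument gives $O(1)$ intersections per interval, hence $O(n)$ in total. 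To salvage your corner-trajectory picture you would essentially have to reintroduce this second envelope: restrict to inscribed triangles so that for each $\theta$ the edge-constrained corner position is unique, and then count intersections of two single-valued piecewise curves rather than raw arc--edge incidences.
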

\begin{proof}
For a fixed $u$, consider a maximal $(\alpha,\beta)$-triangle $T$ with a corner at $u$ 
such that $T$ has a corner on the interior of an edge $e$ of $P$
  and has a vertex $v$ of $P$ on the side opposite to $u$.
Let $\theta_T$ denote the angle of rotation such that the base of $T$ is parallel 
to the $x$-axis of $\ct[\theta_T]$.

To count such triangles, 
let $\mathcal{T}_v$ denote all nontrivial $(\alpha,\beta)$-triangles with a corner at $u$,
with another vertex of $P$ on the side opposite to $u$. Observe that 
for a fixed angle $\theta$, $T_1\subseteq T_2$ or $T_2\subseteq T_1$
for any two $\theta$-aligned triangles $T_1, T_2\in\mathcal{T}_v$.
Moreover,
no $\theta$-aligned triangle of $\mathcal{T}_v$, except the smallest one, is inscribed
in $P$. See Figure~\ref{fig:triangle contacts}(a).
Let $F_v(\theta)$ be the function that returns the area of the smallest $\theta$-aligned 
triangle of $\mathcal{T}_v$ at angle $\theta$.
Let $\mathcal{T}_e$ denote all nontrivial $(\alpha,\beta)$-triangles with a corner at $u$,
and another corner on the interior of an edge of $P$ which can be inscribed in $P$. 
Observe that for a fixed angle $\theta$, there is at most one $\theta$-aligned triangle
in $\mathcal{T}_e$ as it is required to satisfy the constraint to be inscribed in $P$.
Let $F_e(\theta)$ be the function 
that returns the area of the $\theta$-aligned 
triangle of $\mathcal{T}_e$ at angle $\theta$.
Then $T$ occurs at the angle of an intersection of the graphs of $F_v$ and $F_e$.

Now we count the intersections of the graphs of $F_v$ and $F_e$.
For any vertex $v$ of $P$, the domain of $\area{T_v(\theta)}$ has size $\pi-\alpha-\beta$,
where $T_v(\theta)$ is the $\theta$-aligned $(\alpha,\beta)$-triangle with $u$ at a corner
and $v$ on its side opposite to $u$.
For any pair of vertices $v, v'$ of $P$, $(\area{T_{v_1}(\theta)}$, $\area{T_{v_2}(\theta)}$ have 
at most one intersection. 
The union of all domains of $\area{T_v(\theta)}$'s is $[0, 2 \pi )$.
Thus, by Lemma~\ref{lem:lower_envelope}, the complexity of $F_v$ is $O(n)$.

  For any two edges $e_1,e_2$ of $P$, the portions of the graph of $F_e$
  corresponding to $\area{T_{e_1}(\theta)}$ and $\area{T_{e_2}(\theta)}$ are disjoint 
  since $\ray[\pi+\alpha](u)$ hits only one edge at an angle $\theta$, where $T_e(\theta)$
  denotes the $\theta$-aligned $(\alpha,\beta)$-triangle with $u$ at a corner and $e$ on its anchor. 
  
  For any pair of a vertex $v$ and an edge $e$ of $P$, the graphs of
  $\area{T_v(\theta)}$ and $\area{T_e(\theta)}$ intersect at most twice since 
  the trajectory of point $c$ such that $\angle acv=\beta$ forms a circular arc, 
  and a circular arc intersects a line segment at most twice.
  Thus, the graphs of $F_v$ and $F_e$ intersect $O(n)$ times 
  for a fixed vertex $u$.
  \end{proof}
  \fi
In the following lemma, we bound the number of $(\alpha,\beta)$-triangles 
satisfying one of the types $\mathsf{X_i}$ for $i=1,\ldots,5$.
\begin{lemma}
  \label{lem:subdivision_events}
  There are $O(n^2)$ events that occur to \sdt{\theta} during the rotation.
\end{lemma}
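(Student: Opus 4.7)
I would split the count by event type. Vertex events are bounded directly from their characterization: each vertex event requires that an edge of $P$ incident to a reflex vertex attain one of the three critical inclinations $0$, $\alpha$, or $\pi-\beta$; since $P$ has $O(n)$ edges and each attains each critical inclination at most once as $\theta$ ranges over $[0,2\pi)$, this gives $O(n)$ vertex events in total.

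For the remaining three event kinds (edge, align, boundary), each occurs at a vertex $w$ of $\sdt{\theta}$ with $|C(w)|\ge 4$, so $C(w)$ falls into one of the five types $\mathsf{X_1},\ldots,\mathsf{X_5}$ defined above. It therefore suffices to show that, over $\theta\in[0,2\pi)$, the combinatorially distinct maximal $(\alpha,\beta)$-triangles inscribed in $P$ whose contact set is of a given type number at most $O(n^2)$. For the rigid types $\mathsf{X_3}$, $\mathsf{X_4}$, and $\mathsf{X_5}$, I would argue by direct combinatorial enumeration: in each such type, two contact elements already pin an entire line supporting a side of the triangle (two polygon vertices on a common side, or two corner contacts at a single polygon vertex), and a third contact element rigidly fixes the triangle up to finitely many solutions. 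Iterating over the $O(n^2)$ choices of pinning pair, and then over $O(1)$ completions per choice (obtained from the remaining structural constraint of the type), gives $O(n^2)$ triangles for each of these three types.

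The flexible types $\mathsf{X_1}$ and $\mathsf{X_2}$ are where the real work lies, since there one contact element still leaves a continuous one-parameter family of candidate triangles. For type $\mathsf{X_1}$, I would fix the polygon vertex $u$ providing a side contact on one of the two sides incident to the corner $c$; Lemma~\ref{lem:triangle_num} then bounds by $O(n)$ the number of maximal $(\alpha,\beta)$-triangles with $u$ on a side and a corner lying in the interior of a polygon edge, and summing over the $n$ choices of $u$ yields the desired $O(n^2)$ bound. For type $\mathsf{X_2}$, after fixing one of the two edges of $P$ carrying a corner contact, an analogous envelope argument via Lemma~\ref{lem:lower_envelope} bounds the number of maximal triangles with the remaining contact constraints by $O(n)$ per fixed edge, and summing over the $n$ edges again gives $O(n^2)$.

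The main obstacle is the envelope argument that underlies types $\mathsf{X_1}$ and $\mathsf{X_2}$. To invoke Lemma~\ref{lem:lower_envelope} one must check its hypotheses for the family of area functions of the one-parameter triangle families indexed by polygon vertices or edges: each pair of such area functions must intersect at most once, and the union of their rotation-angle domains must have bounded relative size. This in turn reduces to the geometric fact, already used in the proof of Lemma~\ref{lem:triangle_num}, that the locus of the free corner of an $(\alpha,\beta)$-triangle whose other two corners are pinned to a line is a circular arc, so that two such loci, or a circular arc and a line segment, meet in at most two points.
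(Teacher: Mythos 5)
Your overall decomposition matches the paper's: vertex events are $O(n)$ by the critical-inclination argument, and the remaining events are charged to maximal $(\alpha,\beta)$-triangles whose contact sets fall into types $\mathsf{X_1},\ldots,\mathsf{X_5}$, with Lemma~\ref{lem:triangle_num} handling $\mathsf{X_1}$ per fixed vertex and an envelope argument handling $\mathsf{X_2}$. The gap is in your treatment of $\mathsf{X_4}$ and $\mathsf{X_5}$, which you file under ``rigid'' types countable by $O(n^2)$ pinning pairs times $O(1)$ completions. A contact set of type $\mathsf{X_4}$ can consist of a corner pinned at a polygon vertex $u$ together with two side contacts $v_1,v_2$ on the side opposite to that corner, and a contact set of type $\mathsf{X_5}$ consists of three side contacts; in both cases three polygon vertices are involved, so naive enumeration gives $O(n^3)$ candidates. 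The assertion ``$O(1)$ completions per pinning pair'' is exactly what is unproved: fixing the pair $(v_1,v_2)$ on a common side pins the supporting line and the orientation, and each choice of the third vertex then yields at most one triangle, but nothing rules out $\Theta(n)$ third vertices producing valid inscribed maximal triangles for a single pair. (Taking the pinned corner at $u$ as the ``pinning pair'' is no better: adding a single further element still leaves a one-parameter family, so three elements do not rigidly fix the triangle.)

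The paper closes precisely this hole with two further ideas. For the problematic $\mathsf{X_4}$ subcase it fixes the vertex $u$ and reuses the machinery of Lemma~\ref{lem:triangle_num}: a configuration with both $v_1$ and $v_2$ on the side opposite to $u$ corresponds to an intersection of the area graphs of the one-parameter families $T_{v_1}(\theta)$ and $T_{v_2}(\theta)$, and since only the pointwise-smallest such triangle can be inscribed, these configurations are breakpoints of a lower envelope whose complexity is $O(n)$ by Lemma~\ref{lem:lower_envelope}, giving $O(n^2)$ over all $u$. For $\mathsf{X_5}$ it uses a charging argument: a triangle of type $\mathsf{X_5}$ contains a smaller $(\alpha,\beta)$-triangle with a corner at one of its side-contact vertices whose contact set is of type $\mathsf{X_4}$, so the $\mathsf{X_5}$ count is dominated by the $\mathsf{X_4}$ count. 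The envelope argument you already deploy for $\mathsf{X_1}$ and $\mathsf{X_2}$ is the right tool; you need to recognize that it, and not rigidity, is also what saves $\mathsf{X_4}$ and $\mathsf{X_5}$.
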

\iffull
\begin{proof}
  Observe that there are $O(n)$ align events.
  Consider an $(\alpha,\beta)$-triangle $T$ satisfying a contact set $C$ belonging to type $\mathsf{X_1}$.
   Then $T$ contains an $(\alpha,\beta)$-triangle $T'$ which 
   shares a corner $c$ with $T$ lying on an edge, has a corner at a vertex $u \in C$,
   and has a vertex of $P$ on the side opposite to $c$.
   See the gray triangle in Figure~\ref{fig:triangle contacts}(b).
   Thus, we find all such triangles $T'$ for every vertex $u$.
   By Lemma~\ref{lem:triangle_num}, there are $O(n)$ such triangles for a vertex $u$ of $P$,
   and in total $O(n^2)$ $(\alpha,\beta)$-triangles satisfying the contact sets of type $\mathsf{X_1}$.
    
   Consider an $(\alpha,\beta)$-triangle $T$ satisfying a contact set $C$ belonging to type $\mathsf{X_1}$.
   Then there is at least one vertex $v$ of $P$ such that
   the boundary of $P$ contains $v$.
   It is obvious that the number of $(\alpha,\beta)$-triangles with corner at $v$ and satisfies a contact set of type $\mathsf{X_2}$.
   Thus, assume $v$ is side contact of $T$.
   Let $e_1$ and $e_2$ be two edges that $C$ contains
   and $x$ be the vertex or edge of $P$ which is contained in $C$ and distinct to $v$, $e_1$, and $e_2$.
   We can consider two $\theta$-aligned $(\alpha,\beta)$-triangles $T_1(\theta)$ and $T_2(\theta)$ which satisfies
   the contact set consists of $\{v,x,e_1\}$ and $\{v,x,e_2\}$ respectively.
   Similar to Lemma~\ref{lem:triangle_num}, we consider the number of intersection of the graphs of $\area{T_1(\theta)}$ and $\area{T_2(\theta)}$. 
 Then we can prove that the number of such intersection is $O(n^2)$.
 Thus, the number of $(\alpha,\beta)$-triangles satisfying a contact set of $\mathsf{X_4}$ is $O(n^2)$. 
  
 Observe that there are constant number contact set $C$ of type $\mathsf{X_3}$ 
   if two vertices of $P$ which lies on a same side is given.
   And each $C$ has constant number of feasible orientations.
   The number of contact sets of type $\mathsf{X_4}$, 
   except the case of contact set contain two side contacts $v_1$ and $v_2$ on the side opposite to 
   the corner which is on a vertex $u$ of $P$, 
   is also $O(n^2)$ because of the same reason.
   The number of contact set of excepted case can be considered in a similar way to Lemma~\ref{lem:triangle_num},
   by considering two triangles $T_{v_1}$ and $T_{v_2}$
   such that both $T_{v_1}$ and $T_{v_2}$ has a corner at $u$ and 
    $v_i$ is on the side opposite to $u$ of $T_{v_i}$. 
   Thus, the number of $(\alpha,\beta)$-triangles satisfying a contact set of $\mathsf{X_4}$ is $O(n^2)$. 
   
   An $(\alpha,\beta)$-triangle $T$ satisfying a contact set of type $\mathsf{X_5}$ contains an
   $(\alpha,\beta$-triangle $T'$ which has a corner at $v$, where $v$ is a side contact of $T$ 
   similar to type $\mathsf{X_1}$. Since the contact set of $T'$ is of type $\mathsf{X_4}$,
    the number of contact set of $\mathsf{X_5}$ is $O(n^2)$. 
   
    Therefore, the number of events that occurs to $\sdt{\theta}$ during the rotation is also $O(n^2)$.
 \end{proof}
\fi
To capture the combinatorial changes and maintain \sdt{\theta} during
the rotation, we construct and maintain the following data structures: (1) An event
queue $\que$ which is a priority queue
that stores events indexed by their angles. 
(2) A planar graph representing the combinatorial structure of
\sdt{\theta}.
(3) For each edge $e$ of $P$, a balanced binary search trees $\tree(e)$.
The tree store degree-1 vertices of \sdt{\theta} in order along $e$.

In the initialization, we construct \sdt{0}, and then $\tree(e)$ for each edge of $P$.  
Then we initialize $\que$ with the events defined by the vertices of $\sdt{0}$ and
the vertex events and boundary events defined by the polygon vertices. 
For each vertex $v$ of \sdt{0}, we compute the angle at which
$v$ and a neighboring vertex of $v$ meet (edge event), an edge incident to 
$v$ splits into two (align event), or 
$v$ meets a polygon vertex (boundary event). 
These angles can be computed in $O(1)$ time for each
$v$ using the contacts corresponding to $v$.
Then the size of $\que$ is $O(n)$ which can be constructed in $O(n\log n)$ time.

%

We update each data
structure whenever an event occurs.  Note that each event changes
a constant number of elements of \sdt{\theta}, creates 
a constant number of events to $\que$, 
and removes a constant number of events from $\que$. 
Thus we can update the subdivision in $O(1)$ time, 
and the tree $\tree(e)$ in $O(\log n)$ time for edge $e$ where a boundary event occurs.

\begin{lemma} \label{lem:subdivision.construction}
  We can construct the subdivision of $P$ and maintain it 
  in $O(n^2 \log n)$ time using $O(n)$ space during rotation.
\end{lemma}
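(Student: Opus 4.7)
The plan is to combine the event-count bound from Lemma~\ref{lem:subdivision_events} with an amortized-per-event processing cost of $O(\log n)$, while arguing that at any angle only $O(n)$ elements are stored across all the auxiliary structures.

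First, I would bound the initialization cost. Constructing $\sdt{0}$ takes $O(n \log n)$ time and $O(n)$ space by Lemma~\ref{lem:construct_subdivision}. Initializing the trees $\tree(e)$ for all edges $e$ of $P$ requires sorting the degree-$1$ vertices of $\sdt{0}$ incident to each edge, which takes total time $O(n \log n)$ and uses $O(n)$ space summed over all edges (since $\sdt{0}$ has $O(n)$ vertices). Inserting the $O(n)$ initial events into $\que$ likewise takes $O(n \log n)$ time. Each such event is computed in $O(1)$ time from the contact set already stored at the relevant vertex.

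Next I would handle the sweep in $\theta$. By Lemma~\ref{lem:subdivision_events}, the total number of events over $\theta \in [0, 2\pi)$ is $O(n^2)$. For each event, I would verify that it performs a constant number of local modifications to the combinatorial planar graph representing $\sdt{\theta}$, inserts and removes a constant number of entries from $\que$, and touches a constant number of the trees $\tree(e)$ with a constant number of insertions and deletions each. The key observation is that every event is localized: an edge event affects one edge and its neighbors; a vertex event changes incidence at one reflex vertex; an align event splits or merges two consecutive edges at a vertex $w$; a boundary event moves one vertex across one polygon edge. Since $|\que| = O(n)$, $|\tree(e)| \le |\sdt{\theta}| = O(n)$ and the adjacency graph has constant local degree at every change, each event costs $O(\log n)$ time. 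Multiplying yields $O(n^2 \log n)$ total.

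For the space bound, I would argue that at every instant $\sdt{\theta}$ has $O(n)$ combinatorial complexity (the same argument as in Lemma~\ref{lem:construct_subdivision}, since the types of contact sets and their induced structure at a fixed orientation are identical), and that $\que$ only needs to hold events associated with currently present features of $\sdt{\theta}$ plus the $O(n)$ vertex/boundary events parameterized by polygon vertices. Hence $|\que| = O(n)$ throughout, and the total size of the trees $\tree(e)$ is bounded by the number of degree-$1$ vertices of $\sdt{\theta}$, which is $O(n)$. Combining gives $O(n)$ working space.

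The main obstacle I expect is verifying that each of the four event types really creates and destroys only $O(1)$ scheduled future events, and in particular that no cascade of events is triggered at a single $\theta$; this needs a careful case analysis using the classification of contact sets $\mathsf{X_1}$--$\mathsf{X_5}$ together with the fact that the angle at which each potential future event fires is determined in $O(1)$ time from a constant-size subset of contacts stored at the incident vertices of $\sdt{\theta}$. Once that local bookkeeping is established, the $O(n^2 \log n)$ time and $O(n)$ space bounds follow immediately.
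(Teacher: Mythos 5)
There is a genuine gap in your treatment of vertex events. You assert that ``a vertex event changes incidence at one reflex vertex'' and therefore costs $O(\log n)$ like the other event types, but this is exactly the case where locality fails. At a vertex event an edge of $\sdt{\theta}$ incident to a reflex vertex $v$ \emph{appears or disappears}; the appearing edge is a new ray launched into the subdivision, and (as in the plane-sweep construction of Lemma~\ref{lem:construct_subdivision}) such a ray can intersect other subdivision edges, terminate them, spawn successor rays and bend points, and so on. The resulting change to the combinatorial structure of $\sdt{\theta}$ can involve $\Theta(n)$ features, not $O(1)$, and symmetrically a disappearing edge can force a cascade of merges. Your own closing paragraph flags the need to rule out cascades, but for vertex events no such argument exists --- the cascade is real. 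The paper sidesteps this entirely: at each vertex event it \emph{reconstructs} $\sdt{\theta}$, $\que$, and every $\tree(e)$ from scratch in $O(n\log n)$ time, and since there are only $O(n)$ vertex events (each polygon edge attains inclination $0$, $\alpha$, or $\pi-\beta$ only $O(1)$ times during the rotation), this contributes $O(n^2\log n)$ overall, matching the budget. The rest of your argument (initialization, the $O(\log n)$ handling of edge, align, and boundary events, the $O(n)$ space bound) agrees with the paper, so the fix is to separate vertex events from the other three types and charge them a full rebuild rather than a local update.
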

\iffull
\begin{proof}
  By Lemma~\ref{lem:construct_subdivision}, we can construct \sdt{0}
  in $O(n\log n)$ time using $O(n)$ space.  We can construct all
  the data structures for maintaining \sdt{\theta} in $O(n\log n)$ time
  using $O(n)$ space.  By Lemma~\ref{lem:subdivision_events}, there
  are $O(n^2)$ events during the rotation. Each event, except vertex events,
  can be handled in $O(\log n)$ time.
  For each vertex event, we reconstruct $\sdt{\theta}$, $\que$ and $\tree(e)$ for
  each edge $e$ of $P$.
  Since there are $O(n)$ vertex events and it takes $O(n\log n)$ time for the reconstruction,
  it takes $O(n^2 \log n )$ time to handle all vertex events.
   The space complexity remains to be $O(n)$ since the complexity of the data structures 
   is $O(n)$.
\end{proof}
\fi
\subsubsection{Computing a largest \texorpdfstring{$(\alpha,\beta)$}{(a,b)}-triangle}
Whenever a vertex $w$ appears on the subdivision or
$\seg{w}$ changes at $\theta_0$ by an event occurring at angle $\theta_0$, 
we store $\theta_0$ at $w$.
Whenever $\seg{w}$ changes or $w$ disappears
from the subdivision at angle $\theta_1$, we compute the largest
$\theta$-aligned $(\alpha,\beta)$-triangle satisfying
$\seg{w}$ for $\theta \in [\theta_0,\theta_1]$, where
$\theta_0$ is the angle closest from $\theta_1$ at which $w$ appears or $\seg{w}$ changes with 
$\theta_0<\theta_1$.
We do this on every vertex $w$ of the subdivision, and then return the largest triangle
among the triangles on the vertices.
We can compute the largest one among all 
$\theta$-aligned  $(\alpha,\beta)$-triangles satisfying $\seg{w}$ 
for $\theta\in [\theta_0,\theta_1]$
in $O(1)$ time using the area function of the $\theta$-aligned $(\alpha,\beta)$-triangle satisfying $\seg{w}$.
Thus, from Lemma~\ref{lem:subdivision.construction}, we have the following theorem.

\begin{theorem}
  Given a simple polygon $P$ with $n$ vertices in the plane and two angles $\alpha, \beta$,
  we can find a maximum-area $(\alpha,\beta)$-triangle 
  that can be inscribed in $P$ in $O(n^2 \log n)$ time using $O(n)$ space.
\end{theorem}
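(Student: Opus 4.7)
The plan is to enumerate all candidate maximal $(\alpha,\beta)$-triangles by piggybacking on the rotational maintenance of the subdivision established in Lemma~\ref{lem:subdivision.construction}. By Lemma~\ref{lem:max_subdivision} applied at every orientation $\theta$, any maximum-area $\theta$-aligned $(\alpha,\beta)$-triangle inscribed in $P$ has its anchor at a vertex of $\sdt{\theta}$. Consequently, the global optimum coincides with $T(w)$ for some vertex $w$ of some $\sdt{\theta}$, and it suffices to examine the family of such triangles encountered during the rotation.

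The key observation is that while the combinatorial subdivision does not change at a vertex $w$, the contact set $\seg{w}$ is fixed, so $T(w)$ varies with $\theta$ as an explicit trigonometric function whose coefficients are determined by the polygon features recorded in $\seg{w}$. Concretely, whenever a vertex $w$ appears on the subdivision at some angle $\theta_0$, or $\seg{w}$ changes at $\theta_0$ due to an event, I would record $\theta_0$ at $w$. When the next event at $w$ occurs at some angle $\theta_1>\theta_0$ (either $\seg{w}$ changes again or $w$ disappears), $\seg{w}$ is constant throughout $[\theta_0,\theta_1]$; maximizing $\area{T(w)}$ over this interval then reduces to the single-variable optimization of a closed-form expression, solvable in $O(1)$ time. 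The algorithm keeps the best value seen across all vertices and all such intervals and returns it at the end of the sweep.

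For the complexity, Lemma~\ref{lem:subdivision.construction} already provides $O(n^2\log n)$ time and $O(n)$ space to construct and maintain the subdivision throughout the rotation, processing $O(n^2)$ events (Lemma~\ref{lem:subdivision_events}). Each event creates, modifies, or removes a constant number of subdivision vertices, hence triggers a constant number of new $[\theta_0,\theta_1]$ intervals to close and process. The $O(1)$ per-interval optimization combined with the $O(\log n)$ per-event update cost sums to $O(n^2\log n)$ overall, while the auxiliary per-vertex bookkeeping adds only $O(1)$ extra space per subdivision vertex, preserving the $O(n)$ space bound.

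The main obstacle is ensuring that the enumeration is both exhaustive and non-redundant: I must verify that every maximal $(\alpha,\beta)$-triangle across all orientations is captured, even when several events coincide at the same $\theta$ or when $w$ is involved in cascading changes, and that the optimum in $[\theta_0,\theta_1]$ can be extracted without re-examining already-counted configurations. Exhaustiveness follows from the per-angle guarantee of Lemma~\ref{lem:max_subdivision} together with the continuity of $\area{T(w)}$ over an event-free interval, and the $O(1)$ per-interval maximization is routine because a fixed $\seg{w}$ determines each corner of $T(w)$ as an explicit trigonometric function of $\theta$, so the area is a bounded-complexity function whose critical points are computable in constant time.
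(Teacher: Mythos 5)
Your proposal is correct and follows essentially the same route as the paper: apply Lemma~\ref{lem:max_subdivision} at every orientation, maintain $\sdt{\theta}$ via Lemma~\ref{lem:subdivision.construction}, record the angle $\theta_0$ at each subdivision vertex $w$ when it appears or $\seg{w}$ changes, and at the next event for $w$ optimize the closed-form area function of $T(w)$ over the event-free interval in $O(1)$ time, keeping the global best. The complexity accounting ($O(n^2\log n)$ time from the $O(n^2)$ events of Lemma~\ref{lem:subdivision_events} at $O(\log n)$ each, with $O(n)$ space) matches the paper's argument.
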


\section{Largest \texorpdfstring{$\alpha$}{a}-triangles in a Simple Polygon}
In this section, we compute a largest $\alpha$-triangle that can be inscribed
in a simple polygon $P$ with $n$ vertices in the plane.
Without loss of generality, 
we assume no three vertices of $P$ are collinear.
\iffull \subsection{Largest axis-aligned \texorpdfstring{$\alpha$}{a}-triangles} \fi
We consider a largest axis-aligned $\alpha$-triangle that can be inscribed in $P$. 
\iffull We use $\sdt{\theta}$ to denote the subdivision of $P$ defined for two angles $\alpha$ and $\theta$.
We say a contact set $C$ is \emph{feasible} at an angle $\theta_0$ if 
there exists an axis-aligned $(\alpha,\theta_0)$-triangle inscribed in $P$
satisfying $C'\supseteq{C}$.  
For a feasible interval $I$ of a contact set $C$ and $\theta_0\in I$, 
we say an axis-aligned $(\alpha,\theta_0)$-triangle $T$ satisfying a contact set $C_1 \supseteq C$ is \emph{maximal} in $I$ if 
there is no $\theta' \in I$ such that an axis-aligned $(\alpha,\theta')$-triangle satisfying a contact set $C_2 \supseteq C$ has larger area.     
The point at which $\ray[\theta](p)$ meets the boundary of 
a simple polygon $Q\subseteq P$ for the first time other than its source point 
is called the \emph{foot of} $\ray[\theta](p)$ 
\emph{on} $Q$ and denoted by $\foot[\theta](Q,p)$.
In an $\alpha$-triangle, we say the side opposite to the anchor is 
\emph{the diagonal} of the triangle. 
For a point $p\in P$, we define the \emph{visibility region} of $p$ as
$\vis(p)=\{x\in P\mid px\subset P\}$.  For a ray $\eta$ and an angle
$\theta$, we define the $\theta$\emph{-visibility region} $\visl[\theta](\eta)$ of $\eta$ 
as the set of points $x$ in the segment $y\foot[\theta](P,y)$ contained in $P$ 
for every  $y\in \eta$.

\fi
First, we find the largest axis-aligned $(\alpha,\frac{\pi-\alpha}{2})$-triangle $T$
using the algorithm in Section~\ref{sec:axis_aligned_fixed_angle}.
Let $d$ be the diameter of $P$.
Since every side of a triangle inscribed in $P$ has length less than or equal to $d$,
any $(\alpha,\theta)$-triangle that can be inscribed in $P$ 
has area less than or equal to $\frac{d^2\sin\alpha\sin\theta}{2\sin(\alpha+\theta)}$.
Thus it suffices to consider $(\alpha,\theta)$-triangles for $\theta>\theta_T$ to find a largest axis-aligned $\alpha$-triangle,
where $\theta_T$ satisfies $\frac{d^2\sin\alpha\sin\theta_T}{2\sin(\alpha+\theta_T)}=\area{T}$.
To find a largest axis-aligned $\alpha$-triangle, 
we choose an angle $\theta_0 \le \theta_T$, construct $\sdt{\theta_0}$ and maintain it while increasing $\theta$ from $\theta_0$ to $\pi-\alpha$.
Note that we can compute $T$ in $O(n\log n)$ time by Theorem~\ref{thm:axis_alpha_beta}
 and we can find $\theta_0$ in $O(1)$ time.
\iffull
\begin{figure}[ht]
  \begin{center}
    \includegraphics[width=0.8\textwidth]{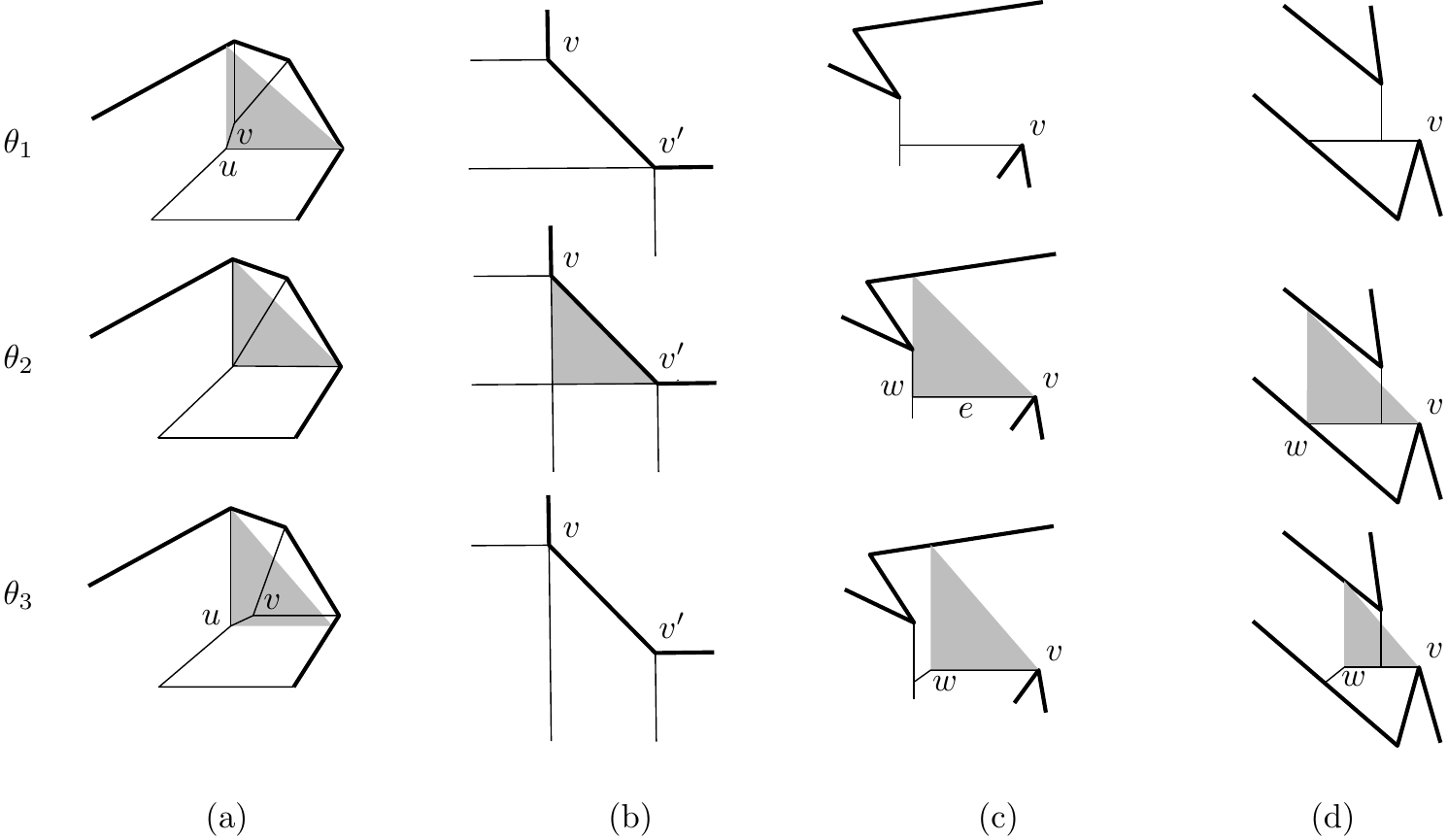}
  \end{center}
  \caption{Changes of the subdivision $\sdt{\theta}$ while increasing $\theta$ at $\theta_1,\theta_2,\theta_3$
  with $\theta_1<\theta_2<\theta_3$. 
  (a) An edge event. 
  (b) A vertex event. 
  (c) An align event. 
  (d) A boundary event.
  }
  \label{fig:subdivision_events_alpha}
\end{figure}
\fi
The combinatorial structure of $\sdt{\theta}$ changes while increasing $\theta$.
We use the definitions for the combinatorial changes of $\sdt{\theta}$
in Section~\ref{sec:alpha_beta_arbitrary}:
 edge, vertex, align and boundary events. 
\iffull See Figure~\ref{fig:subdivision_events_alpha}. \fi
Note that a vertex event occurs at 
$\theta=\pi-\gamma$, where $\gamma$ is the inclination of an edge of $P$. 
\iffull
\subsubsection{The number of edge and vertex events}
\fi
We count all $\alpha$-triangles satisfying a contact set 
of a vertex at which an event occurs.
\iffull
\begin{figure}[ht]
  \begin{center}
    \includegraphics[width=\textwidth]{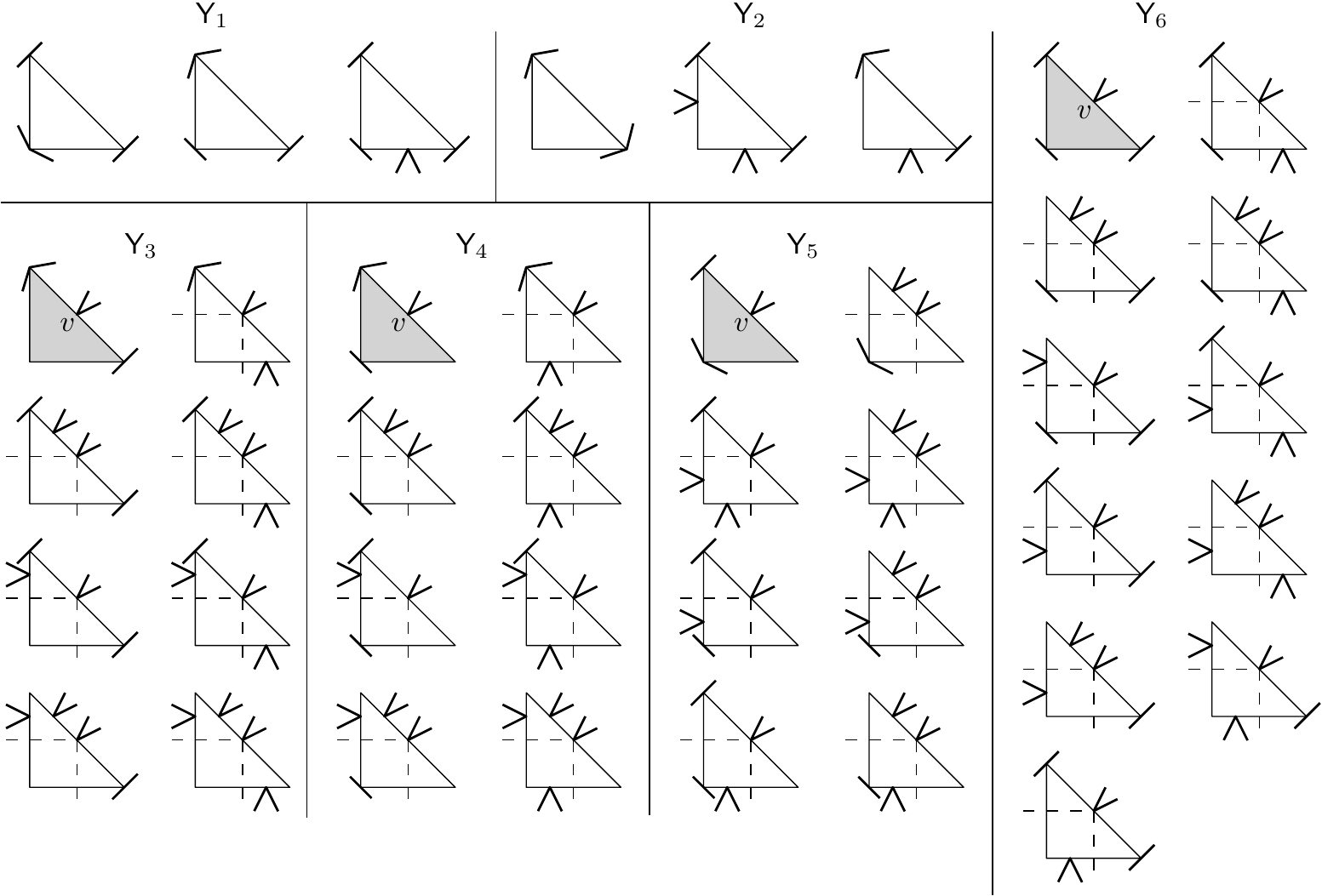}
  \end{center}
  \caption{Classification of contact sets. Symmetric cases are omitted.}
  \label{fig:4-contact_classification}
\end{figure}
\fi
\iffull
Figure~\ref{fig:4-contact_classification} 
shows a classification of all contact sets of vertices of $\sdt{\theta}$ at which 
an event (except vertex events) occurs. 
No contact set of types $\mathsf{Y_1}$ or $\mathsf{Y_2}$ 
contains a diagonal contact, while
contact sets of other types contain a diagonal contact.
\fi
\iffull
We first show that the number of $\alpha$-triangles satisfying contacts sets of
types $\mathsf{Y_1}$ and $\mathsf{Y_2}$ is $O(n^2)$.
Then we show that the number of $\alpha$-triangles satisfying contact sets of 
other types is also $O(n^2)$.
\begin{lemma}
  \label{lem:typeY12}
  There are $O(n^2)$ axis-aligned $\alpha$-triangles satisfying a contact set of 
  type $\mathsf{Y_1}$ or $\mathsf{Y_2}$.
\end{lemma}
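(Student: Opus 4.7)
The plan is to bound the number of axis-aligned $\alpha$-triangles with a contact set of type $\mathsf{Y_1}$ or $\mathsf{Y_2}$ by partitioning them according to a distinguished vertex $u$ of $P$ that appears in the contact set as a side contact, showing that for each fixed $u$ there are $O(n)$ such triangles, and summing over the $n$ vertices of $P$ to obtain $O(n^2)$.

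A vertex of $\sdt{\theta}$ appearing at an event has at least four contact elements, and by assumption neither type $\mathsf{Y_1}$ nor type $\mathsf{Y_2}$ has a side contact on the diagonal. Hence every contact lies on the base, on the left side (the side emanating from the anchor at angle $\alpha$), or at a corner of the triangle. The contact set therefore either contains at least one polygon vertex $u$ as a side contact on the base or on the left side, or consists only of corner contacts. In the latter case, each of the three corners of the triangle lies on a polygon edge; since there are $O(n^2)$ pairs of polygon edges and the remaining degrees of freedom pin down $O(1)$ triangles per such pair, this case contributes only $O(n^2)$ triangles directly.

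For a fixed vertex $u$ of $P$ acting as a distinguished side contact, I would consider the family of axis-aligned $\alpha$-triangles having $u$ on the base or on the left side. This family is parametrized by $\theta$ together with one extra parameter, and any additional contact element $x$ (either a polygon vertex on the opposite non-diagonal side, or a polygon edge supporting a corner) fixes the size and position of the inscribed triangle, giving an area function $F_{u,x}(\theta)$ of a single variable. A triangle of type $\mathsf{Y_1}$ or $\mathsf{Y_2}$ containing $u$ in its contact set corresponds to a $\theta$ at which two such functions $F_{u,x}$ and $F_{u,y}$ coincide on the lower envelope. Following the argument of Lemma~\ref{lem:triangle_num}, I would verify that each $F_{u,x}$ has a constant-length domain in $\theta$, that the union of domains over all $x$ covers a range of $O(1)$ length, and that any two such functions intersect $O(1)$ times. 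Lemma~\ref{lem:lower_envelope} then yields a lower-envelope complexity of $O(n)$, so there are only $O(n)$ such triangles per $u$.

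The main obstacle is verifying the $O(1)$ pairwise intersection bound for type $\mathsf{Y_2}$, in which both $x$ and $y$ are polygon vertices acting as side contacts on different non-diagonal sides. Here the triangle is determined implicitly by the requirement that $u$, $x$, and $y$ all lie on the corresponding sides of an axis-aligned $\alpha$-triangle inscribed in $P$, and the area as a function of $\theta$ involves ratios of sines coming from the law of sines. The task reduces to showing that the corresponding algebraic equation in $\sin\theta,\cos\theta$ for a fixed triple $(u,x,y)$ admits only $O(1)$ solutions, which should follow from a direct degree bound on the resulting trigonometric polynomial. Once this pairwise bound is in place, the envelope analysis carries over verbatim from the $(\alpha,\beta)$-triangle case, and summing over $u$ completes the proof.
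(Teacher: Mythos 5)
Your approach is considerably more elaborate than what is needed here, and the extra machinery is exactly where the gaps sit. The paper's own proof is a direct rigidity count: every contact set of type $\mathsf{Y_1}$ contains a vertex of $P$ that by itself determines the triangle (at most one type-$\mathsf{Y_1}$ triangle per polygon vertex, hence $O(n)$ of them), and every contact set of type $\mathsf{Y_2}$ contains two vertices of $P$ that together determine the triangle (at most one per pair of vertices, hence $O(n^2)$). No parametrization by $\theta$, no envelope, no intersection counting. Your plan of fixing a distinguished side-contact vertex $u$ and running a lower-envelope argument over the remaining contact element is modeled on Lemma~\ref{lem:triangle_num}, but that lemma is needed where the constrained triangles genuinely form one-parameter families; the $\mathsf{Y_1}$/$\mathsf{Y_2}$ configurations of Figure~\ref{fig:4-contact_classification} are rigid enough that the count falls out immediately from the vertices they contain.

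Beyond being the long way around, your argument has three concrete holes. First, you defer exactly the step that carries the content: the $O(1)$ bound on pairwise intersections of the area functions in the $\mathsf{Y_2}$ case is flagged as ``the main obstacle'' and left as something that ``should follow'' from a degree bound --- that is the lemma, not a detail. Second, even granting an $O(1)$ pairwise-intersection bound, Lemma~\ref{lem:lower_envelope} as stated requires that any two functions intersect \emph{at most once}; with several intersections you would need a higher-order Davenport--Schinzel bound, which gives a slightly superlinear envelope and would have to be argued separately, and in any case a triangle realizing a four-element contact set occurs at an intersection of two area functions, not necessarily at a breakpoint of their lower envelope. Third, your disposal of the ``only corner contacts'' case is unjustified: a contact set of four or more corner contacts spread over three corners is not indexed by a pair of edges in any evident way (a corner pinned at a polygon vertex plus one further edge contact still leaves a one-parameter family of axis-aligned $\alpha$-triangles), so the claimed $O(n^2)$ for that subcase needs an actual argument. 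The repair is to drop the envelope framework and verify directly, from the definitions of the two types, that the polygon vertices appearing in such a contact set already pin the triangle down.
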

\begin{proof}
  Any contact set of type $\mathsf{Y_1}$ or $\mathsf{Y_2}$ contains vertices of $P$.
  For a vertex $v$ of $P$, there is at most one axis-aligned $\alpha$-triangle 
  which satisfies a contact set of type $\mathsf{Y_1}$ containing $v$.
  For any two polygon vertices $v_1$  and $v_2$, there is at most one axis-aligned $\alpha$-triangle 
  which satisfies a contact set of type $\mathsf{Y_2}$ containing $v_1$ and $v_2$.
  Thus, there are $O(n^2)$ axis-aligned $\alpha$-triangles satisfying
  a contact set of type $\mathsf{Y_1}$ or $\mathsf{Y_2}$.
\end{proof}

To count all axis-aligned $\alpha$-triangles satisfying a contact set of a type 
$\mathsf{Y}_i$ for $i=3,\ldots,6$,
we consider such triangles whose diagonal contains a reflex vertex of $P$.
For a reflex vertex $v$ of $P$, let
$P'(v)=\big(\vis(v)\cap \visl[\alpha](\ray[\pi](v))\big)\cup
\big(\vis(v)\cap \visl[0](\ray[\pi+\alpha](v))\big)\cup
\big(\visl[\pi+\alpha](\ray[\pi](v))\cap \visl[\pi](\ray[\pi+\alpha](v))\big)$. 
See Figure~\ref{fig:preprocessing}(a).
Then, every axis-aligned $\alpha$-triangle with $v$ on its diagonal is inscribed in $P'(v)$.
The gray triangles and their contacts in Figure~\ref{fig:4-contact_classification} 
show axis-aligned $\alpha$-triangles $T$ satisfying a contact set of a type 
$\mathsf{Y}_i$ for $i=3,\ldots, 6$ 
with their diagonals containing $v$ and their contact sets with respect to $P'(v)$.
\fi
\iffull
\begin{figure}[ht]
  \begin{center}
    \includegraphics[width=\textwidth]{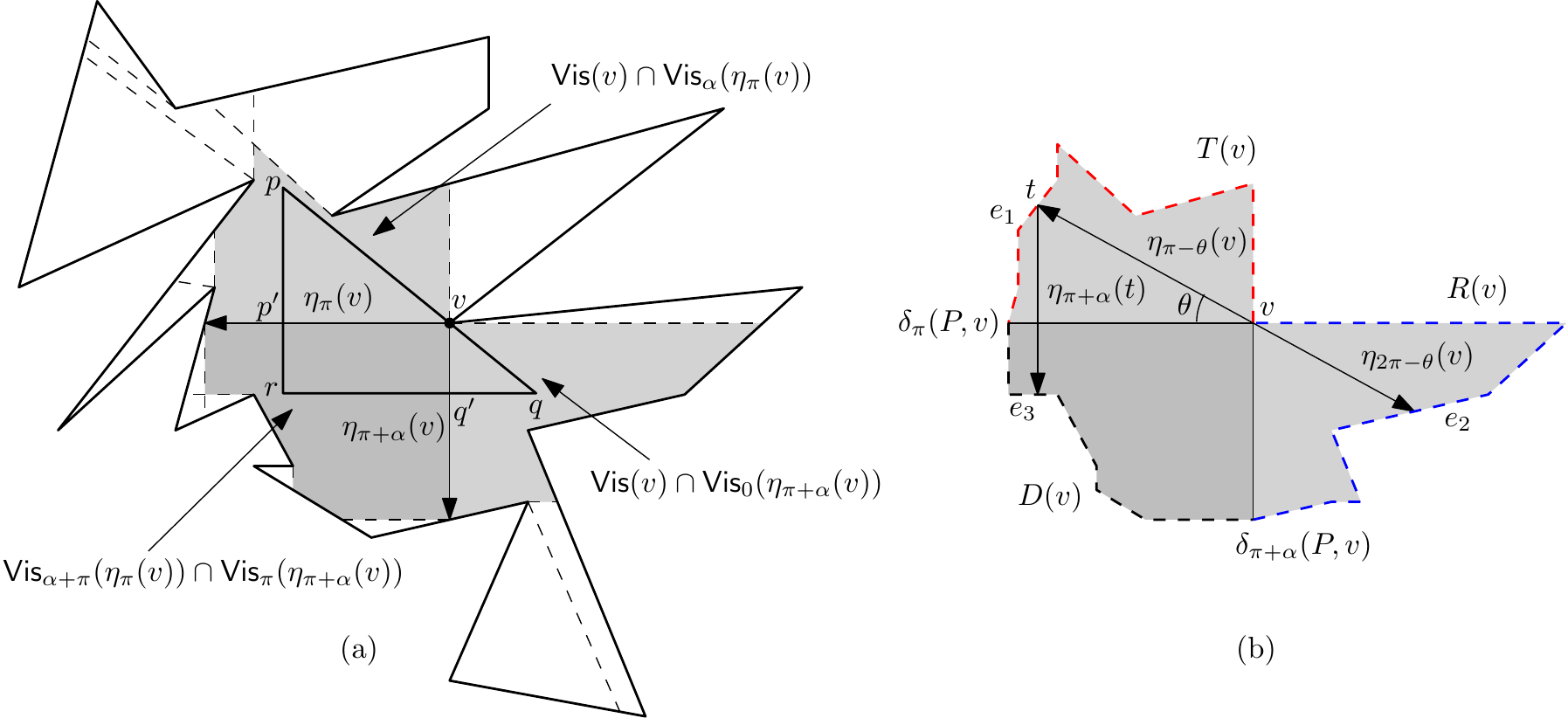}
  \end{center}
  \caption{(a) Preprocessing for computing $\alpha$-triangles satisfying a contact set of type $\mathsf{Y_i}$ for $i=3,\ldots, 6$.
    (b) A quadruplet $(v,e_1,e_2,e_3)$ that can be a contact set of type $\mathsf{Y_6}$. Let $t=\delta_{\pi-\theta}(P'(v),v)$.}
  \label{fig:preprocessing}
\end{figure}
\fi
\iffull
\begin{lemma}
  \label{lem:not_typeY12}
  For a reflex vertex $v$ of $P$, the number of axis-aligned $\alpha$-triangles satisfying
  a contact set of type $\mathsf{Y}_i$ for $i=3,\ldots,6$
  and containing $v$ on their diagonals is $O(n)$.
\end{lemma}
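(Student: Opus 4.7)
The plan is to work inside the subregion $P'(v)$: every axis-aligned $\alpha$-triangle with $v$ on its diagonal is inscribed in $P'(v)$, so it suffices to count such triangles relative to $P'(v)$. Parametrize a triangle by the position of its anchor corner and by the angle $\theta$ at the second base corner; this gives a three-parameter family once axis-alignment is imposed. Each contact element imposes one scalar equation, so each of the types $\mathsf{Y}_3,\ldots,\mathsf{Y}_6$, which prescribes three further contact elements besides $v$, determines a zero-dimensional set of triangles.

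For each type I would fix one designated element of the contact set (an edge or reflex vertex of $P'(v)$) as a \emph{pivot}. With the pivot fixed, the remaining two contact equations, together with the axis-aligned $\alpha$-constraint and the condition that $v$ lies on the diagonal, form a constant-size algebraic system in the remaining parameters and admit $O(1)$ solutions. Since $P'(v)$ has $O(n)$ edges and reflex vertices, each type then contributes $O(n)$ triangles; charging every triangle to a canonical pivot in its contact set (for instance, the lexicographically smallest edge in it) avoids double-counting within a type.

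The main obstacle will be a type in which no single pivot naturally decouples the remaining two contacts from each other — most plausibly $\mathsf{Y}_6$, where the three extra contact elements enter the configuration symmetrically. For such a type I would instead parametrize the family by $\theta\in(0,\pi-\alpha)$, describe the trajectory of each corner of the triangle inside $P'(v)$ as a piecewise algebraic curve of constant degree in $\theta$, and bound the total number of values of $\theta$ at which all three prescribed incidences are realized simultaneously by a Davenport--Schinzel-style lower-envelope argument, analogous to the lower-envelope lemma used earlier in the $(\alpha,\beta)$ case. Summing the four bounds yields the claimed $O(n)$.
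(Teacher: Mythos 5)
Your overall architecture matches the paper's: restrict attention to $P'(v)$, charge each triangle to an element of its contact set, and treat $\mathsf{Y_6}$ separately. But the central counting step does not go through as stated. You claim that once a single pivot element is fixed, the remaining two contact equations form a constant-size system with $O(1)$ solutions. A contact equation is only well-defined once the identity of the contacted edge or vertex is specified; with only the pivot fixed, the other two contact elements still range over the $O(n)$ edges and vertices of $P'(v)$, so a priori $\Theta(n^2)$ distinct triangles of the type could share the same canonical pivot, and your charging scheme does nothing to exclude this. What is actually needed---and what the paper supplies type by type---is a geometric reason why each pivot is charged only $O(1)$ times. For $\mathsf{Y_3}$, $\mathsf{Y_4}$, $\mathsf{Y_5}$ the paper chooses as pivot a corner of the triangle lying on a vertex $r$ of one of the boundary chains of $P'(v)$; together with $v$ on the diagonal this pins the triangle down completely, so there is at most one triangle per vertex $r$ \emph{regardless of what the other contacts turn out to be}. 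Your proposal never identifies which element of each type can play this role, and for a type whose extra contacts are all edge contacts no single pivot can suffice.

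That problematic type is exactly $\mathsf{Y_6}$, with contact set $\{v,e_1,e_2,e_3\}$, and there your fallback is only a sketch. The paper's argument is more elementary than a Davenport--Schinzel bound: as $\theta$ increases, the three rays that determine which edges can appear as $e_1,e_2,e_3$ (two rotating monotonically about $v$, one translating monotonically) each pass over any vertex of $P'(v)$ at most once, so only $O(n)$ candidate quadruplets arise in total; for each quadruplet, feasibility reduces to equating two constant-complexity trigonometric area functions, giving $O(1)$ feasible angles. To salvage your lower-envelope approach you would need to specify the single-variable functions, verify the bounded-intersection hypothesis, and check that the envelope complexity is $O(n)$ rather than $O(n\alpha(n))$ or worse; none of that is in the proposal. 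As a minor point, your degree-of-freedom count omits the scale of the triangle: the family of axis-aligned $\alpha$-triangles is four-dimensional (anchor, scale, and the angle $\theta$), which is consistent with a four-element contact set being rigid but not with the three-parameter family you describe.
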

\begin{proof}
  Let $\tc, \rc$, and $\dc$ be the polygonal chains of $P'(v)$ from
  $\foot[\pi](P,v)$ to $v$, from $v$ to $\foot[\pi+\alpha](P,v)$,
  and from $\foot[\pi+\alpha](P,v)$ to $\foot[\pi](P,v)$ in
  clockwise, respectively. See Figure~\ref{fig:preprocessing}(b).

  For a vertex $r$ of $\tc$, there is at most one axis-aligned $\alpha$-triangle 
  which satisfies a contact set of type $\mathsf{Y_3}$,
  contains $v$ on its diagonal, and contains a corner at $r$.
  Thus, there are $O(n)$ axis-aligned 
  $\alpha$-triangles satisfying a contact set of type $\mathsf{Y_3}$.
  
  By the same reason, there are a constant number of $\alpha$-triangles 
  satisfying a contact set of type $\mathsf{Y_4}$ for each vertex of $\tc$, 
  and a constant number of $\alpha$-triangles 
  satisfying a contact set of type $\mathsf{Y_5}$ for each vertex of $\dc$.

 Now we count the number of quadruplets of elements including $v$ 
 that can form a contact set of type $\mathsf{Y_6}$.
  This quadruplet changes only if $\ray[\pi-\theta](v)$, 
  $\ray[2\pi-\theta](v)$, $\ray[\pi+\alpha](\foot[\pi-\theta](P'(v),v))$ meets 
  another vertex of $P'(v)$.
  Since $\ray[\pi-\theta](v)$ and $\ray[2\pi-\theta](v)$ rotate clockwise around $v$, and
  $\ray[\pi+\alpha](\foot[\pi-\theta](P'(v),v))$ moves rightwards while increasing $\theta$,
  each of the rays meets a vertex of $P'(v)$ at most once.
  See Figure \ref{fig:preprocessing}(b).
  Thus, there are $O(n)$ contact sets of type $\mathsf{Y_6}$ containing $v$ as a diagonal contact. 
  For a contact set $C=\{v,e_1,e_2,e_3\}$ of type $\mathsf{Y_6}$ containing $v$ as a diagonal contact,
  we can find the number of axis-aligned $\alpha$-triangles satisfying $C$ in 
  a way similar to the proof on type $\mathsf{X_2}$ of Lemma~\ref{lem:subdivision_events}. 
  If there is an axis-aligned $(\alpha,\theta_T)$-triangle $T$ satisfying $C$, then there are 
  two axis-aligned $(\alpha,\theta)$-triangles $T'(\theta)$ and $T''(\theta)$ 
  such that $T'(\theta)$ satisfies $\{v,e_1,e_2\}$ and $T''(\theta)$ satisfies 
  $\{v,e_2,e_3\}$ and $\area{T}=\area{T'(\theta_T)}=\area{T''(\theta_T)}$.
  Since each area function consists of a constant number of trigonometric 
  functions with period $2\pi$, there are $O(1)$ distinct angles at which 
  $\{v,e_1,e_2,e_3\}$ is feasible. 
  
  Therefore, the number of axis-aligned $\alpha$-triangles satisfying 
  a contact set of each type in Figure~\ref{fig:4-contact_classification}, 
  except types $\mathsf{Y_1}$ and $\mathsf{Y_2}$,
  and containing $v$ on their diagonals is $O(n)$ for each $v$.
\end{proof}
\fi
\iffull
By Lemmas~\ref{lem:typeY12} and \ref{lem:not_typeY12}, we can conclude
with the following lemma.
\fi
\begin{lemma}
  \label{lem:alpha_events}
  There are $O(n^2)$ events, except vertex events, that occur to $\sdt{\theta}$ 
  while increasing $\theta$ from $\theta_0$ to $\pi-\alpha$.
\end{lemma}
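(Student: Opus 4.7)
The plan is to argue that every event of $\sdt{\theta}$ other than a vertex event corresponds to a vertex $w$ of $\sdt{\theta}$ whose contact set $\seg{w}$ belongs to one of the six types $\mathsf{Y_1},\ldots,\mathsf{Y_6}$ shown in Figure~\ref{fig:4-contact_classification}. Granting this classification, it suffices to bound, across the six types, the total number of axis-aligned $\alpha$-triangles $T(w)$ whose contact set belongs to one of them.

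Lemma~\ref{lem:typeY12} already delivers the $O(n^2)$ bound for types $\mathsf{Y_1}$ and $\mathsf{Y_2}$, since such a contact set is pinned down by one or two vertices of $P$ respectively. For types $\mathsf{Y_3},\ldots,\mathsf{Y_6}$, the common structural feature is that the triangle has a diagonal contact, i.e., a reflex vertex $v$ of $P$ lies on the side opposite to the anchor. I would fix such a $v$ and work inside the polygonal region $P'(v)$ that contains every axis-aligned $\alpha$-triangle through $v$; Lemma~\ref{lem:not_typeY12} then yields an $O(n)$ bound on the triangles whose diagonal contains $v$ and whose contact set is of one of these four types. Summing over the $O(n)$ reflex vertices of $P$ contributes $O(n^2)$, and adding this to the bound for $\mathsf{Y_1},\mathsf{Y_2}$ gives the claimed $O(n^2)$.

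The main obstacle I expect is justifying that the six-type classification is truly exhaustive. The supporting argument is the same maximality reasoning used for types $\mathsf{X_1},\ldots,\mathsf{X_5}$ in the proof of Lemma~\ref{lem:subdivision_events}: if some side $s$ of $T(w)$ carries no side contact and neither corner incident to $s$ carries a corner contact, then $T(w)$ can be strictly enlarged, contradicting the fact that $w$ is a vertex of $\sdt{\theta}$. I would also have to check separately that degree-$1$ vertices of $\sdt{\theta}$ lying on the boundary of $P$ only participate in $O(n)$ boundary events over the whole rotation (one per subdivision ray), so they do not inflate the count beyond the above bookkeeping.
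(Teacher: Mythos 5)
Your proposal follows the paper's own argument essentially verbatim: the paper likewise derives the lemma by combining Lemma~\ref{lem:typeY12} (types $\mathsf{Y_1},\mathsf{Y_2}$, pinned by one or two polygon vertices) with Lemma~\ref{lem:not_typeY12} (types $\mathsf{Y_3},\ldots,\mathsf{Y_6}$, $O(n)$ per reflex diagonal-contact vertex $v$ via $P'(v)$), summed over the $O(n)$ reflex vertices. Your added remarks on the exhaustiveness of the classification and on degree-$1$ boundary vertices are reasonable sanity checks but do not change the approach.
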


\iffull \subsubsection{Maintaining the subdivision while increasing \texorpdfstring{$\theta$}{t}} \fi
To capture the combinatorial changes and maintain $\sdt{\theta}$ while increasing $\theta$, 
we maintain the same data structures defined in 
Section~\ref{sec:maintainDS}, but with different 
equations for computing angles at which an event occurs.
By following the same initialization and update steps in Section~\ref{sec:maintainDS},
we can construct and maintain subdivision $\sdt{\theta}$ while increasing $\theta$.
By Lemmas~\ref{lem:construct_subdivision} and~\ref{lem:alpha_events}, we have following lemma.

\begin{lemma}
  We can construct the subdivision $\sdt{\theta}$ of $P$ and maintain it 
  in $O(n^2 \log n)$ time using $O(n)$ space 
  while increasing $\theta$ from $\theta_0$ to $\pi-\alpha$.
  \label{lem:subdivision_maintain_alpha}
\end{lemma}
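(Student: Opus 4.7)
The plan is to mimic the proof strategy of Lemma~\ref{lem:subdivision.construction} with the rotation parameter replaced by the second angle $\theta$ sweeping from $\theta_0$ up to $\pi-\alpha$. First I would invoke Lemma~\ref{lem:construct_subdivision} on the fixed pair of angles $(\alpha,\theta_0)$ to build $\sdt{\theta_0}$ in $O(n\log n)$ time and $O(n)$ space. Then I would initialize the three data structures described in Section~\ref{sec:maintainDS}: the event queue $\que$ (now indexed by values of $\theta$), the planar graph storing the combinatorial structure of $\sdt{\theta}$, and the balanced binary search trees $\tree(e)$ for each edge $e$ of $P$ that hold the degree-$1$ vertices of $\sdt{\theta}$ in order along $e$. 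The initialization again takes $O(n\log n)$ time because $|\sdt{\theta_0}|=O(n)$, and for every vertex of $\sdt{\theta_0}$ we can compute in $O(1)$ the smallest $\theta>\theta_0$ at which an incident edge, align, or boundary event involving that vertex fires, using closed-form expressions in the contact elements attached to that vertex.

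Next I would process the events in order of increasing $\theta$, exactly as in Section~\ref{sec:maintainDS} but with the updated equations that reflect the fact that only the angle $\theta$ varies and $\alpha$ is fixed. Each edge, align, or boundary event alters a constant number of elements of the combinatorial subdivision, inserts a constant number of new events into $\que$, and removes a constant number of stale ones. Handling such an event therefore takes $O(\log n)$ time for the priority queue operations and the relevant $\tree(e)$ updates. By Lemma~\ref{lem:alpha_events} the total number of events of these three kinds is $O(n^2)$, contributing $O(n^2\log n)$ time overall.

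The step I expect to be the main obstacle, and the one that needs the most care, is the handling of vertex events. A vertex event can fire only when $\theta=\pi-\gamma$ for some inclination $\gamma$ of an edge of $P$, so there are only $O(n)$ of them in the interval $[\theta_0,\pi-\alpha]$. At such a $\theta$ an edge incident to a reflex vertex may appear or disappear, which can cause a structurally large rearrangement of $\sdt{\theta}$ that is not amenable to the $O(1)$-local update used for the other event types. I would therefore simply rebuild $\sdt{\theta}$, together with $\que$ and all the trees $\tree(e)$, from scratch at each vertex event, reusing the algorithm from Lemma~\ref{lem:construct_subdivision}. Each rebuild costs $O(n\log n)$ time and $O(n)$ space, and over the $O(n)$ vertex events this totals $O(n^2\log n)$ time.

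Finally I would sum the costs: $O(n\log n)$ for initialization, $O(n^2\log n)$ for edge, align and boundary events, and $O(n^2\log n)$ for the $O(n)$ vertex-event rebuilds, giving $O(n^2\log n)$ time. Because every auxiliary structure (the subdivision, $\que$, and $\bigcup_e\tree(e)$) has complexity $O(n)$ at any moment and is rebuilt rather than accumulated, the space bound remains $O(n)$, which matches the statement of the lemma.
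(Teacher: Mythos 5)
Your proposal matches the paper's argument: the paper likewise builds $\sdt{\theta_0}$ via Lemma~\ref{lem:construct_subdivision}, handles the $O(n^2)$ edge, align, and boundary events from Lemma~\ref{lem:alpha_events} in $O(\log n)$ time each, and reconstructs $\sdt{\theta}$, $\que$, and the trees $\tree(e)$ from scratch at each of the $O(n)$ vertex events for $O(n\log n)$ apiece, giving $O(n^2\log n)$ time and $O(n)$ space. No substantive differences.
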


\iffull \subsubsection{Computing the largest axis-aligned \texorpdfstring{$\alpha$}{a}-triangles} \fi
If an event occurs at a vertex $w$ of the subdivision and an angle $\theta$,
$C(w)$ changes. 
We find all feasible intervals of contact sets while maintaining the subdivision.
We can compute the maximal axis-aligned $\alpha$-triangle satisfying a contact set 
in $O(1)$ time if a feasible interval is given.
Thus, we have the following theorem by Lemma~\ref{lem:subdivision_maintain_alpha}.

\begin{theorem}
  Given a simple polygon $P$ with $n$ vertices in the plane and an angle $\alpha$,
  we can compute a maximum-area axis-aligned $\alpha$-triangle that can be inscribed in $P$ 
  in $O(n^2 \log n)$ time using $O(n)$ space.
\end{theorem}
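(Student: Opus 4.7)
The plan is to mirror the framework developed in Section 4 for axis-aligned $(\alpha,\beta)$-triangles, but now treat the second base angle $\theta$ as a continuous parameter and sweep it from a carefully chosen starting angle $\theta_0$ up to $\pi-\alpha$. First I would apply Theorem~\ref{thm:axis_alpha_beta} to compute in $O(n\log n)$ time the largest axis-aligned $(\alpha,\frac{\pi-\alpha}{2})$-triangle $T$. Using the diameter $d$ of $P$ (computable in $O(n)$ time) together with the elementary upper bound $\frac{d^2\sin\alpha\sin\theta}{2\sin(\alpha+\theta)}$ on the area of any axis-aligned $(\alpha,\theta)$-triangle inscribed in $P$, I obtain a threshold $\theta_0$ such that no $\theta<\theta_0$ can improve on $T$, and this reduces the sweep to the interval $[\theta_0,\pi-\alpha]$.

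Next I would construct $\sdt{\theta_0}$ via the static algorithm of Lemma~\ref{lem:construct_subdivision} and the associated data structures (event queue $\que$, planar graph of the combinatorial structure, and a tree $\tree(e)$ for each polygon edge), as in Section~\ref{sec:maintainDS}. The only change relative to Section~\ref{sec:alpha_beta_arbitrary} is in how the angles of the events (edge, vertex, align, boundary) are computed from $C(w)$: the functional expressions depend on $\theta$ rather than on a global rotation, but each event angle is still determined in $O(1)$ time by the constant-size contact set at the vertex. Handling a non-vertex event costs $O(\log n)$ (updating $\tree(e)$, $\que$, and the combinatorial subdivision), while each of the $O(n)$ vertex events triggers a reconstruction at cost $O(n\log n)$, giving $O(n^2\log n)$ total by Lemma~\ref{lem:alpha_events} and Lemma~\ref{lem:subdivision_maintain_alpha}.

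Finally, to output the largest triangle, I would piggyback on the sweep: whenever a vertex $w$ of the subdivision appears, disappears, or its contact set changes at some angle $\theta_1$, I close out the feasible interval $[\theta_0,\theta_1]$ on which $C(w)$ was stable and optimize the area of the maximal axis-aligned $\alpha$-triangle with contact set $C(w)$ over $\theta\in[\theta_0,\theta_1]$. Because the area can be written as an explicit elementary function of $\theta$ with a constant number of trigonometric terms determined by $C(w)$, this optimization is $O(1)$ per interval. Summed over the $O(n^2)$ events, the total work is dominated by the sweep, i.e.\ $O(n^2\log n)$ time and $O(n)$ space.

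The main obstacle is justifying that the event-count argument of Lemma~\ref{lem:alpha_events} indeed transfers from the rotation setting of Section~\ref{sec:alpha_beta_arbitrary} to the present one-parameter sweep in $\theta$, in particular verifying that the contact-set classification (types $\mathsf{Y_1},\ldots,\mathsf{Y_6}$) still exhausts all maximal configurations when only the upper base angle varies, and that the preprocessing around each reflex vertex $v$ — via $\vis(v)$ and the $\theta$-visibility regions — gives the same $O(n)$ bound per reflex vertex. Once this is in hand, the remaining pieces (data structure maintenance and per-interval optimization) are straightforward adaptations of the already-developed machinery.
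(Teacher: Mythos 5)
Your proposal follows essentially the same route as the paper: compute the largest axis-aligned $(\alpha,\frac{\pi-\alpha}{2})$-triangle to derive the threshold $\theta_0$, build $\sdt{\theta_0}$ with the machinery of Section~\ref{sec:maintainDS}, sweep $\theta$ up to $\pi-\alpha$ handling the $O(n^2)$ edge/align/boundary events and $O(n)$ vertex events, and optimize the $O(1)$-size area function over each feasible interval of a contact set. The obstacle you flag --- that the event-count bound transfers to the one-parameter sweep --- is exactly what the paper discharges via its classification into types $\mathsf{Y_1},\ldots,\mathsf{Y_6}$ and the per-reflex-vertex $O(n)$ bound (Lemmas~\ref{lem:typeY12} and~\ref{lem:not_typeY12}), so your outline matches the paper's argument in both structure and substance.
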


\iffull \subsection{Largest \texorpdfstring{$\alpha$}{a}-triangles of arbitrary orientations} \fi
To find a largest $\alpha$-triangle of arbitrary orientations, we follow the 
approach by Melissaratos et al.~\cite{melissaratos1992shortest} in computing a 
largest triangle with no restrictions in a simple polygon, with some 
modification. 
\iffull Their algorithm considers all triangles but we consider $\alpha$-triangles only.
They divide the cases by the number of corners of the triangle lying on the boundary of $P$.
They denote by $m$-case the case that $m$ corners of a triangle lie on the boundary of $P$,
for $m=0,1,2,3$.   

Consider a contact set $C$ consisting of at most three elements.
Then any $\alpha$-triangle 
satisfying $C$ can be enlarged into another $\alpha$-triangle 
while satisfying $C$. So, the contact set of a largest $\alpha$-triangle 
consists of at least four elements. 
Also, for triangles of the 0-case, their contact sets consists of five elements 
and the side opposite to the fixed angle corner contains two side contacts.
See Figure~\ref{fig:a_cases}(i). 
This can be proved in a way similar to Lemma 
6.3 in \cite{melissaratos1992shortest}. 

Figure~\ref{fig:a_cases} illustrates the classification
of contact sets of the largest $\alpha$-triangles for each case.
\begin{figure}[ht]
    \begin{center}
      \includegraphics[width=\textwidth]{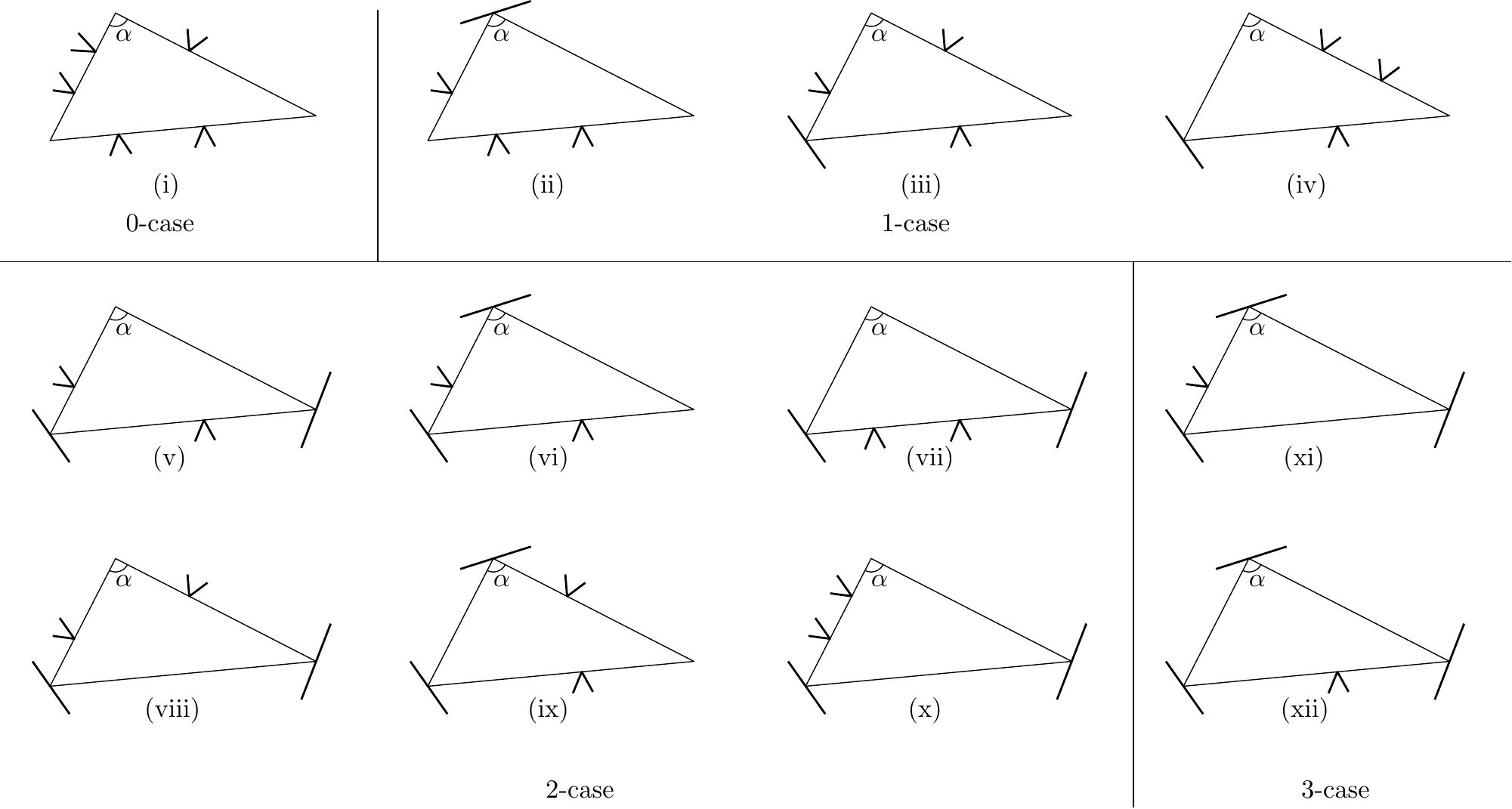}
    \end{center}
    \caption{Classification of the largest $\alpha$-triangles}
    \label{fig:a_cases}
  \end{figure}

\begin{figure}[ht]
  \begin{center}
    \includegraphics[width=0.7\textwidth]{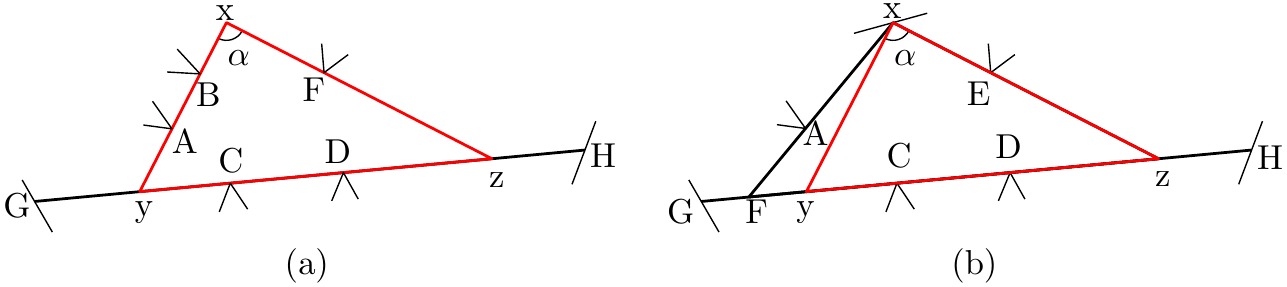}
  \end{center}
  \caption{(a) Configuration (i) in Figure~\ref{fig:a_cases}. (b) Configuration (ii) in Figure~\ref{fig:a_cases}.}
  \label{fig:0_case}
\end{figure}

For cases (i) and (ii) of Figure~\ref{fig:a_cases}, we fix two reflex vertices $C$ and $D$ 
lying on the same side and find the points $G,H$ where
the line containing $CD$ intersects the boundary of $P$
with $GH\subset P$. See Figure~\ref{fig:0_case}(a) and 
(b). Then by walking on the shortest-path maps of $G$ and $H$ along the 
boundary of $P$ in a way similar to \cite{melissaratos1992shortest}, 
we can compute a largest $\alpha$-triangle for each case using $O(n^4)$ time.

For cases (iv), (vii), (x) of Figure~\ref{fig:a_cases}, we find the largest $\alpha$-triangle 
satisfying a contact set $C$ without any restriction on the boundary of $P$. If the 
largest $\alpha$-triangle satisfying $C$ is not inscribed in $P$, 
then the maximum area is achieved at the boundary angles of the feasible intervals 
of $C$, which are handled for other cases.

A contact set $C$ of the remaining cases contains at most one side contact for each side of 
the $\alpha$-triangle satisfying $C$.
For each of these cases, we find a largest $\alpha$-triangle that can be inscribed in $P$
using the method in~\cite{melissaratos1992shortest}.
We decompose the problem into $O(n^4)$ simple optimization problems 
and add a constraint such that one interior angle of triangles must be $\alpha$
to each optimization problem.
Since the original optimization problem can be solved in constant time, our problem can also 
be solved in $O(1)$ time.
\else
We consider every case of contact sets and find the largest $\alpha$-triangle satisfying it
that can be inscribed in $P$. Details can be found in Appendix.
\fi
\begin{theorem}
  Given a simple polygon $P$ with $n$ vertices in the plane and an angle $\alpha$,
  we can compute a maximum-area $\alpha$-triangle inscribed in $P$ 
  in $O(n^4)$ time using $O(n)$ space.
\end{theorem}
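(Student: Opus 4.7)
My plan is to adapt the framework of Melissaratos et al.\ for the unconstrained largest-triangle problem in a simple polygon by carefully incorporating the extra constraint that one interior angle is fixed to $\alpha$. As a first step, I would argue that any optimal $\alpha$-triangle $T$ has a contact set of size at least four: since the family of $\alpha$-triangles has four degrees of freedom (position of the base endpoint, orientation, and scale), any $T$ touching the boundary at three or fewer places can be perturbed and enlarged while preserving the $\alpha$-constraint and staying inside $P$. I would then classify optimal $\alpha$-triangles by how many of their corners lie on the boundary of $P$, giving the cases $m=0,1,2,3$, and within each case enumerate the combinatorially possible configurations of contact sets, paying special attention to the fixed-angle corner (which is either free or pinned at a polygon vertex or on a polygon edge).

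For the 0-case, if the fixed-angle corner has no corner contact, then the side opposite to it must carry at least two reflex-vertex contacts; for each of the $O(n^2)$ ordered pairs $(C,D)$ of reflex vertices that could share this side, I would extend the line through $CD$ until it hits the polygon boundary at two points $G$ and $H$ with $GH\subseteq P$, and then, using the shortest-path maps rooted at $G$ and $H$, sweep through the $O(n^2)$ combinatorial positions of the remaining fixed-angle corner as it traces the locus of points seeing $GH$ at angle $\alpha$; this matches the structure of the Melissaratos et al.\ walk and yields $O(n^4)$ configurations evaluated in $O(1)$ each. The cases with at least one corner on the boundary are handled in a similar spirit: for each of the $O(n^2)$ or $O(n^3)$ tuples of boundary elements fixing two or three contacts, I reduce to a constant-size optimization in which the unknown is either the position of the remaining corner along an arc (the locus determined by the $\alpha$-equation) or the free parameters of the triangle inscribed in a chain, and I solve each instance in $O(1)$ time. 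The configurations for which the optimum of the contact set is not actually inscribed in $P$ (for instance cases (iv), (vii), (x) in the classification) are disposed of by observing that the true optimum then coincides with a boundary of the feasible interval, which is already enumerated under another case.

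Summing over all cases yields $O(n^4)$ time. For space, I process each $O(n^2)$-sized family of bases one at a time and reuse the linear-space shortest-path-map data structure computed once per source, so the working space stays $O(n)$.

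The main obstacle I anticipate is the adaptation of the shortest-path-map walk to the fixed-angle setting: in the unconstrained problem, once two contacts are fixed the third corner moves freely along a polygon chain and the optimum is a straightforward distance maximization, whereas here the third corner is constrained to an arc (the locus of points viewing the base at angle $\alpha$), so the per-cell optimization becomes an intersection of this arc with a shortest-path-map cell. I would need to verify that this intersection has constant combinatorial complexity in each cell and that the arc can be traced across the map in amortized $O(1)$ per cell, so that the overall $O(n^4)$ time bound of the unrestricted algorithm is preserved; this is where the bulk of the technical work will lie.
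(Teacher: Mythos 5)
Your proposal follows essentially the same route as the paper: both lower-bound the contact set of an optimal $\alpha$-triangle to four elements, classify by the number of corners on the boundary of $P$ as in Melissaratos et al., handle the $0$-case by fixing a pair of reflex vertices on the side opposite the fixed-angle corner and walking the shortest-path maps rooted at the two points $G,H$ where that line exits $P$, dispose of the configurations whose unconstrained optimum is not inscribed by pushing them to the endpoints of feasible intervals, and reduce the remaining cases to $O(n^4)$ constant-time optimization subproblems with the added $\alpha$-constraint. The technical concern you flag about tracing the $\alpha$-locus arc across shortest-path-map cells is a real point the paper leaves implicit, but it does not change the method.
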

\iffull
\else
\newpage
\fi
\bibliography{reference.bib} \bibliographystyle{plainurl}
\iffull
\else
\nolinenumbers
\setboolean{@twoside}{false}
\newpage
\includepdf[pages={-}, offset=75 -45]{largest_triangle_full.pdf}
\fi
\end{document}